\declaretheorem[name=Theorem]{thm}
\declaretheorem[name=Lemma]{lem}
\newtheorem{assumption}{Assumption}
\newcommand{\blind}{0}
\newcommand{\argmax}[1]{\underset{#1}{\operatorname{arg}\,\operatorname{max}}\;}
\newcommand{\argmin}[1]{\underset{#1}{\operatorname{arg}\,\operatorname{min}}\;}
\renewcommand{\hat}{\widehat}
\begin{document}

\def\spacingset#1{\renewcommand{\baselinestretch}%
{#1}\small\normalsize} \spacingset{1}


\if0\blind
{
  \title{\bf Universal inference meets random projections: a scalable test for log-concavity}
  \author{Robin Dunn \hspace{.2cm}\\
    Advanced Methodology and Data Science, \\Novartis Pharmaceuticals Corporation \\
    \\
    Aditya Gangrade \\
    Electrical Engineering and Computer Science, University of Michigan,\\
    Department of Electrical \& Computer Engineering,
    Boston University \\
    \\ 
    Larry Wasserman \\
    Department of Statistics \& Data Science and \\ Machine Learning Department, Carnegie Mellon University \\
    \\ 
    Aaditya Ramdas \\
    Department of Statistics \& Data Science and \\ Machine Learning Department, Carnegie Mellon University \\}
  \maketitle
} \fi

\if1\blind
{
  \bigskip
  \bigskip
  \bigskip
  \begin{center}
    {\LARGE\bf Universal inference meets random projections: a scalable test for log-concavity}
\end{center}
  \medskip
} \fi

\bigskip
\begin{abstract}
Shape constraints yield flexible middle grounds between fully nonparametric and fully parametric approaches to modeling distributions of data. The specific assumption of log-concavity is motivated by applications across economics, survival modeling, and reliability theory. However, there do not currently exist valid tests for whether the underlying density of given data is log-concave. The recent universal inference methodology provides a valid test. The universal test relies on maximum likelihood estimation (MLE), and efficient methods already exist for finding the log-concave MLE. This yields the first test of log-concavity that is provably valid in finite samples in any dimension, for which we also establish asymptotic consistency results. Empirically, we find that a random projections approach that converts the $d$-dimensional testing problem into many one-dimensional problems can yield high power, leading to a simple procedure that is statistically and computationally efficient.
\end{abstract}

\noindent%
{\it Keywords:}  density estimation, finite-sample validity, hypothesis testing, shape constraints
\vfill

\newpage
\spacingset{1.9} 

\section{Introduction}

Statisticians frequently use density estimation to understand the underlying structure of their data. To perform nonparametric density estimation on a sample, it is common for researchers to incorporate shape constraints \citep{koenker2018shape, carroll2011testing}. Log-concavity is one popular choice of shape constraint; a density $f$ is called log-concave if it has the form $f = e^g$ for some concave function $g$. This class of densities encompasses many common families, such as the normal, uniform (over a compact domain), exponential, logistic, and extreme value densities \citep[][Table 1]{bagnoli2005}. Furthermore, specifying that the density is log-concave poses a middle ground between fully nonparametric density estimation and use of a parametric density family. As noted in \cite{cule2010}, log-concave density estimation does not require the choice of a bandwidth, whereas kernel density estimation in $d$ dimensions requires a $d\times d$ bandwidth matrix.  

Log-concave densities have multiple appealing properties; \cite{an1997log} describes several. For example, log-concave densities are unimodal, they have at most exponentially decaying tails (i.e., $f(x) = O(\exp(-c\|x\|))$ for some $c>0$), and all moments of the density exist. Log-concave densities are also closed under convolution, meaning that if $X$ and $Y$ are independent random variables from log-concave densities, then the density of $X+Y$ is log-concave as well. A unimodal density $f$ is strongly unimodal if the convolution of $f$ with any unimodal density $g$ is unimodal. Proposition~2 of \cite{an1997log} states that a density $f$ is log-concave if and only if $f$ is strongly unimodal. 

In addition, log-concave densities have applications in many domains. \cite{bagnoli2005} describe applications of log-concavity across economics, reliability theory, and survival modeling. (The latter two appear to use similar methods in the different domains of engineering and medicine, respectively.) Suppose a survival density function $f$ is defined on $(a,b)$ and has a survival function (or reliability function) $\bar{F}(x) = \int_x^b f(t) dt$. If $f$ is log-concave, then its survival function is log-concave as well. The failure rate associated with $f$ is $r(x) = f(x) / \bar{F}(x) = -\bar{F}{\:'}(x) / \bar{F}(x)$. Corollary~2 of \cite{bagnoli2005} states that if $f$ is log-concave on $(a,b)$, then the failure rate $r(x)$ is monotone increasing on $(a,b)$. Proposition~12 of \cite{an1997log} states that if a survival function $\bar{F}(x)$ is log-concave, then for any pair of nonnegative numbers $x_1, x_2$, the survival function satisfies $\bar{F}(x_1 + x_2) \leq \bar{F}(x_1) \bar{F}(x_2)$. This property is called the new-is-better-than-used property; it implies that the probability that a new unit will survive for time $x_1$ is greater than or equal to the probability that at time $x_2$, an existing unit will survive an additional time $x_1$.

Given the favorable properties of log-concave densities and their applications across fields, it is important to be able to test the log-concavity assumption. Previous researchers have considered this question as well. \cite{cule2010} develop a permutation test based on simulating from the log-concave MLE and computing the proportion of original and simulated observations in spherical regions. \cite{chen2013smoothed} construct an approach similar to the permutation test, using a test statistic based on covariance matrices.  \cite{hazelton2011assessing} develops a kernel bandwidth test, where the test statistic is the smallest kernel bandwidth that produces a log-concave density. \cite{carroll2011testing} construct a metric for the necessary amount of modification to the weights of a kernel density estimator to satisfy the shape constraint of log-concavity, and they use the bootstrap for calibration. While these approaches exhibit reasonable empirical performance in some settings, none of the aforementioned papers have proofs of validity (or asymptotic validity) for their proposed methods. As one exception, \cite{an1997log} uses asymptotically normal test statistics to test implications of log-concavity (e.g., increasing hazard rate) in the univariate, nonnegative setting. A general valid test for log-concavity has proved elusive.

\begin{figure}[t]
\begin{center}
\includegraphics[scale=.7]{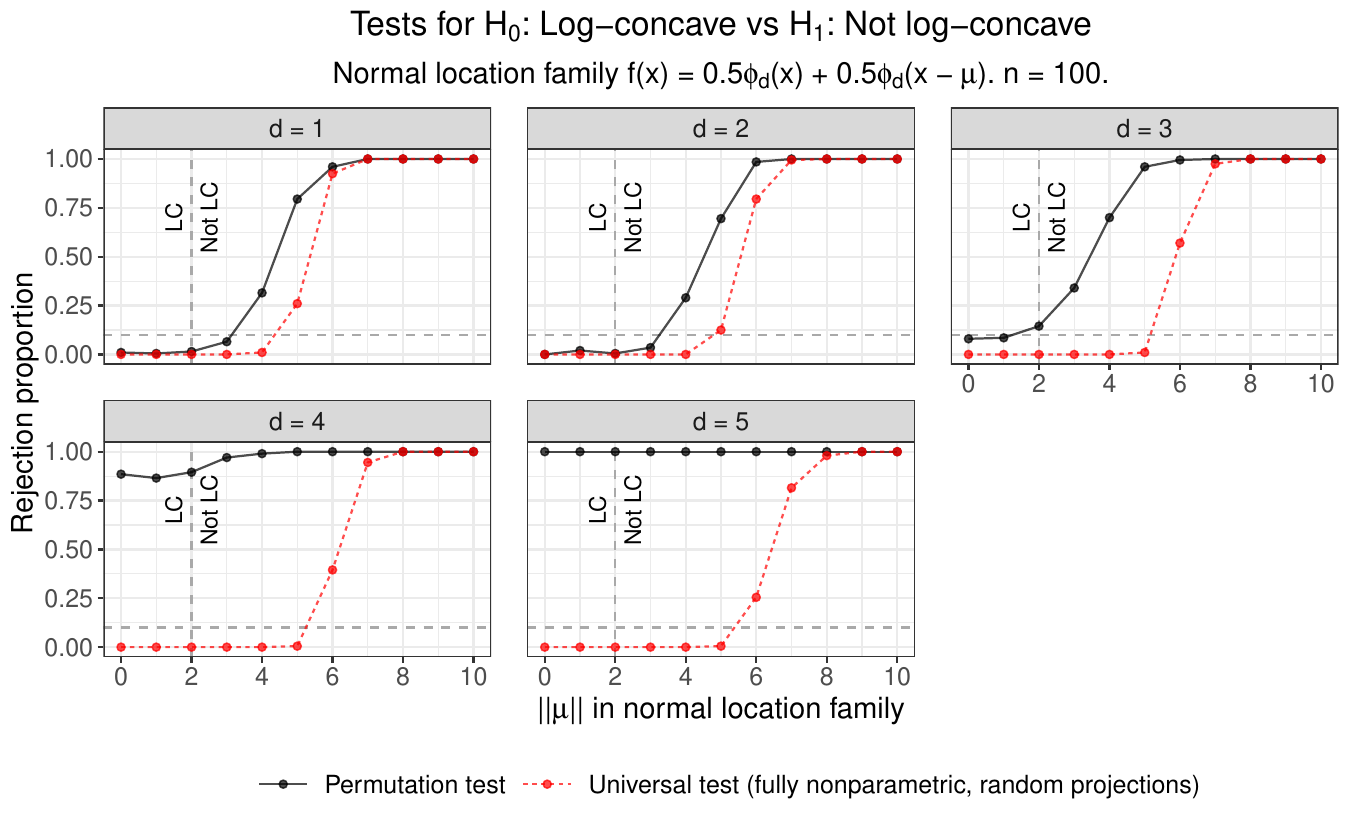}
\end{center}
\caption{Rejection proportions for tests of $H_0: f^*$ is log-concave versus $H_1: f^*$ is not log-concave. The permutation test from \cite{cule2010} is valid or approximately valid for $d\leq 3$, but it is not valid for $d\geq 4$. Our test that combines random projections and universal inference (Algorithm~\ref{alg:randproj}) is provably valid for all $n$ and $d$ while having high power.}
\label{fig:perm_randproj_n100}
\end{figure}

As an example, Figure~\ref{fig:perm_randproj_n100} shows the performance of the permutation test of log-concavity from \cite{cule2010} and a random projection variant of our universal test. Section~\ref{sec:permutation} explains the details of the permutation test, and Algorithm~\ref{alg:randproj} explains this universal test. Section~\ref{sec:example} provides more extensive simulations. If $\phi_d$ represents the $N(0, I_d)$ density and $\gamma\in (0,1)$, then $f(x) = \gamma\phi_d(x) + (1-\gamma)\phi_d(x-\mu)$ is log-concave only when $\|\mu\|\leq 2$. We simulate the permutation test in the $\gamma = 1/2$ setting for $1\leq d\leq 5$ and $n = 100$, testing the null hypothesis that the true underlying density $f^*$ is log-concave. We set $\mu = (\mu_1, 0, \ldots, 0)$, so that $\|\mu\| = |\mu_1|$. We use a significance level of $\alpha = 0.10$. Each point represents the proportion of times we reject $H_0$ over 200 simulations. 
Figure~\ref{fig:perm_randproj_n100} shows that the permutation test is valid at $d = 1$ and $d = 2$ and approximately valid at $d = 3$. Alternatively, at $d = 4$ and $d = 5$, this test rejects $H_0$ at proportions much higher than $\alpha$, even when the underlying density is log-concave ($\|\mu\| \leq 2$). In contrast, the universal test is valid in all dimensions. Furthermore, this universal test has high power for reasonable $\|\mu\|$ even as we increase $d$.

To develop a test for log-concavity with validity guarantees, we consider the universal likelihood ratio test (LRT) introduced in \cite{wasserman2020universal}. This approach provides valid hypothesis tests in any setting in which we can maximize (or upper bound) the null likelihood. Importantly, validity holds in finite samples and without regularity conditions on the class of models. Thus, it holds even in high-dimensional settings without assumptions. 

Suppose $\mathcal{F}_d$ is a (potentially nonparametric) class of densities in $d$ dimensions. The universal LRT allows us to test hypotheses of the form $H_0: f^*\in\mathcal{F}_d$ versus $H_1: f^*\notin\mathcal{F}_d$. In this paper,  $\mathcal{F}_d$ will represent the class of all log-concave densities in $d$ dimensions. 

Assume we have $n$ independent and identically distributed (iid) observations $Y_1, \ldots, Y_n$ with some true density $f^*$. To implement the split universal LRT, we randomly partition the indices from 1 to $n$, denoted as $[n]$, into $\mathcal{D}_0$ and $\mathcal{D}_1$. (Our simulations assume $|\mathcal{D}_0| = |\mathcal{D}_1| = n/2$, but any split proportion is valid.) Using the data indexed by $\mathcal{D}_1$, we fit \emph{any} density $\hat{f}_1$ of our choice, such as a kernel density estimator. The likelihood function evaluated on a density $f$ over the data indexed by $\mathcal{D}_0$ is denoted $\mathcal{L}_0(f) = \prod_{i\in\mathcal{D}_0} f(Y_i)$. Using the data indexed by $\mathcal{D}_0$, we fit $\hat{f}_0 = \argmax{f\in\mathcal{F}_d} \mathcal{L}_0(f)$, which is the null maximum likelihood estimator (MLE). The split LRT statistic is 
\[
T_n(f) = \mathcal{L}_0(\hat{f}_1) / \mathcal{L}_0(f).
\] 
The test rejects if $T_n(\hat f_0) \geq 1/\alpha$.

\begin{restatable}[\cite{wasserman2020universal}]{thm}{thmValidNonpar} \label{thm:valid_nonpar}
$T_n(\hat f_0)$ is an e-value, meaning that it has expectation at most one under the null. Hence, $1/T_n(\hat f_0)$ is a valid p-value, and rejecting the null when $T_n(\hat f_0)\geq 1/\alpha$ is a valid level-$\alpha$ test. That is, under $H_0: f^*\in\mathcal{F}_d$,
\[
\mathbb{P}(T_n(\hat f_0) \geq 1/\alpha) \leq \alpha.
\]
\end{restatable}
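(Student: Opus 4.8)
The plan is to show directly that $\mathbb{E}[T_n(\hat f_0)] \leq 1$ under $H_0$, since the e-value property immediately yields the level-$\alpha$ guarantee by Markov's inequality: $\mathbb{P}(T_n(\hat f_0) \geq 1/\alpha) \leq \alpha \, \mathbb{E}[T_n(\hat f_0)] \leq \alpha$, and $1/T_n(\hat f_0)$ being a valid p-value follows likewise. The crucial structural feature I would exploit is that $\hat f_0$ maximizes $\mathcal{L}_0$ over the null class $\mathcal{F}_d$, whereas under the null the true density $f^*$ itself lies in $\mathcal{F}_d$. Consequently $\mathcal{L}_0(\hat f_0) \geq \mathcal{L}_0(f^*)$, so the random (and awkward-to-analyze) denominator can be replaced by the deterministic one $\mathcal{L}_0(f^*)$, giving the pointwise bound
\[
T_n(\hat f_0) = \frac{\mathcal{L}_0(\hat f_1)}{\mathcal{L}_0(\hat f_0)} \leq \frac{\mathcal{L}_0(\hat f_1)}{\mathcal{L}_0(f^*)}.
\]
It then suffices to bound the expectation of the right-hand side by one.

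To control that expectation I would condition on the data indexed by $\mathcal{D}_1$. Because the sample is iid and the partition into $\mathcal{D}_0$ and $\mathcal{D}_1$ does not depend on the data, the estimate $\hat f_1$ becomes a fixed density once we condition on $\mathcal{D}_1$, and it is independent of the held-out observations $\{Y_i : i \in \mathcal{D}_0\}$. Writing the likelihood ratio as a product over the hold-out points and using this conditional independence,
\[
\mathbb{E}\!\left[\frac{\mathcal{L}_0(\hat f_1)}{\mathcal{L}_0(f^*)} \,\Big|\, \mathcal{D}_1\right] = \prod_{i\in\mathcal{D}_0} \mathbb{E}\!\left[\frac{\hat f_1(Y_i)}{f^*(Y_i)} \,\Big|\, \mathcal{D}_1\right],
\]
and each factor equals $\int \frac{\hat f_1(y)}{f^*(y)} f^*(y)\, dy = \int_{\{f^*>0\}} \hat f_1(y)\, dy \leq 1$, since $\hat f_1$ is a density and integrates to one. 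The tower property then yields $\mathbb{E}[T_n(\hat f_0)] \leq 1$, which is exactly the e-value claim.

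The main subtlety—more a point of care than a genuine obstacle—is the treatment of the ratio $\hat f_1/f^*$ on the set where $f^*$ vanishes. Since $Y_i$ has density $f^*$, it lies in $\{f^* > 0\}$ almost surely, so the integrand is defined $f^*$-almost everywhere and the integral is over $\{f^* > 0\}$; this is precisely why the per-point factor is $\leq 1$ rather than necessarily equal to one, and the inequality propagates harmlessly through the product. I would also emphasize that the argument invokes nothing about log-concavity or the dimension $d$: it uses only that $\mathcal{F}_d$ contains $f^*$ under the null and that $\hat f_0$ is the maximizer of $\mathcal{L}_0$ over $\mathcal{F}_d$. This is what makes the validity finite-sample and free of regularity conditions, and why $\hat f_1$ may be \emph{any} density, fit by any method, without affecting type I error control.
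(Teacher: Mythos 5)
Your proposal is correct and follows essentially the same argument as the paper's proof: replacing the MLE denominator $\mathcal{L}_0(\hat f_0)$ by $\mathcal{L}_0(f^*)$ via the null optimality of $\hat f_0$, conditioning on the $\mathcal{D}_1$ data so each factor becomes $\int_{\{f^*>0\}} \hat f_1 \leq 1$, and finishing with the tower property and Markov's inequality. The only cosmetic difference is ordering (you bound $T_n(\hat f_0) \leq T_n(f^*)$ pointwise first, while the paper bounds $\mathbb{E}[T_n(f^*)]$ first), which changes nothing substantive.
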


\cite{wasserman2020universal} prove Theorem~\ref{thm:valid_nonpar}, but Appendix~\ref{app:validity} contains a proof for completeness. 
It is also possible to invert this test, yielding a confidence set for $f^*$, but for nonparametric classes $\mathcal{F}_d$, these are not in closed form and are hard to compute numerically, so we do not pursue this direction further. 
Nevertheless, as long as we are able to construct $\hat{f}_0$ (or actually simply calculate or upper bound its likelihood), it is possible to perform the nonparametric hypothesis test described in Theorem~\ref{thm:valid_nonpar}. 

Prior to the universal LRT developed by \cite{wasserman2020universal}, there was no hypothesis test for $H_0: f^*$ is log-concave versus $H_1: f^*$ is not log-concave with finite sample validity, or even asymptotic validity. Since it is possible to compute the log-concave MLE on any sample of size $n \geq d+1$, the universal LRT described above provides a valid test as long as $|\mathcal{D}_0|\geq d+1$. The randomization in the splitting above can be entirely removed --- without affecting the validity guarantee --- at the expense of more computation. \cite{wasserman2020universal} show that one can repeatedly compute $T_n(\hat f_0)$ under independent random splits, and average all the test statistics; since each has expectation at most one under the null, so does their average.
It follows that the test based on averaging over multiple
splits still has finite sample validity.

Section~\ref{sec:solving} reviews critical work on the construction and convergence of log-concave MLE densities. Section~\ref{sec:tests} describes the permutation test from \cite{cule2010} and proposes several universal tests for log-concavity. The log-concave MLE does suffer from a curse of dimensionality, both computationally and statistically. Hence, our most important contribution is a scalable method using random projections to reduce the multivariate problem into many univariate testing problems, where the log-concave MLE is easy to compute. (This relies on the fact that if a density is log-concave then every projection is also log-concave.) 
Section~\ref{sec:example} compares these tests through a simulation study. Section~\ref{sec:theoretical} explains a theoretical result about the power of the universal LRT for tests of log-concavity. All proofs and several additional simulations are available in the appendices. Code to reproduce all analyses is available at 
\if0\blind{\url{https://github.com/RobinMDunn/LogConcaveUniv}.}\fi
\if1\blind{[redacted for blind review].}\fi

\section{Finding the Log-concave MLE} \label{sec:solving}

Suppose we observe an iid sample $X_1, \ldots, X_n \in \mathbb{R}^d$ from a $d$-dimensional density $f^*$, where $n \geq d + 1$. Recall that  $\mathcal{F}_d$ is the class of log-concave densities in $d$ dimensions.  The log-concave MLE is  $\hat{f}_n = \argmax{f\in\mathcal{F}_d}  \sum_{i=1}^n \log\{f(X_i)\}$. Theorem~1 of \cite{cule2010} states that with probability 1, $\hat{f}_n$ exists and is unique. Importantly, this does not require $f^* \in \mathcal{F}_d$.

The construction of $\hat{f}_n$ relies on the concept of a tent function $\bar{h}_y: \mathbb{R}^d \to \mathbb{R}$. For a given vector $y = (y_1, \ldots, y_n) \in \mathbb{R}^n$ and given the sample $X_1, \ldots, X_n$,  the tent function $\bar{h}_y$ is the smallest concave function that satisfies $\bar{h}_y(X_i) \geq y_i$ for $i = 1,\ldots, n$. Let $C_n$ be the convex hull of the observations $X_1, \ldots, X_n$. Consider the objective function $$\sigma(y_1, \ldots, y_n) = -\frac{1}{n} \sum_{i=1}^n y_i + \int_{C_n} \exp\{\bar{h}_y(x)\} dx.$$  Theorem~2 of \cite{cule2010} states that $\sigma$ is a convex function, and it has a unique minimum at the value $y^*\in \mathbb{R}^n$ that satisfies $\log(\hat{f}_n) = \bar{h}_{y^*}$. 

Thus, to find the tent function that defines the log-concave MLE, we need to minimize $\sigma$ over $y\in\mathbb{R}^n$. $\sigma$ is not differentiable, but Shor's algorithm \citep{shor2012minimization} uses a subgradient method to optimize convex, non-differentiable functions. This method is guaranteed to converge, but convergence can be slow. Shor's $r$-algorithm involves some computational speed-ups over Shor's algorithm, and \cite{cule2010} use this algorithm in their implementation. Shor's $r$-algorithm is not guaranteed to converge, but \cite{cule2010} state that they agree with \cite{kappel2000implementation} that the algorithm is ``robust, efficient, and accurate.'' The \texttt{LogConcDEAD} package for log-concave density estimation in arbitrary dimensions implements this method \citep{cule2009logconcdead}.

Alternatively, the \texttt{logcondens} package implements an active set approach to solve for the log-concave MLE in one dimension \citep{logcondens}. This algorithm is based on solving for a vector that satisfies a set of active constraints and then using the tent function structure to compute the log-concave density associated with that vector. See Section~3.2 of \cite{dumbgen2007active} for more details.

Figure~\ref{fig:logconc_densities_n5000_d1} shows the true $f^*$ and log-concave MLE ($\hat{f}_n$) densities of several samples from two-component Gaussian mixtures. The underlying density is $f^*(x) = 0.5\phi_d(x) + 0.5\phi_d(x-\mu).$ Again, this density is log-concave if and only if $\|\mu\| \leq 2$. (We develop this example further in Section~\ref{sec:example}.) In the $n=5000$ and $d=1$ setting, we simulate samples $X_1, \ldots, X_n \sim f^*$ and compute the log-concave MLE $\hat{f}_n$ on each random sample. These simulations use both the \texttt{LogConcDEAD} and \texttt{logcondens} packages to fit $\hat{f}_n$. \texttt{logcondens} only works in one dimension but is much faster than \texttt{LogConcDEAD}. The two packages produce densities with similar appearances. Furthermore, we include values of $n^{-1} \sum_{i=1}^n \log(f^*(x_i))$ on the true density plots and $n^{-1} \sum_{i=1}^n \log(\hat{f}_n(x_i))$ on the log-concave MLE plots. The log likelihood is approximately the same for the two density estimation methods. 

In the first two rows of Figure~\ref{fig:logconc_densities_n5000_d1}, the true density is log-concave and in this case, we see that $n^{-1} \sum_{i=1}^n \log(\hat{f}_n(x_i))$ is approximately equal to $n^{-1} \sum_{i=1}^n \log(f^*(x_i))$. When $\|\mu\| = 4$, the underlying density is not log-concave. The log-concave MLE at $\|\mu\| = 4$ and  $n=5000$ seems to have normal tails, but it is nearly uniform in the middle.

Appendix~\ref{app:addl_visualize} contains additional plots of the true densities and log-concave MLE densities when $n=50$ and $d=1$, $n=50$ and $d=2$, and $n=500$ and $d=2$. In the smaller sample $d=1$ setting, we still observe agreement between \texttt{LogConcDEAD} and \texttt{logcondens}. When $d=2$ and the true density is log-concave, the log-concave MLE is closer to the true density at larger $n$. Alternatively, when $d=2$ and the true density is not log-concave, the log-concave MLE density again appears to be uniform in the center.

\begin{figure}[t]
\begin{center}
\includegraphics[scale=.6]{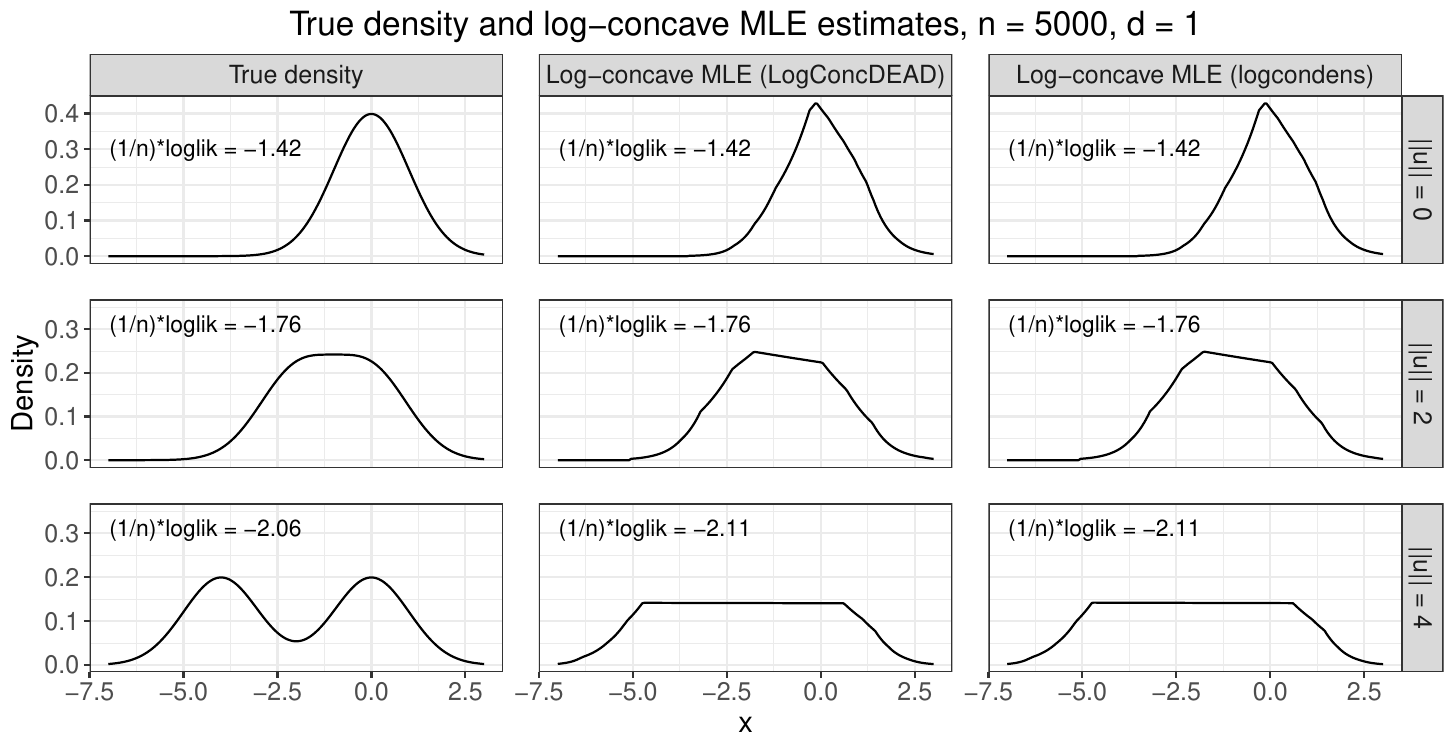}
\end{center}
\caption{Densities from fitting log-concave MLE on $n = 5000$ observations. The true density is the Normal mixture $f^*(x) = 0.5\phi_1(x) + 0.5\phi_1(x-\mu)$. In all settings, the \texttt{LogConcDEAD} and \texttt{logcondens} packages return similar results. In the $\|\mu\| = 0$ and $\|\mu\| = 2$ log-concave settings, the log-concave MLE is close to the true density. In the $\|\mu\| = 4$ non-log-concave setting, the log-concave densities appear to have normal tails and uniform centers.}
\label{fig:logconc_densities_n5000_d1}
\end{figure}

\cite{cule2010} formalize the convergence of $\hat{f}_n$. Let $D_\text{KL}(g \| f)$ be the Kullback-Leibler (KL) divergence of $g$ from $f$. Define $f^\text{LC} = \argmin{f\in\mathcal{F}_d} D_\text{KL}(f^* \| f)$ as the log-concave projection of $f^*$ onto the set of all log-concave densities $\mathcal{F}_d$ \citep{barber2021local, samworth2018recent}. In the simplest case, if $f^*\in\mathcal{F}_d$, then $f^\text{LC} = f^*$. Regardless of whether $f^*\in\mathcal{F}_d$, suppose $f^*$ satisfies the following conditions: $\int_{\mathbb{R}^d} \|x\| f^*(x) dx < \infty$, $\int_{\mathbb{R}^d} f^* \log_+(f^*) < \infty$ (where $\log_+(x) = \max\{\log(x), 0\}$), and the support of $f^*$ contains an open set. By Lemma~1 of \cite{cule2010theoretical}, there exists some $a_0 > 0$ and $b_0 \in\mathbb{R}$ such that $f^\text{LC}(x) \leq \exp(-a_0 \|x\| + b_0)$ for any $x\in\mathbb{R}^d$. Theorem~3 of \cite{cule2010} states that for any $a<a_0$, $$\int_{\mathbb{R}^d} \exp(a\|x\|) |\hat{f}_n(x) - f^\text{LC}(x)| dx \to 0 \qquad \text{almost surely.}$$ This means that the integrated difference between $\hat{f}_n$ and $f^\text{LC}$ converges to 0 even when we multiply the tails by some exponential weight. Furthermore, Theorem~3 of \cite{cule2010} states that if $f^\text{LC}$ is continuous, then $$\sup_{x\in\mathbb{R}^d} \left\{ \exp(a\|x\|) |\hat{f}_n(x) - f^\text{LC}(x)|\right\} \to 0 \qquad \text{almost surely.}$$ 

In the case where $f^*\in\mathcal{F}_d$, it is possible to describe rates of convergence of the log-concave MLE in terms of the Hellinger distance. The squared Hellinger distance is $$h^2(f, g) = \int_{\mathbb{R}^d} (f^{1/2} - g^{1/2})^2.$$ As stated in \cite{chen2021new} and shown in \cite{kim2016global} and \cite{kur2019optimality}, the rate of convergence of $\hat{f}_n$ to $f^*$ in sqaured Hellinger distance is  
\[\sup_{f^* \in \mathcal{F}_d} \mathbb{E}[h^2(\hat{f}_n - f^*)] \leq K_d \cdot \begin{cases} 
      n^{-4/5} & d = 1 \\
      n^{-2/(d+1)}\log(n) & d \geq 2
   \end{cases},
\]
where $K_d > 0$ depends only on $d$. 

\section{Tests for Log-concavity} \label{sec:tests}

We first describe a permutation test as developed in \cite{cule2010}, and then we propose several universal inference tests. The latter are guaranteed to control the type I error at level $\alpha$ (theoretically and empirically), while the former is not always valid even in simulations, as already demonstrated in Figure~\ref{fig:perm_randproj_n100}.

\subsection{Permutation Test~\citep{cule2010}} \label{sec:permutation}

\cite{cule2010} describe a permutation test of the hypothesis $H_0: f^*\in\mathcal{F}_d$ versus $H_1: f^*\notin\mathcal{F}_d$.
First, this test fits the log-concave MLE $\hat{f}_n$ on $\mathcal{Y} = \{Y_1, \ldots, Y_n\}$. Then it draws another sample $\mathcal{Y}^* = \{Y_1^*, \ldots, Y_n^*\}$ from $\hat{f}_n$. Next, it computes a test statistic based on the empirical distributions of $\mathcal{Y}$ and $\mathcal{Y}^*$. As the permutation step, the procedure repeatedly ``shuffles the stars'' to permute the observations in $\mathcal{Y} \cup \mathcal{Y}^*$ into two sets of size $n$, and it re-computes the test statistic on each permuted sample. We reject $H_0$ if the original test statistic exceeds the $1-\alpha$ quantile of the test statistics computed from the permuted samples. We explain the permutation test in more detail in Appendix~\ref{app:perm_test_desc}.

Intuitively, this test assumes that if $H_0$ is true, $\mathcal{Y}$ and $\mathcal{Y}^*$ will be similar. Then the original test statistic will not be particularly large relative to the test statistics computed from the permuted samples. Alternatively, if $H_0$ is false, $\mathcal{Y}$ and $\mathcal{Y}^*$ will be dissimilar, and the converse will hold. This approach is not guaranteed to control the type I error level. Figure~\ref{fig:perm_randproj_n100} shows cases both where the permutation test performs well and where the permutation test's false positive rate is much higher than $\alpha$. 

\subsection{Universal Tests in $d$ Dimensions} \label{sec:ddim}

Alternatively, we can use universal approaches to test for log-concavity. Theorem~\ref{thm:valid_nonpar} justifies the universal approach for testing whether $f^*\in\mathcal{F}_d$.  Recall that the universal LRT provably controls the type I error level in finite samples. To implement the universal test on a single subsample, we partition $[n]$ into $\mathcal{D}_0$ and $\mathcal{D}_1$. Let $\hat{f}_0$ be the maximum likelihood log-concave density estimate fit on $\{Y_i : i\in\mathcal{D}_0\}$. Let $\hat{f}_1$ be any density estimate fit on $\{Y_i : i\in \mathcal{D}_1\}$. The universal test rejects $H_0$ when $$T_n = \prod_{i\in\mathcal{D}_0} \{\hat{f}_1(Y_i) / \hat{f}_0(Y_i)\} \geq 1/\alpha.$$ 

The universal test from Theorem~\ref{thm:valid_nonpar} holds when $T_n$ is replaced with an average of test statistics, each computed over random partitions of $[n]$. Algorithm~\ref{alg:LCsubsampling} explains how to use subsampling to test $H_0: f^*\in\mathcal{F}_d$ versus $H_1: f^*\notin\mathcal{F}_d$. The $j^{th}$ random partition of $[n]$ produces a test statistic $T_{n,j}$. The subsampling approach rejects $H_0$ when $B^{-1} \sum_{j=1}^B T_{n,j} \geq 1/\alpha$. Note that each test statistic $T_{n,j}$ is nonnegative.  In cases where we have sufficient evidence against $H_0$, it may be possible to reject $H_0$ at some iteration $b < B$. That is, for any $b$ such that $1\leq b < B$, $\sum_{j=b}^B T_{n,j} \geq 0$. If there is a value of $b<B$ such that $B^{-1} \sum_{j=1}^b T_{n,j} \geq 1/\alpha$,  then it is guaranteed that $T_n = B^{-1} \sum_{j=1}^B T_{n,j} \geq 1/\alpha$. Algorithms~\ref{alg:LCsubsampling}--\ref{alg:randproj} incorporate this fact by rejecting early if we have sufficient evidence against $H_0$.

\begin{algorithm}[ht] \label{alg:subsampling}
\caption{For $H_0: f^*\in\mathcal{F}_d$ versus $H_1: f^*\notin\mathcal{F}_d$, compute the subsampling test statistic or run the test.}
\hspace*{\algorithmicindent} \textbf{Input:} $n$ iid $d$-dimensional observations $Y_1, \ldots, Y_n$ from unknown density $f^*$,\\
\hspace*{\algorithmicindent} \hskip 12pt number of subsamples $B$, significance level $\alpha$, any density estimation approach. \\
\hspace*{\algorithmicindent} \textbf{Output:} The subsampling test statistic $T_n$ or the test result.
\begin{algorithmic}[1]
\For {$b=1,2,\ldots,B$}
\State Randomly partition $[n]$ into $\mathcal{D}_{0,b}$ and $\mathcal{D}_{1,b}$ such that $|\mathcal{D}_{0,b}| = |\mathcal{D}_{1,b}| = n/2$.
\State Where $\mathcal{L}_{0,b}(f) = \prod_{i\in\mathcal{D}_{0,b}} f(Y_i)$, compute $\hat{f}_{0,b} = \argmax{f\in\mathcal{F}_d} \mathcal{L}_{0,b}(f)$.
\State Fit a density $\hat{f}_{1,b}$ on $\{Y_i: i \in \mathcal{D}_{1,b}\}$, using the input density estimation approach.
\State Compute $T_{n,b} = \mathcal{L}_{0,b}(\hat{f}_{1,b}) / \mathcal{L}_{0,b}(\hat{f}_{0,b})$.
\If{$B^{-1} \sum_{j=1}^b T_{n,j} \geq 1/\alpha$} \textbf{return} rejection of hypothesis. \EndIf
\EndFor
\State \textbf{return} the subsampling test statistic $T_n = B^{-1} \sum_{j=1}^B T_{n,j}$.
\end{algorithmic}
\label{alg:LCsubsampling}
\end{algorithm}

Both \texttt{logcondens} ($d = 1$) and \texttt{LogConcDEAD} ($d \geq 1$) compute the log-concave MLE $\hat{f}_0$. The choice of $\hat{f}_1,$ which can be any density, is flexible, and we explore several options.

\noindent \textbf{Full Oracle} The full oracle approach uses the true density $f^*$ in the numerator, i.e., in Algorithm~\ref{alg:LCsubsampling}, the input density estimation approach is to set $\hat{f}_{1,b} = f^*$. This method is a helpful theoretical comparison, since it avoids the depletion in power that occurs when $\hat{f}_{1,b}$ does not approximate $f^*$ well. We would expect the power of this approach to exceed the power of any approach that estimates a numerator density on $\{Y_i : i \in \mathcal{D}_{1,b}\}$. 

\noindent \textbf{Partial Oracle} The partial oracle approach uses a $d$-dimensional parametric MLE density estimate in the numerator. Suppose we know (or we guess) that the true density is parameterized by some unknown real-valued vector $\theta^*\in\mathbb{R}^p$ such that $f^* = f_{\theta^*}$. In Algorithm~\ref{alg:LCsubsampling}, the input density estimation approach is to set $\hat{f}_{1,b} = f_{\hat{\theta}_{1,b}}$, where $\hat{\theta}_{1,b}$ is the MLE of $\theta$ over $\{Y_i : i \in \mathcal{D}_{1,b}\}$. If the true density is from the parametric family $(f_\theta: \theta\in\mathbb{R}^p)$, we would expect this method to have good power relative to other density estimation methods. 

\noindent \textbf{Fully Nonparametric} The fully nonparametric method uses a $d$-dimensional kernel density estimate (KDE) in the numerator. In Algorithm~\ref{alg:LCsubsampling}, the input density estimation approach is to set $\hat{f}_{1,b}$ to the kernel density estimate computed on $\mathcal{D}_{1,b}$. Kernel density estimation involves the choice of a bandwidth. The \texttt{ks} package \citep{ks} in \texttt{R} can fit multidimensional KDEs and has several bandwidth computation procedures. These options include a plug-in bandwidth \citep{wand1994multivariate, duong2003plug, chacon2010multivariate}, a least squares cross-validated bandwidth \citep{bowman1984alternative, rudemo1982empirical}, and a smoothed cross-validation bandwidth \citep{jones1991simple, duong2005cross}. In the parametric density case, we would expect the fully nonparametric method to have lower power than the full oracle method and the partial oracle method. If we do not want to make assumptions about the true density, this may be a good choice.

\subsection{Universal Tests with Dimension Reduction}

Suppose we write each random variable $Y \in \mathbb{R}^d$ as $Y = (Y^{(1)}, \ldots, Y^{(d)})$. As noted in \cite{an1997log}, if the density of $Y$ is log-concave, then the marginal densities of $Y^{(1)}, \ldots, Y^{(d)}$ are all log-concave. In the converse direction, if marginal densities of $Y^{(1)}, \ldots, Y^{(d)}$ are all log-concave and $Y^{(1)}, \ldots, Y^{(d)}$ are all independent, then the density of $Y$ is log-concave. Proposition 1 of \cite{cule2010} uses a result from \cite{prekopa1973logarithmic} to deduce a more general result. We restate Proposition 1(a) in Theorem~\ref{thm:projection}. 

\begin{restatable}[Proposition 1(a) of \cite{cule2010}]{thm}{thmProjection} \label{thm:projection}
Suppose $Y\in\mathbb{R}^d$ is a random variable from a distribution having density $f^*$ with respect to Lebesgue measure. Let $V$ be a subspace of $\mathbb{R}^d$, and denote the orthogonal projection of $y$ onto $V$ as $P_V(y)$. If $f^*$ is log-concave, then the marginal density of $P_V(Y)$ is log-concave and the conditional density $f^*_{Y\mid P_V(Y)}(\cdot\mid t)$ of $Y$ given $P_V(Y) = t$ is log-concave for each $t$.
\end{restatable}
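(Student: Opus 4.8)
The plan is to reduce the problem to a canonical coordinate setting and then invoke Pr\'ekopa's marginalization theorem. First I would fix an orthonormal basis of $\mathbb{R}^d$ adapted to the splitting $\mathbb{R}^d = V \oplus V^\perp$, writing $k = \dim V$ and identifying each $y \in \mathbb{R}^d$ with a pair $(t,s) \in \mathbb{R}^k \times \mathbb{R}^{d-k}$, where $t$ records the coordinates along $V$ and $s$ those along $V^\perp$. Let $R$ be the orthogonal map implementing this change of coordinates. Since $f^* = e^g$ with $g$ concave and $R$ is linear with $|\det R| = 1$, the density of $RY$ equals $f^*(R^\top \cdot) = \exp(g \circ R^\top)$, and $g \circ R^\top$ is again concave (a concave function composed with a linear map is concave); hence $\tilde f(t,s) := f^*(R^\top(t,s))$ is log-concave on $\mathbb{R}^d$, and in these coordinates $P_V(Y)$ corresponds exactly to the first block $t$.

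Next, for the marginal claim I would write the density of $P_V(Y)$ as $m(t) = \int_{\mathbb{R}^{d-k}} \tilde f(t,s)\, ds$, which is finite for almost every $t$ and integrates to one by Fubini's theorem. The key step is Pr\'ekopa's theorem \citep{prekopa1973logarithmic}: the integral of a log-concave function on $\mathbb{R}^k \times \mathbb{R}^{d-k}$ over one block of variables is a log-concave function of the remaining block. Applying this to $\tilde f$ immediately gives that $m$ is log-concave, establishing the first assertion.

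For the conditional claim I would observe that, on the set $\{t : m(t) > 0\}$, the conditional density of $Y$ given $P_V(Y) = t$ (as a density on the affine slice $t + V^\perp$ with respect to Lebesgue measure there) is $s \mapsto \tilde f(t,s)/m(t)$. For fixed $t$ this is $\tilde f(t,\cdot)$ rescaled by the positive constant $m(t)$, so it suffices to note that $\tilde f(t,\cdot) = \exp(\tilde g(t,\cdot))$ with $\tilde g = g \circ R^\top$, and that the restriction of a concave function to an affine slice (here, fixing the $t$-block) is concave in the remaining variable $s$. Hence the conditional density is log-concave for each such $t$, and transporting back through $R$ recovers the statement for $Y$ itself.

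The main obstacle is not the marginal case --- once the coordinates are chosen, Pr\'ekopa's theorem does all the work --- but rather the measure-theoretic bookkeeping. Specifically I would need to justify that $m(t)$ is finite and positive on the relevant set, that $s \mapsto \tilde f(t,s)/m(t)$ is the correct Radon--Nikodym derivative on the slice, and that normalization by the $s$-constant $m(t)$ does not affect log-concavity. These are routine given Fubini's theorem and the fact that rescaling by a positive constant preserves log-concavity, so the conceptual content lies entirely in the rotation invariance of log-concavity and in Pr\'ekopa's theorem.
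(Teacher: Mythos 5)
Your proposal is correct and follows exactly the route the paper relies on: the paper does not prove this result itself but cites Proposition 1(a) of \cite{cule2010}, which is obtained from Pr\'ekopa's marginalization theorem \citep{prekopa1973logarithmic} in just the way you describe (orthogonal change of coordinates, Pr\'ekopa for the marginal, restriction of the concave exponent to an affine slice for the conditional). No gaps; the measure-theoretic bookkeeping you flag is indeed routine.
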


When considering how to test for log-concavity, \cite{an1997log} notes that univariate tests for log-concavity could be used in the multivariate setting. For our purposes, we use Theorem~\ref{thm:projection}'s implication that if $f^*$ is log-concave, then the one-dimensional projections of $f^*$ are also log-concave. We develop new universal tests on these one-dimensional projections. 
 
To reduce the data to one dimension, we take one of two approaches.

\subsubsection{Dimension Reduction Approach 1: Axis-aligned Projections}

We can represent any $d$-dimensional observation $Y_i$ as $Y_i = (Y_i^{(1)}, Y_i^{(2)}, \ldots, Y_i^{(d)})$. Algorithm~\ref{alg:ddimreduce} describes an approach that computes a test statistic for each of the $d$ dimensions.

\begin{algorithm}[ht]
\caption{For $H_0: f^*\in\mathcal{F}_d$ versus $H_1: f^*\notin\mathcal{F}_d$, compute the axis-aligned projection test statistics or run the test.}
\hspace*{\algorithmicindent} \textbf{Input:} $n$ iid $d$-dimensional observations $Y_1, \ldots, Y_n$ from unknown density $f^*$, \\
\hspace*{\algorithmicindent} \hskip 12pt number of subsamples $B$, significance level $\alpha$. \\
\hspace*{\algorithmicindent} \textbf{Output:} $d$ test statistics $T_n^{(k)}$, $k=1,\ldots,d$, or the test result.
\begin{algorithmic}[1]
\For {$k=1,2,\ldots,d$}
\For {$b=1,2,\ldots,B$}
\State Randomly partition $[n]$ into $\mathcal{D}_{0,b}$ and $\mathcal{D}_{1,b}$ such that $|\mathcal{D}_{0,b}| = |\mathcal{D}_{1,b}| = n/2$.
\State Estimate a one-dimensional density $\hat{f}_{1,b,k}$ on $\{Y_i^{(k)}: i\in\mathcal{D}_{1,b}\}$.
\State Estimate the log-concave MLE $\hat{f}_{0,b,k}$ on $\{Y_i^{(k)}: i \in \mathcal{D}_{0,b}\}$.
\If{$B^{-1} \sum_{j=1}^b \prod_{i \in \mathcal{D}_{0,j}} \{\hat{f}_{1,j,k}(Y_i^{(k)}) / \hat{f}_{0,j,k}(Y_i^{(k)}) \} \geq d/\alpha$} \textbf{stop}, reject null. \EndIf
\EndFor
\State Compute the test statistic $T_{n}^{(k)} = B^{-1} \sum_{b=1}^B \prod_{i \in \mathcal{D}_{0,b}} \{\hat{f}_{1,b,k}(Y_i^{(k)}) / \hat{f}_{0,b,k}(Y_i^{(k)}) \}$
\EndFor
\State \textbf{return} the test statistics $T_n^{(k)}$, $k = 1,\ldots, d$. 
\end{algorithmic}
\label{alg:ddimreduce}
\end{algorithm}

We reject $H_0: f^* \in \mathcal{F}_d$ if at least one of the $d$ test statistics $T_{n}^{(1)}, \ldots, T_n^{(d)}$ exceeds $d/\alpha$. Instead of checking this condition at the very end of the algorithm, we check this along the way, and we stop early to save computation if this condition is satisfied (line 6).
This rejection rule has valid type I error control because under $H_0$, 
$$\mathbb{P}(\{T_{n}^{(1)} \geq d/\alpha\} \: \cup \: \{T_{n}^{(2)} \geq d/\alpha\} \: \cup \: \cdots \: \cup \: \{T_{n}^{(d)} \geq d/\alpha\}) \leq \sum_{k=1}^d \mathbb{P}(T_{n}^{(k)} \geq d/\alpha) \leq d(\alpha / d) = \alpha.$$

If we do not simply want an accept-reject decision but would instead like a real-valued measure of evidence, then we can note that $p_n := d \min_{k \in [d]}1/T_n^{(k)}$ is a valid p-value. Indeed the above equation can be rewritten as the statement $\mathbb{P}(p_n \leq \alpha)\leq \alpha$, meaning that under the null, the distribution of $p_n$ is stochastically larger than uniform.

To run the test, we must fit some one-dimensional density $\hat{f}_{1,b,k}$ on $\{Y_i^{(k)} : i \in \mathcal{D}_{1,b}\}$. We consider two density estimation methods; the same applies to the next subsection. Thus, for the universal LRTs with dimension reduction, we consider four total combinations of two dimension reduction approaches and two density estimation methods.

\noindent \textbf{Density Estimation Method 1: Partial Oracle} This approach uses parametric knowledge about the true density. The numerator $\hat{f}_{1,b,k}$ is the parametric MLE fit on $\mathcal{D}_{1,b}$. 

\noindent \textbf {Density Estimation Method 2: Fully Nonparametric} This approach does not use any prior knowledge about the true density. Instead, we use kernel density estimation (e.g., \texttt{ks} package with plug-in bandwidth) to fit $\hat{f}_{1,b,k}$.

\subsubsection{Dimension Reduction Approach 2: Random Projections}

We can also construct one-dimensional densities by projecting the data onto a vector drawn uniformly from the unit sphere. Algorithm~\ref{alg:randproj} shows how to compute the random projection test statistic $T_n$. As discussed in Section~\ref{sec:ddim}, Theorem~\ref{thm:valid_nonpar} justifies the validity of this approach. In short, each individual projection test statistic $T_{n,j}$ is an e-value, meaning that it has expectation of at most one under the null. Thus the average of $T_{n,j}$ values is also an e-value. Since each $T_{n,j}$ is nonnegative, if there is some $k < n_\text{proj}$ such that $(1/n_\text{proj}) \sum_{j=1}^k T_{n,j} \geq 1/\alpha$, then we can reject $H_0$ without computing all $n_\text{proj}$ test statistics. 

\begin{algorithm}[ht]
\caption{For $H_0: f^*\in\mathcal{F}_d$ versus $H_1: f^*\notin\mathcal{F}_d$, compute the random projection test statistic or run the test.}
\hspace*{\algorithmicindent} \textbf{Input:} $n$ iid $d$-dimensional observations $Y_1, \ldots, Y_n$ from unknown density $f^*$, \\
\hspace*{\algorithmicindent} \hskip 12pt number of subsamples $B$, significance level $\alpha$, number of random projections $n_\text{proj}$. \\
\hspace*{\algorithmicindent} \textbf{Output:} The random projection test statistic $T_n$ or the test result.
\begin{algorithmic}[1]
\For {$k=1,2,\ldots,n_\text{proj}$}
\State Draw a vector $V$ uniformly from the $d$-dimensional unit sphere. To obtain $V$, \par
\hskip 1pt draw $X \sim N(0, I_d)$ and set $V = X / \|X\|$.
\State Project each $Y$ observation onto $V$. The projection of $Y_i$ is $P_V(Y_i) = Y_i^T V$.
\For {$b=1,2,\ldots,B$}
\State Randomly partition $[n]$ into $\mathcal{D}_{0,b}$ and $\mathcal{D}_{1,b}$ such that $|\mathcal{D}_{0,b}| = |\mathcal{D}_{1,b}| = n/2$.
\State Estimate a one-dimensional density $\hat{f}_{1,b,k}$ on $\{P_V(Y_i): i\in\mathcal{D}_{1,b}\}$.
\State Estimate the log-concave MLE $\hat{f}_{0,b,k}$ on $\{P_V(Y_i): i \in\mathcal{D}_{0,b}\}$.
\EndFor
\State Compute the test statistic $T_{n,k} = B^{-1} \sum_{b=1}^B \prod_{i \in \mathcal{D}_{0,b}} \{\hat{f}_{1,b,k}(P_V(Y_i)) / \hat{f}_{0,b,k}(P_V(Y_i)) \}$.
\If{$n_\text{proj}^{-1} \sum_{j=1}^k T_{n,j} \geq 1/\alpha$} \textbf{stop}, reject null. \EndIf
\EndFor
\State \textbf{return} the random projection test statistic $T_n = n_\text{proj}^{-1} \sum_{j=1}^{n_\text{proj}} T_{n,j}$.
\end{algorithmic}
\label{alg:randproj}
\end{algorithm}

We expect random projections (with averaging) to work better when the deviations from log-concavity are ``dense,'' meaning there is a small amount of evidence to be found scattered in different directions. In contrast, the axis-aligned projections (with Bonferroni) presented earlier are expected to work better when there is a strong signal along one or a few dimensions, with most dimensions carrying no evidence (meaning that the density is indeed log-concave along most axes).

\section{Example: Testing Log-concavity of Normal Mixture} \label{sec:example}

We test the permutation approach and the universal approaches on a normal mixture distribution, which is log-concave only at certain parameter values. Naturally, when testing or fitting log-concave distributions in practice, one would eschew all parametric assumptions, so the restriction to normal mixtures is simply for a nice simulation example.  See Appendix~\ref{app:beta} for another such example over Beta densities. 

Let $\phi_d$ be the $N(0, I_d)$ density. \cite{cule2010} note a result that we state in Fact~\ref{fact:logconcmixture}.
\begin{restatable}[]{fact}{factLogconcmixture} \label{fact:logconcmixture}
For $\gamma \in (0, 1)$, the normal location mixture $f(x) = \gamma \phi_d(x) + (1-\gamma) \phi_d(x - \mu)$ is log-concave only when $\|\mu\| \leq 2$.
\end{restatable}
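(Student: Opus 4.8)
The plan is to reduce the $d$-dimensional statement to a one-dimensional computation and then carry out that computation explicitly. First I would choose coordinates adapted to $\mu$: since log-concavity is invariant under invertible affine maps (in particular rotations and translations), I may assume $\mu = (m, 0, \dots, 0)$ with $m = \|\mu\| \ge 0$. Writing $x = (x_1, x_\perp)$ with $x_\perp \in \mathbb{R}^{d-1}$ and using $\phi_d(x) = \phi_1(x_1)\phi_{d-1}(x_\perp)$, the density factors as
\[
f(x) = \phi_{d-1}(x_\perp)\,g(x_1), \qquad g(t) := \gamma\phi_1(t) + (1-\gamma)\phi_1(t-m).
\]

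Next I would argue that log-concavity of $f$ is equivalent to log-concavity of the one-dimensional mixture $g$. For the forward direction, observe that $g$ is exactly the marginal density of $f$ along the first coordinate (integrating out $x_\perp$ costs nothing since $\int \phi_{d-1} = 1$), so if $f$ is log-concave then Theorem~\ref{thm:projection} gives that $g$ is log-concave. For the converse, note that $\log f(x) = \log\phi_{d-1}(x_\perp) + \log g(x_1)$ is a sum of a strictly concave function of $x_\perp$ and a function of $x_1$ alone; since the two blocks of variables are disjoint, $\log f$ is jointly concave iff $\log g$ is concave (a product of log-concave functions in separate variables is log-concave). Hence the whole problem reduces to showing that $g$ is log-concave iff $m \le 2$.

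The heart of the argument, and the step I expect to be the main obstacle, is the explicit one-dimensional computation. I would factor $e^{-t^2/2}$ out of $g$, writing (up to positive constants) $g(t) \propto e^{-t^2/2}\,u(t)$ with $u(t) = \gamma + (1-\gamma)e^{mt - m^2/2}$, so that $(\log g)''(t) = -1 + (\log u)''(t)$. A direct calculation gives
\[
(\log u)''(t) = \frac{\gamma m^2 w}{(\gamma + w)^2}, \qquad w := (1-\gamma)e^{mt - m^2/2} > 0,
\]
and as $t$ ranges over $\mathbb{R}$ (with $m > 0$) the quantity $w$ ranges over all of $(0,\infty)$. The condition $(\log g)'' \le 0$ everywhere is therefore equivalent to $\sup_{w>0}\tfrac{\gamma m^2 w}{(\gamma+w)^2} \le 1$.

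Finally I would maximize over $w$. The map $w \mapsto w/(\gamma+w)^2$ is maximized at $w = \gamma$, with value $1/(4\gamma)$, so the supremum equals $\gamma m^2 \cdot \tfrac{1}{4\gamma} = m^2/4$, remarkably independent of the mixture weight $\gamma$. Thus $g$ is log-concave iff $m^2/4 \le 1$, i.e., iff $\|\mu\| = m \le 2$, which together with the reduction above establishes the claim (the case $m = 0$ being immediate). The only subtlety to handle carefully is the boundary behavior: one should confirm that for $m > 2$ the second derivative is strictly positive on an open interval of $t$ (so that $f$ genuinely fails to be log-concave, and the ``only'' in the statement is justified), and that the claimed full range of $w$ is attained.
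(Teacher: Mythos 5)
Your proposal is correct and follows essentially the same route as the paper's proof: reduce to the one-dimensional mixture along the direction of $\mu$, recast log-concavity as a pointwise second-derivative inequality, and optimize over the real line to obtain the threshold $\|\mu\| = 2$ --- indeed, after your substitution $w = (1-\gamma)e^{mt - m^2/2}$, your condition $\gamma m^2 w \leq (\gamma + w)^2$ is algebraically identical to the paper's inequality $\mu^2 \leq 2 + \frac{\gamma}{1-\gamma}e^{-x\mu + \mu^2/2} + \frac{1-\gamma}{\gamma}e^{x\mu - \mu^2/2}$, with your maximization of $w/(\gamma+w)^2$ at $w = \gamma$ playing the role of the paper's minimization over $x$. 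The differences are in execution rather than substance, but they are welcome: factoring $g(t) \propto e^{-t^2/2}u(t)$ and differentiating $\log u$ replaces the paper's lengthy quotient-rule computation, and you make the $d$-dimensional reduction self-contained by proving both directions (marginalization via Theorem~\ref{thm:projection} for necessity, separate-variables concavity for sufficiency), where the paper simply cites Proposition~1 of \cite{cule2010}.
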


\cite{cule2010} use the permutation test to test $H_0: f^*\in\mathcal{F}_d$ versus $H_1: f^*\notin\mathcal{F}_d$ in this setting. We explore the power and validity of both the permutation and the universal tests over $\gamma = 0.5,$ varying $\|\mu\|,$ and dimensions $d\in\{1,2,3,4\}$. 

\subsection{Full Oracle (in $d$ dimensions) has Inadequate Power} \label{sec:fulloracle}


We compare the permutation test from Section~\ref{sec:permutation} to the full oracle universal test, which uses the true density in the numerator, from Section~\ref{sec:ddim}. Figure~\ref{fig:perm_randproj_n100} showed that the permutation test is not valid in this setting for $d\geq 4$, $n=100$, and $B = 99$. In Appendix~\ref{app:perm_test_addl}, we show that the permutation test's rejection proportion is similar if we increase $B$ to $B\in\{100, 200, 300, 400, 500\}$. In addition, we show that if we increase $n$ to 250, the rejection proportion is still much higher than 0.10 for $\|\mu\| \leq 2$ at $d = 4$ and $d = 5$. In the same appendix, we also show that the discrete nature of the test statistic is not the reason for the test's conservativeness (e.g., $d=1$) or anticonservativeness (e.g., $d=5$).

To compare the permutation test to the full oracle universal test, we again set $\mu = (\mu_1, 0, \ldots, 0)$. Figure~\ref{fig:reject_perm_true_n100} shows the power for $d\in \{1, 2, 3, 4\}$ on $n=100$ observations. For the universal test, we subsample $B=100$ times. Each point is the rejection proportion over 200 simulations. For some $\|\mu\|$ values in the $d=1$ case, the full oracle test has higher power than the permutation test.  For most $(d, \|\mu\|)$ combinations, though, the full oracle test has lower power than the permutation test. Unlike the permutation test, though, the universal test is provably valid for all $d$. 
In Figure \ref{fig:reject_perm_true_n100}, we see that as $d$ increases, we need larger $\|\mu\|$ for the universal test to have power. More specifically, $\|\mu\|$ needs to grow exponentially with $d$ to maintain a certain level of power. (See Figure~\ref{fig:power_vary_d} in Appendix~\ref{app:power_mu_d}.)

\begin{figure}[ht]
\begin{center}
\includegraphics[scale=.65]{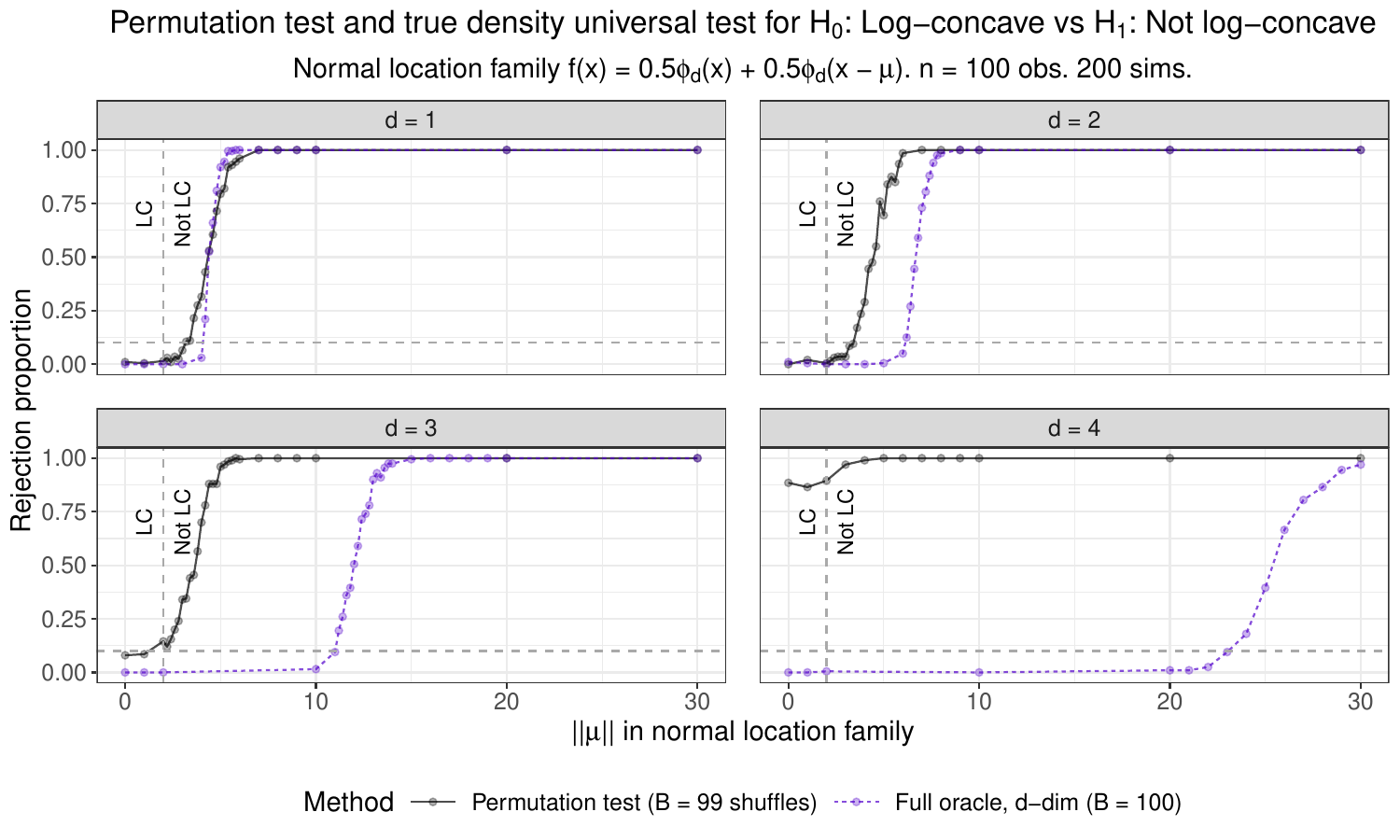}
\end{center}
\caption{Rejection proportions for tests of $H_0: f^*$ is log-concave versus $H_1: f^*$ is not log-concave. When $d=1$, the permutation test and the full oracle universal test have similar power. The full oracle universal approach remains valid in higher dimensions, but it has low power for moderate $\|\mu\|$ when $d\geq 3$.}
\label{fig:reject_perm_true_n100}
\end{figure} 

In Appendix~\ref{app:trace}, we briefly discuss the trace test from Section~3 of \cite{chen2013smoothed} as an alternative to the permutation test. Similar to the permutation test, simulations suggest that the trace test is valid in the $d=2$ setting, but it does not control type I error in the general $d$-dimensional setting.

\subsection{Superior Performance of Dimension Reduction Approaches}

We have seen that the full oracle universal LRT requires $\|\mu\|$ to grow exponentially to maintain power as $d$ increases. We turn to the dimension reduction universal LRT approaches, and we find that they produce higher power for smaller $\|\mu\|$ values.

We implement all four combinations of the two dimension reduction approaches (axis-aligned and random projections) and two density estimation methods (partial oracle and fully nonparametric). We compare them to three $d$-dimensional approaches: the permutation test, the full oracle test, and the partial oracle test. The full oracle $d$-dimensional approach uses the split LRT with the true density in the numerator and the $d$-dimensional log-concave MLE in the denominator. The partial oracle approaches use the split LRT with a two component Gaussian mixture in the numerator and the log-concave MLE in the denominator. We fit the Gaussian mixture using the EM algorithm, as implemented in the \texttt{mclust} package \citep{mclust}. The fully nonparametric approaches use a kernel density estimate in the numerator and the log-concave MLE in the denominator. We fit the kernel density estimate using the plug-in bandwidth in the \texttt{ks} package \citep{ks}.

Figures~\ref{fig:unequal_spaced_both} and \ref{fig:equal_spaced_both} compare the four dimension reduction approaches and the $d$-dimensional approaches. The six universal approaches subsample at $B = 100$. The random projection approaches set $n_{\text{proj}} = 100$. The permutation test uses $B = 99$ permutations to determine the significance level of the original test statistic. Both figures use the normal location model $f^*(x) = 0.5 \phi_d(x) + 0.5 \phi_d(x -\mu)$ as the underlying model. However, Figure~\ref{fig:unequal_spaced_both} uses $\mu = -(\|\mu\|, 0, \ldots, 0)$, while Figure~\ref{fig:equal_spaced_both} uses $\mu = -(\|\mu\|d^{-1/2}, \|\mu\|d^{-1/2}, \ldots, \|\mu\|d^{-1/2})$. The axis-aligned projection method has higher power in the first setting, but the other methods do not have differences in power between the two settings.

Figures~\ref{fig:unequal_spaced} and \ref{fig:equal_spaced} compare all seven methods. There are several key takeaways. The universal approaches that fit one-dimensional densities (axis-aligned projections and random projections) have higher power than the universal approaches that fit $d$-dimensional densities. (When $d=1$, the ``Partial oracle, axis-aligned projections,'' ``Partial oracle, $d$-dim,'' and ``Partial oracle, random projections'' approaches are the same, except that the final method uses $Bn_\text{proj}$ subsamples rather than $B$ subsamples.) As intuition for this behavior, even one projection with particularly strong evidence against log-concavity may provide sufficient evidence to reject log-concavity. See Appendix~\ref{app:full_vs_proj} for more discussion. The permutation test is not always valid, especially for $d \geq 4$. 

To compare the four universal approaches that fit one-dimensional densities, we consider Figures~\ref{fig:unequal_spaced_zoom} and \ref{fig:equal_spaced_zoom}, which zoom in on a smaller range of $\|\mu\|$ values for those four methods. In both Figure~\ref{fig:unequal_spaced_zoom} and \ref{fig:equal_spaced_zoom}, for a given dimension reduction approach (axis-aligned projections or random projections), the partial oracle approach has slightly higher power than the fully nonparametric approach. For a given density estimation approach, the dimension reduction approach with higher power changes based on the setting. When $\mu = -(\|\mu\|, 0, \ldots, 0)$ (Figure~\ref{fig:unequal_spaced_zoom}), the axis-aligned projection approach has higher power than the random projections approach. This makes sense because a single dimension contains all of the signal. When $\mu = -(\|\mu\|d^{-1/2}, \|\mu\|d^{-1/2}, \ldots, \|\mu\|d^{-1/2})$ (Figure~\ref{fig:equal_spaced_zoom}), the random projections approach has higher power than the axis-aligned projection approach. This makes sense because all directions have some evidence against $H_0$, and there exist linear combinations of the coordinates that have higher power than any individual axis-aligned dimension.

\begin{figure}
\centering
\begin{subfigure}[b]{1\textwidth}
\centering
\includegraphics[scale=.5]{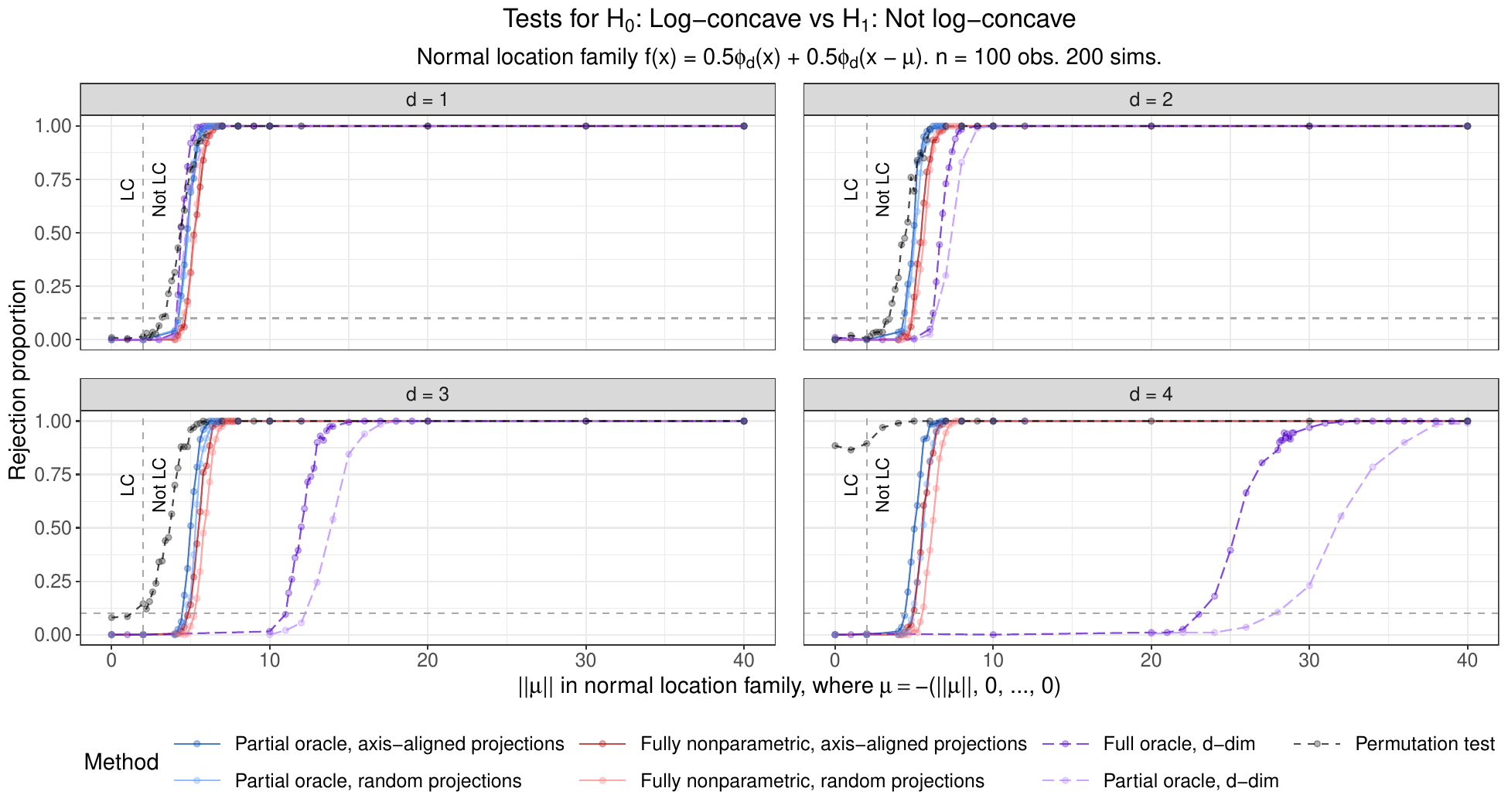}
\caption{The projection-based universal tests decrease the gap in power between the permutation test and the $d$-dimensional universal tests for $d\geq 2$. The power of the projection tests only exhibits a moderate curse of dimensionality.}
\label{fig:unequal_spaced}
\end{subfigure} \\
\begin{subfigure}[b]{1\textwidth}
\centering
\includegraphics[scale=.55]{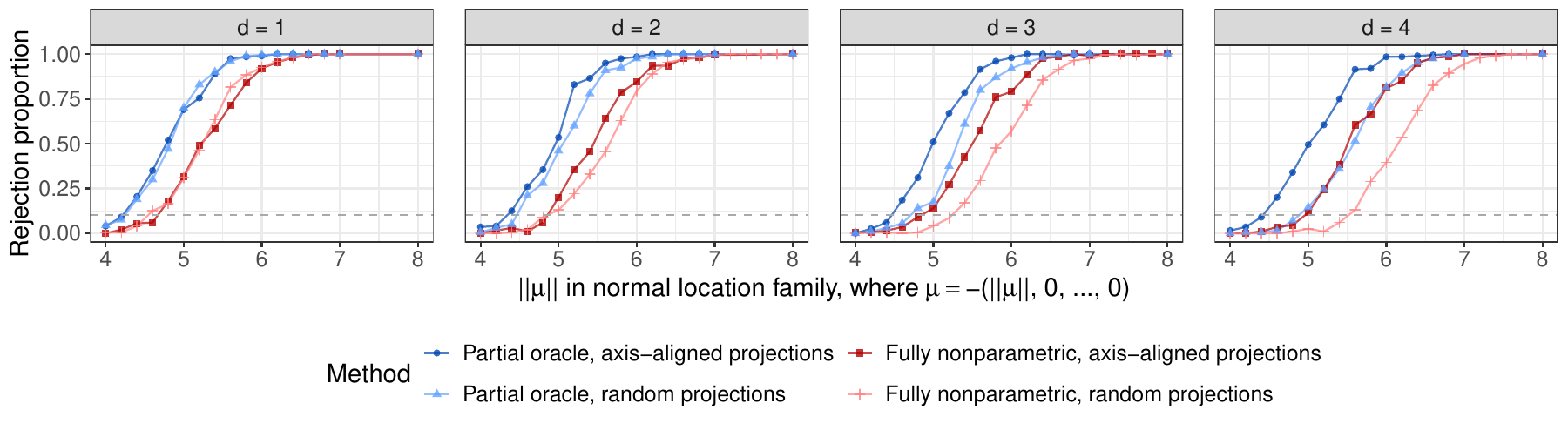}
\caption{Partial oracle numerators have higher power than fully nonparametric numerators. Since all signal against $H_0$ is in the first component of $\mu$, the axis-aligned projection tests have higher power than the random projection tests within each choice of numerator.}
\label{fig:unequal_spaced_zoom}
\end{subfigure} 
\caption{Power of tests of $H_0: f^*$ is log-concave versus $H_1: f^*$ is not log-concave. $\mu$ vector for second component is $\mu = -(\|\mu\|, 0, \ldots, 0)$.}
\label{fig:unequal_spaced_both}
\end{figure} 

\begin{figure}
\centering
\begin{subfigure}[b]{1\textwidth}
\centering
\includegraphics[scale=.5]{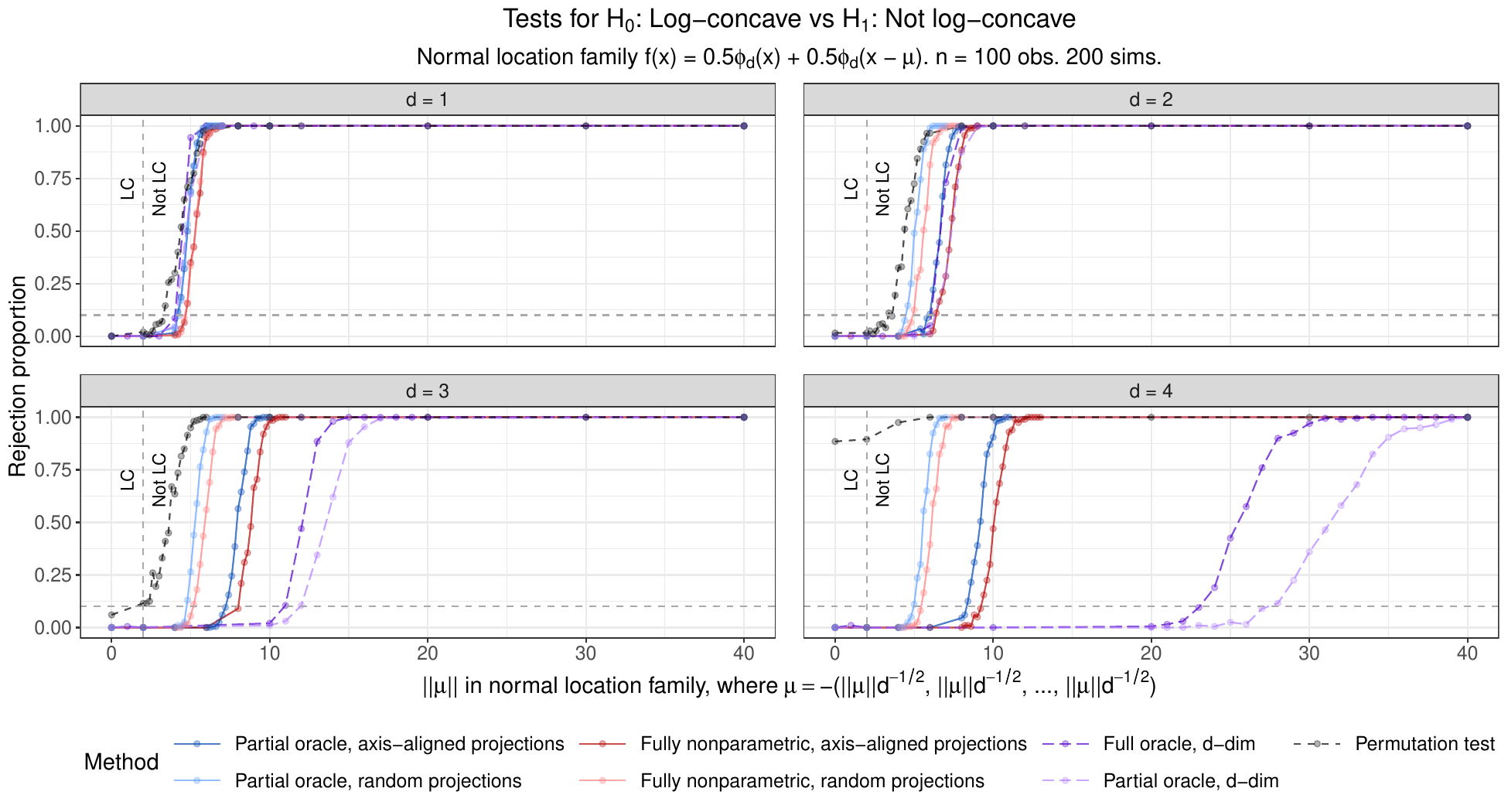}
\caption{In this second choice of $\mu$ structure, the projection-based universal tests also decrease the gap in power between the permutation test and the $d$-dimensional universal tests for $d\geq 2$. The power of the projection tests only exhibits a moderate curse of dimensionality.}
\label{fig:equal_spaced}
\end{subfigure} \\
\begin{subfigure}[b]{1\textwidth}
\centering
\includegraphics[scale=.55]{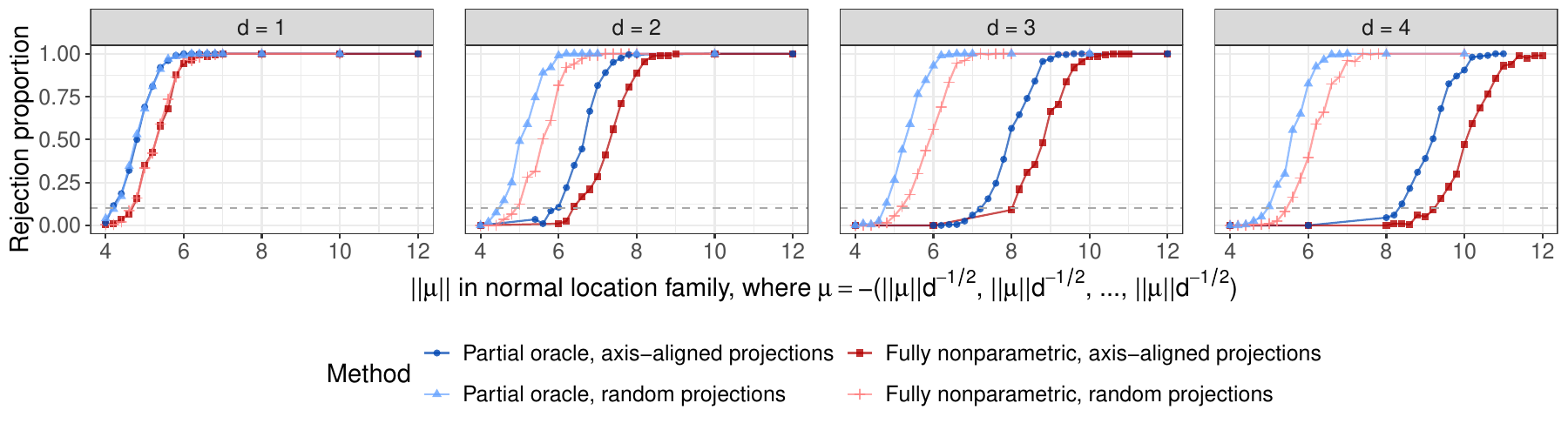}
\caption{Since the signal against $H_0$ is equally distributed across the components of $\mu$, the random projection tests have higher power than the axis-aligned projection tests.}
\label{fig:equal_spaced_zoom}
\end{subfigure} 
\caption{Power of tests of $H_0: f^*$ is log-concave versus $H_1: f^*$ is not log-concave. $\mu$ vector for second component is $\mu = -(\|\mu\|d^{-1/2}, \|\mu\|d^{-1/2}, \ldots, \|\mu\|d^{-1/2})$.}
\label{fig:equal_spaced_both}
\end{figure} 

\subsection{Time Benchmarking}

\begin{table}[h!]
\centering
\begin{tabular}{p{3.5cm}p{3cm}p{2.6cm}p{2.6cm}p{2.7cm}}
  \hline \hline
 Method & $d = 1$ & $d = 2$ & $d = 3$ & $d = 4$ \\ 
  \hline
 Partial oracle, \newline random projections & (150, 50, 0.94) & (160, 110, 1.7) & (150, 140, 3.2) & (160, 140, 3.7) \\ \hline
 Fully NP, \newline random projections & (120, 71, 0.67) & (110, 100, 1.3) & (120, 110, 2.6) & (120, 120, 4.2) \\ \hline
 Permutation test \newline & (51, 50, 51) & (52, 52, 52) & (54, 54, 53) & (61, 62, 62) \\ \hline
 Partial oracle, \newline axis projections & (8, 3.6, 0.1) & (16, 5.7, 0.19) & (24, 12, 0.26) & (32, 25, 0.34) \\ \hline
 Fully NP, \newline axis projections & (7.6, 5.8, 0.098) & (15, 11, 0.17) & (23, 19, 0.25) & (30, 23, 0.33) \\ \hline
 Partial oracle,\newline d-dim & (6.6, 3.9, 0.093) & (21, 22, 0.68) & (71, 74, 75) & (300, 330, 350) \\ \hline
 Full oracle,\newline d-dim & (6.1, 1.2, 0.073) & (20, 21, 0.3) & (70, 72, 74) & (300, 340, 340) \\ 
   \hline
\end{tabular}
\caption{Average run time (in seconds) of log-concave tests at $(\|\mu\| = 0, \|\mu\| = 5, \|\mu\| = 10)$, using the same parameters as Figure~\ref{fig:unequal_spaced_both}. The universal methods run faster when there is more evidence against $H_0$, which allows for early rejection. It is faster to compute log-concave MLEs in one dimension (e.g., projection methods) than in general $d$ dimensions.}
\label{tab:benchmarking}
\end{table}

Table~\ref{tab:benchmarking} displays the average run times of the seven methods that we consider. Each cell corresponds to an average (in seconds) over 10 simulations at $(\|\mu\| = 0, \|\mu\| = 5, \|\mu\| = 10)$. Except for the restriction on  $\|\mu\|$, these simulations use the same parameters as Figure~\ref{fig:unequal_spaced_both}. We arrange the methods in rough order from longest to shortest run times in $d=1$. The random projection methods have some of the longest run times at $\|\mu\| = 0$. If the random projection methods do not have sufficient evidence to reject $H_0$ early, they will construct $Bn_\text{proj}$ test statistics. Each of those $Bn_\text{proj}$ test statistics requires fitting a one-dimensional log-concave MLE and estimating a partial oracle or fully nonparametric numerator density. The permutation test is faster than the random projection tests in the $\|\mu\| = 0$ setting, but it does not stop early for larger $\|\mu\|$. The axis projection and $d$-dimensional universal approaches have similar run times for $d\leq 2$. (In fact, for $d=1$ the partial oracle axis projection and partial oracle $d$-dimensional methods are the same.) The axis projection methods compute a maximum of $Bd$ test statistics, and the $d$-dimensional methods compute a maximum of $B$ test statistics. Since the $d$-dimensional universal approaches repeatedly fit $d$-dimensional log-concave densities, they are the most computationally expensive approaches for large $d$. 

\subsection{Counterpoint: Non-log-concave Density with Log-concave Marginals}\label{sec:counterpoint}
In the previous normal location mixture example,
the projection methods have power because the projection distributions are not all log-concave. While Theorem~\ref{thm:projection} states that log-concavity of a density implies log-concavity of the lower dimensional projections, the converse is not guaranteed to hold.  As an example, suppose $f^*: \mathbb{R}^2 \to \mathbb{R}$ is the normal mixture density given by $f^*(x) = (2 \cdot 2\pi)^{-1} (\exp(-\|x\|^2/2) + \sigma^{-2} \exp(-\|x\|^2/2\sigma^2))$, where $\sigma = \sqrt{3}$. This density is not log-concave, but all of its one-dimensional projections are log-concave. (See Appendix~\ref{app:gaussianproj}.) Figure~\ref{fig:normal_mixture_diff_variance} shows power results from simulations of several full-dimensional and projection approaches to test $H_0: f^* \in \mathcal{F}_d$ versus $H_1: f^* \notin \mathcal{F}_d$ across varying $n$. The $d$-dimensional approaches have some power to detect $H_1$, but even at $n = 1600$ the estimated power is only about 0.13. The projection methods do not have power since all projections are log-concave. Hence, the projection methods will not always have higher power than the $d$-dimensional approaches, but in this example even the $d$-dimensional approaches do not have high power.
If one is not sure whether the projected or full-dimensional test will have higher power, one can simply run both, average the resulting test statistics, and threshold the average at $1/\alpha$. Since the average of e-values is an e-value, such a test is still valid, and the test is consistent if either of the original tests is consistent.

\begin{figure}[H]
\centering
\includegraphics[scale=.65]{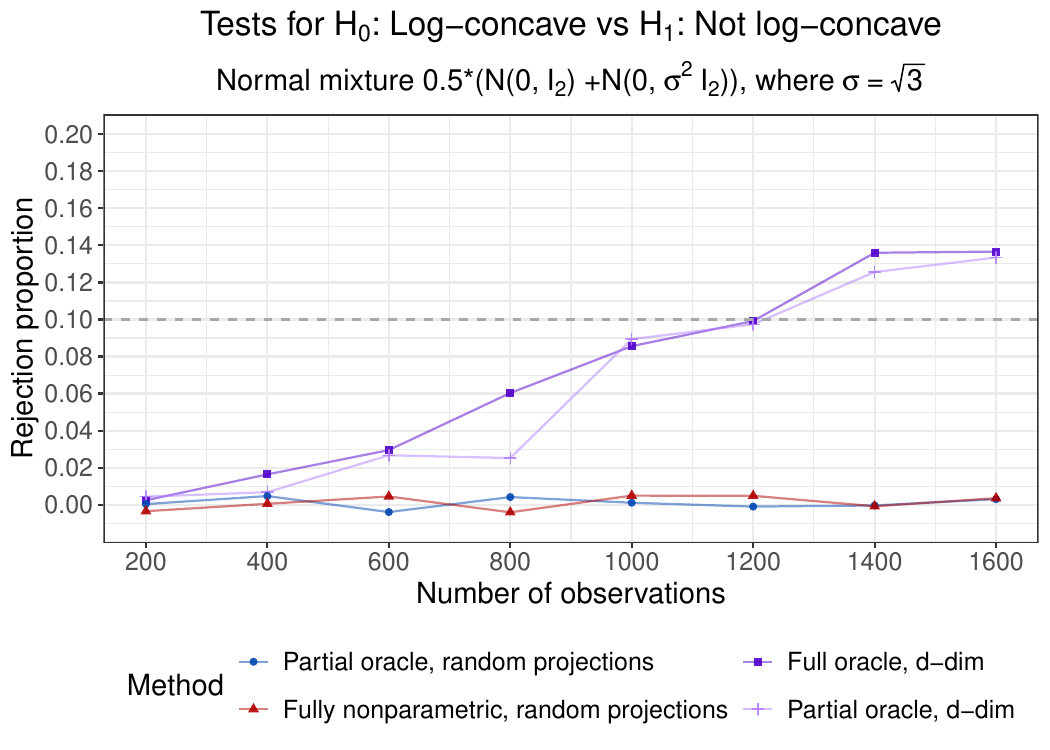}
\caption{Power of the universal test for log-concavity in a setting where the two-dimensional density is not log-concave but all one-dimensional projections are log-concave. The estimated power is the rejection proportion over 200 simulations at $\alpha = 0.10$ for a given value of $n$. Values of rejection proportions are jittered by at most 0.005 for plotting. Full-dimensional approaches have power above $\alpha$ at $n\in\{1400, 1600\}$. Projection approaches never reject log-concavity.}
\label{fig:normal_mixture_diff_variance}
\end{figure} 

\newcommand{\onehatf}{\hat{f}_{1,n}}
\newcommand{\zerohatf}{\hat{f}_{0,n}}
\newcommand{\flc}{f^{\text{LC}}}
\newcommand{\dee}{\mathcal{D}}
\newcommand{\PP}{\mathbb{P}}
\newcommand{\PPP}[1]{\mathbb{P}\left( #1\right)}
\newcommand{\dkl}{D_{\mathrm{KL}}}

\section{Theoretical Power of Log-concave Universal Tests} \label{sec:theoretical}

Our simulations have shown that the universal LRTs can have high power to test $H_0: f^* \in \mathcal{F}_d$ versus $H_1: f^* \notin \mathcal{F}_d$. We complement this with a proof of the consistency of this test. For the rest of this section, think of $f^* \notin \mathcal{F}_d$.

First, we review and introduce some notation.  Let $\onehatf$ be an estimate of $f^*$, fit on $\dee_{1,n}$. Let $\zerohatf$ be the log-concave MLE of $f^*$ fit on $\dee_{0,n},$ i.e., $\zerohatf = \argmax{f\in\mathcal{F}_d} \sum_{i \in \dee_{0,n}} \log(f(Y_i))$. The universal test statistic is
\begin{equation}
    T_n = \prod_{i \in \mathcal{D}_{0,n}} \frac{\onehatf(Y_i)}{\zerohatf(Y_i)}, \label{eq:teststat}
\end{equation}
and we reject $H_0$ if $T_n \geq 1/\alpha$. 

Let $\flc$ denote the log-concave projection of $f^*$, i.e., $\flc = \argmin{f\in\mathcal{F}_d} \dkl(f^* \| f),$ where for densities $p,q, \dkl(p\|q)$ is the KL divergence $\dkl(p\|q) := \int p(x) \log (p(x) / q(x))\,\mathrm{d}x.$ We further define the Hellinger divergence $h(p,q) := \|\sqrt{p/2} - \sqrt{q/2}\|_2.$ $h$ is well-defined for nonnegative functions (and not only densities), and is a metric on such functions.


Finally, we define some set notation: $\mathrm{supp}(f^*) := \{ x : f^*(x) > 0\} $ denotes the support of the measure induced by $f^*,$ and for a measurable set $S \subset \mathbb{R}^d,$ $\mathrm{int}(S)$ and $\partial S$ respectively denote its topological interior and boundary.

\subsection{Assumptions}

We reiterate that the validity of the universal LRT only requires the assumption of iid data. However, for the universal LRT to be powerful (when $f^* \notin \mathcal{F}_d$), we need a few relatively mild conditions.

\begin{assumption}\label{assumption:regularity}
\emph{(Regularity of $f^*$) } Let $P^*(\mathrm{d}x) := f^*(x) \mathrm{d}x.$ We assume that if $X \sim f^*,$ then $\mathbb{E}[\|X\|] < \infty$, $\mathbb{E}[ \max(\log f^*(X), 0) ] < \infty,$ $\mathrm{int}(\mathrm{supp}(f^*)) \neq \varnothing,$ and that for every hyperplane $H$, $P^*(H) < 1.$ 
\end{assumption}

Assumption~\ref{assumption:regularity} enforces standard conditions imposed in the log-concave estimation literature \citep{cule2010theoretical}. In particular, the finiteness of $\mathbb{E}[\|X\|]$ and $\mathbb{E}[\max(\log f^*,0)]$ yield the existence of the log-concave projection $\flc$ (defined earlier), and the remaining conditions impose weak regularity properties that ensure convergence of $\zerohatf$ to $\flc$. 

For power, we fundamentally need our estimate $\onehatf$ to be close enough to the true (non-log-concave, under the alternative) distribution $f^*$, relative to $\mathcal{F}_d$. An assumption of KL-consistency, that is, assuming $\dkl(f^*\|\onehatf) \to 0$ as $n \to \infty$, would certainly suffice. The following condition is weaker, though, and it only requires $\dkl(f^*\|\onehatf)$ to get smaller than a critical Hellinger distance of $f^*$ from log-concavity.

\begin{assumption}\label{assumption:estimability}
\emph{(Estimability of $f^*$) } We assume that $\onehatf$ is a good estimator of $f^*$, in the sense that if we use $n/2$ iid draws from $f^*$ to construct $\onehatf$, then for all $\theta > 0,$ \[\lim_{n\to\infty} \PPP{ \frac{\dkl(f^*\|\onehatf)}{h^2(f^*, \flc)} < \frac{1}{200}, \int f^*(x) \log^2(f^*(x) / \onehatf(x))\,\mathrm{d}x < \theta n} = 1.\]
\end{assumption}

Assumption~\ref{assumption:estimability} is satisfied if $\onehatf$ estimates $f^*$ in a KL divergence sense better than $\flc$ approximates $f^*$ in a Hellinger sense, and if the variance of the log-ratio $\log (f^* / \onehatf)$ does not grow too fast. Estimation in KL divergence is largely driven by the tail behavior of $f^*$; the $\onehatf$ estimation procedure needs to ensure that $\onehatf$ does not underestimate the tails of $f^*$. In many settings, we will have $\dkl(f^*\|\onehatf) \to 0$, thus satisfying the assumption. However, we do not require $\dkl(f^*\|\onehatf)$ to go to $0$ --- it is enough for the divergence to get smaller than $h^2(f^*,\flc)/200$. While this criterion is still stringent enough to be practically relevant, it makes the theoretically favorable point that the universal LRT for log-concavity does not require consistent density estimation of $f^*$ in a strong (KL) sense. In our argument, the role of Assumption \ref{assumption:estimability} is to control how large $f^*/\onehatf$ can get in a manner similar to Theorems~3 and 4 of \cite{wong1995probability}.

\subsection{Result and Proof Sketch}

We are now in a position to state our main result. We provide a proof sketch, leaving the details to Appendix~\ref{app:consistency}.

\begin{restatable}[]{thm}{thmLogconcpower} \label{thm:logconcpower}

Suppose Assumptions \ref{assumption:regularity} and \ref{assumption:estimability} hold. Then the universal likelihood ratio test for log-concavity with test statistic (\ref{eq:teststat}) is consistent. That is, $ \lim_{n \to \infty} \PP_{H_1}(T_n \geq 1/\alpha) = 1$.

\end{restatable}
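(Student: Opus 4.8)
The plan is to show that under $H_1$ the statistic $T_n$ diverges in probability, which gives $\PP_{H_1}(T_n \ge 1/\alpha) \to 1$ at once, since $1/\alpha$ is a fixed threshold. Writing $m = |\dee_{0,n}| = n/2$ and taking logarithms, I would split $\log T_n$ by inserting $f^*$ and $\flc$:
\[
\log T_n = \underbrace{\sum_{i\in\dee_{0,n}}\log\frac{\onehatf(Y_i)}{f^*(Y_i)}}_{=:A_n} + \underbrace{\sum_{i\in\dee_{0,n}}\log\frac{f^*(Y_i)}{\flc(Y_i)}}_{=:B_n} + \underbrace{\sum_{i\in\dee_{0,n}}\log\frac{\flc(Y_i)}{\zerohatf(Y_i)}}_{=:C_n}.
\]
Here $B_n$ is the ``signal'' coming from the fact that $f^*$ is genuinely non-log-concave, $A_n$ is the error from estimating the numerator, and $C_n \le 0$ measures how much better the log-concave MLE fits the sample than the projection $\flc$. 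The goal is to show $A_n + B_n + C_n \to +\infty$.

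For $B_n$ the summands are iid with mean $\dkl(f^*\|\flc) =: \delta$, so $B_n/m \to \delta$ by the law of large numbers, and $\delta > 0$ because $f^* \notin \mathcal{F}_d$ while $\flc$ is its KL projection. Crucially, I would invoke the elementary inequality $\dkl(p\|q) \ge -2\log\int\sqrt{pq} = -2\log(1 - h^2(p,q)) \ge 2h^2(p,q)$ (Jensen, followed by $-\log(1-x)\ge x$) to conclude $\delta \ge 2h^2(f^*,\flc)$. For $A_n$, conditional on $\dee_{1,n}$ the estimate $\onehatf$ is fixed and the $\{Y_i : i\in\dee_{0,n}\}$ are iid $f^*$, so $\mathbb{E}[A_n \mid \dee_{1,n}] = -m\,\dkl(f^*\|\onehatf)$. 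By the first clause of Assumption~\ref{assumption:estimability} this is at least $-m\,h^2(f^*,\flc)/200$ with probability tending to one, while the second clause $\int f^* \log^2(f^*/\onehatf) < \theta n$ bounds the conditional variance of each summand, so Chebyshev gives $A_n/m \ge -h^2(f^*,\flc)/200 - O_\PP(\sqrt{\theta})$; letting $\theta \downarrow 0$ (Assumption~\ref{assumption:estimability} holds for every $\theta>0$) renders the fluctuation negligible.

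The main obstacle is $C_n$, where the same data $\dee_{0,n}$ enter both the MLE $\zerohatf$ and the empirical average, and where $\log(\zerohatf/\flc)$ is unbounded in the tails. Since $\zerohatf$ maximizes the log-likelihood over $\mathcal{F}_d \ni \flc$ we have $C_n \le 0$, so I must lower bound it, i.e., show the MLE cannot overfit $\flc$ by more than $o(m)$ in empirical log-likelihood. Writing $\mathbb{P}_m$ for the empirical measure on $\dee_{0,n}$, I would decompose $C_n/m = -(\mathbb{P}_m - P^*)[\log(\zerohatf/\flc)] - \mathbb{E}_{f^*}[\log(\zerohatf/\flc)]$; the population part $-\mathbb{E}_{f^*}[\log(\zerohatf/\flc)] = \dkl(f^*\|\zerohatf) - \dkl(f^*\|\flc) \ge 0$ helps, and the empirical-process part must be shown to be $o_\PP(1)$. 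To control it I would combine the strong, exponentially-weighted convergence of the log-concave MLE (Theorem~3 of \cite{cule2010}: $\int \exp(a\|x\|)|\zerohatf - \flc| \to 0$ a.s., and the analogous sup-norm statement when $\flc$ is continuous), which confines $\zerohatf$ to a shrinking, tail-controlled neighborhood of $\flc$, with likelihood-ratio deviation inequalities in the spirit of Theorems~3 and 4 of \cite{wong1995probability} to obtain a uniform law of large numbers over that neighborhood. This is the delicate step precisely because it must simultaneously absorb the data reuse and the tail behavior of the log-ratio; Assumption~\ref{assumption:regularity} ($\mathbb{E}\|X\|<\infty$, nonempty interior of the support, and $P^*(H)<1$ for every hyperplane $H$) supplies the regularity needed to invoke these MLE results and to guarantee $C_n/m \to 0$.

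Finally I would combine the three pieces: on an event of probability tending to one,
\[
\frac{\log T_n}{m} \ge \delta - \frac{h^2(f^*,\flc)}{200} - o(1) \ge 2h^2(f^*,\flc) - \frac{h^2(f^*,\flc)}{200} - o(1),
\]
which is bounded below by a strictly positive constant since $h^2(f^*,\flc) > 0$ under $H_1$ (as $f^*\neq\flc$ and $h$ is a metric). Hence $\log T_n \to +\infty$ in probability, so $\PP_{H_1}(T_n \ge 1/\alpha) \to 1$, as claimed.
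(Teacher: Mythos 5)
Your skeleton is sound in two of its three pieces, and those pieces match the paper's proof: your $A_n$ is exactly $-\log R_n$ in the paper's decomposition $T_n = S_n/R_n$, and your conditional-Chebyshev argument given $\mathcal{D}_{1,n}$ (first clause of Assumption~\ref{assumption:estimability} to control the conditional mean, second clause to control the conditional variance, then $\theta \downarrow 0$) is the paper's Lemma~\ref{lem:R_n_small_markov} plus its $\inf_\theta$ step; your $B_n$ treatment (LLN plus $D_{\mathrm{KL}}(f^*\|f^{\mathrm{LC}}) \ge 2h^2(f^*,f^{\mathrm{LC}})$) is also fine. The genuine gap is the $C_n$ step — precisely the one you flag as delicate. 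First, your decomposition $C_n/m = -(\mathbb{P}_m - P^*)[\log(\hat f_{0,n}/f^{\mathrm{LC}})] - \mathbb{E}_{f^*}[\log(\hat f_{0,n}/f^{\mathrm{LC}})]$ is ill-defined: the log-concave MLE $\hat f_{0,n}$ vanishes outside the convex hull of the points in $\mathcal{D}_{0,n}$, and $f^*$ almost surely puts positive mass outside that hull, so $D_{\mathrm{KL}}(f^*\|\hat f_{0,n}) = +\infty$ while the "empirical-process part" is $-\infty$; you cannot recombine an $\infty - \infty$ expression, and the "population part $\ge 0$" step collapses. Second, the tools you invoke do not deliver the claimed uniform law of large numbers: the Wong--Shen theorems are one-sided exponential bounds on likelihood ratios $\prod_i f^*(Y_i)/g(Y_i)$, uniform over classes of $g$ that are Hellinger-separated from $f^*$; they are not centered ULLNs for $\log(g/f^{\mathrm{LC}})$, and the Cule--Samworth weighted-$L^1$/sup-norm convergence cannot tame an empirical process whose functions have an infinite negative envelope off the hull. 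So the assertion $C_n/m \to 0$ is unsupported as argued (and, notably, is stronger than what is actually needed or what the paper proves).

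The paper's proof never isolates $C_n$: it controls $B_n + C_n = \log S_n$ jointly. Lemma~\ref{lemma:mle_close_to_M_projection} builds a \emph{deterministic} bracket $[u_{\eta_0}, v_{\eta_0}]$ of arbitrarily small $L^1$-size that contains $f^{\mathrm{LC}}$ and, almost surely for all large $n$, also contains $\hat f_{0,n}$; by the triangle inequality every $g$ in this bracket satisfies $h(g, f^*) \ge \varepsilon\sqrt{24/25}$, where $\varepsilon = h(f^*, f^{\mathrm{LC}})$. Lemma~\ref{lemma:pointwise_control_implies_blowup} (Wong--Shen's Theorem~1, applied to a class with zero bracketing entropy at the relevant scale) then gives
\[
\mathbb{P}\left( \inf_{g \in [u_{\eta_0}, v_{\eta_0}]} \prod_{i \in \mathcal{D}_{0,n}} \frac{f^*(Y_i)}{g(Y_i)} \le \exp(n\varepsilon^2/50) \right) \le 4\exp(-Cn\varepsilon^2),
\]
which handles the data reuse (the bracket is fixed, not data-dependent) and the tail/support issues (only ratios $f^*/g$ appear, never $P^*$-integrals of $\log g$) simultaneously. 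Note this yields only $(B_n + C_n)/m \ge \varepsilon^2/25$ with high probability — it allows $C_n/m$ to remain bounded away from zero — and that is already enough against your bound $A_n/m \ge -h^2(f^*,f^{\mathrm{LC}})/200 - o_{\mathbb{P}}(1)$. To repair your proof, you would replace your $C_n$ argument with exactly this bracket-plus-uniform-likelihood-ratio step, at which point separating $B_n$ from $C_n$ buys nothing.
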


\begin{proofsketch}

We assume throughout that $H_1$ is true, meaning that $f^*$ is not log-concave. For brevity, we use $\PP$ to denote $\PP_{H_1}$. We begin by decomposing $T_n$ into \[ T_n = \underbrace{\prod_{i \in \mathcal{D}_{0,n}} \frac{\hat{f}_{1,n}(Y_i)}{f^*(Y_i)}}_{=: 1/R_n} \cdot \underbrace{\prod_{i \in \mathcal{D}_{0,n}} \frac{f^*(Y_i)}{\hat{f}_{0,n}(Y_i)}}_{=: S_n} = S_n/R_n.\]

Let $\varepsilon := h(f^*, \flc).$ Further, suppose $n \geq 100 \log(1/\alpha)/\varepsilon^2.$ Now observe that \begin{align*}
    \{ T_n < 1/\alpha\} &\subset \{R_n > \exp(n\varepsilon^2/100) \} \cup \{S_n < \exp(n\varepsilon^2/50) \},
\end{align*} since outside this union, $S_n/R_n \geq \exp(n\varepsilon^2/100) \geq 1/\alpha$. Thus, it suffices to argue that \begin{equation*} 
    \PPP{R_n > \exp(n \varepsilon^2/100}) + \PPP{S_n < \exp(n \varepsilon^2/50}) \to 0.
\end{equation*} 
The two assumptions contribute to bounding these terms. In particular, Assumption~\ref{assumption:regularity} implies that $S_n$ is big, while Assumption~\ref{assumption:estimability} implies that $R_n$ is small.

\textbf{$S_n$ is big}. Observe that since $h(f^*, \flc) > 0,$ the likelihood ratio $\prod_{i \in \dee_{0,n}} f^*(Y_i) / \flc(Y_i)$ tends to be exponentially large with high probability. We can use Markov's inequality and the properties of Hellinger distance to show that for $\xi > 0$ (and particularly for large $\xi$), \[\PPP{\prod_{i \in \dee_{0,n}} f^*(Y_i)/\flc(Y_i) < \xi} \le \sqrt{\xi} \mathbb{E}_{Y \sim f^*}\left[ \sqrt{\flc(Y)/f^*(Y)}\right]^{n/2} = \sqrt{\xi} (1-h^2(f^*, \flc))^{n/2}.\] We would thus expect that the same holds for the ratio of interest $\prod_{i \in \dee_{0,n}} f^*(Y_i)/\zerohatf(Y_i)$. The regularity conditions from Assumption \ref{assumption:regularity} enable this, by establishing that $\zerohatf \to \flc$ in the strong sense that for large $n$, $\zerohatf$ lies in a small bracket containing $\flc$. This pointwise control allows us to use results from empirical process theory to show that for large enough $n$, $\prod_{i \in \dee_{0,n}} f^*(Y_i)/\zerohatf(Y_i)$ grows at an exponential rate similar to $\prod_{i \in \dee_{0,n}} f^*(Y_i)/\flc(Y_i)$.

\textbf{$R_n$ is small}. The smallness of $R_n$ relies on two facts: (1) $\onehatf$ is fit on $\{Y_i : i \in \dee_{1,n}\}$ and evaluated on an independent dataset $\{Y_i: i \in \dee_{0,n}\},$ and (2) under Assumption~\ref{assumption:estimability}, $\onehatf$ approximates $f^*$ in a strong sense with high probability. 

Concretely, we observe that since $\onehatf$ is determined given the data in $\dee_{1,n},$ we may condition on $\dee_{1,n}$ and study the tail behavior of $R_n$ on $\dee_{0,n}.$ Further observing that $\mathbb{E}_{Y \sim f^*}[\log (f^*(Y)/\onehatf(Y))] =  D(f^*\|\onehatf),$ applying Tchebycheff's inequality to $\log R_n$ yields \[ \PPP{R_n > \exp(n \varepsilon^2/100) \,|\, \{Y_i : i \in \dee_{1,n}\}} \le \frac{\int f^*(x) \log^2(f^*(x)/\onehatf(x))\mathrm{d}x}{ n ((\varepsilon^2/50 - D(f^*\|\onehatf))_+)^2},\] where $(z)_+ = \max(0,z)$. Assumption~\ref{assumption:estimability} lets us argue that with high probability over $\{Y_i : i \in \mathcal{D}_{1,n}\},$ this upper bound vanishes as $n \to \infty$.

\end{proofsketch} 

\section{Conclusion} \label{sec:conclusion}

We have implemented and evaluated several universal LRTs to test for log-concavity. These methods provide the first tests for log-concavity that are valid in finite samples under only the assumption that the data sample is iid. The tests include a full oracle (true density) approach, a partial oracle (parametric) approach, a fully nonparametric approach, and several LRTs that reduce the $d$-dimensional test to a set of one-dimensional tests. 
For reference,
we compared these tests to a permutation test 
although that test is not guaranteed to be valid. In one dimension, the universal tests can have higher power than the permutation test. In higher dimensions, the permutation test may falsely reject $H_0$ at a rate much higher than $\alpha$, but the universal tests are still valid in higher dimensions. As seen in the Gaussian mixture case, dimension reduction universal approaches can have notably stronger performance than the universal tests that work with $d$-dimensional densities. 

Several open questions remain. 
Theorem~\ref{thm:logconcpower} presented a set of conditions under which the universal LRT has power that converges to 1 as $n\to\infty$. As discussed, it may be possible to weaken some of these conditions. In addition,  future work may seek to theoretically derive the power as a function of the dimension, number of observations, and signal strength. As shown in one example (Figure~\ref{fig:power_vary_d} of Appendix~\ref{app:addl_sims}), the signal may need to grow exponentially in $d$ to maintain the same power.

\bigskip
\begin{center}
{\large\bf SUPPLEMENTARY MATERIAL}
\end{center}

\begin{description}

\item[Appendix:] The appendix contains proofs of all theoretical results (Appendix~\ref{app:proofs}), additional simulations and visualizations for the two-component normal mixture setting (Appendix~\ref{app:addl_sims}), discussions on the relative power of full-dimensional and projection tests (Appendix~\ref{app:full_vs_proj}), simulations to test log-concavity when data arise from a Beta distribution (Appendix~\ref{app:beta}), and additional details on the permutation test and trace test for log-concavity (Appendix~\ref{app:perm_trace}). (pdf file)

\item[R code:] The R code to reproduce the simulations and figures is available at
\if0\blind{\url{https://github.com/RobinMDunn/LogConcaveUniv}.}\fi
\if1\blind{[redacted for blind review].}\fi

\end{description}

\bigskip
\begin{center}
{\large\bf ACKNOWLEDGEMENTS}
\end{center}
\if0\blind{This work used the Extreme Science and Engineering Discovery Environment (XSEDE) \citep{xsede}, which is supported by National Science Foundation grant number ACI-1548562. Specifically, it used the Bridges system \citep{xsedebridges}, which is supported by NSF award number ACI-1445606, at the Pittsburgh Supercomputing Center (PSC). This work made extensive use of the \texttt{R} statistical software \citep{Rcore}, as well as the \texttt{clustermq} \citep{clustermq}, \texttt{data.table} \citep{datatable}, \texttt{fitdistrplus} \citep{fitdistrplus}, \texttt{kde1d} \citep{kde1d}, \texttt{ks} \citep{ks},  \texttt{LogConcDEAD} \citep{cule2009logconcdead}, \texttt{logcondens} \citep{logcondens}, \texttt{MASS} \citep{mass}, \texttt{mclust} \citep{mclust}, \texttt{mvtnorm} \citep{mvtnorm, mvtnormbook}, and \texttt{tidyverse} \citep{tidyverse} packages.}\fi
\if1\blind{[Redacted for blind review]}\fi

\bigskip
\begin{center}
{\large\bf FUNDING}
\end{center}
\if0\blind{RD is currently employed at Novartis Pharmaceuticals Corporation. This work was primarily conducted while RD was at Carnegie Mellon University. RD's research was supported by the National Science Foundation Graduate Research Fellowship Program under Grant Nos.\ DGE 1252522 and DGE 1745016. AR's research is supported by the National Science Foundation under Grant Nos.\ DMS (CAREER) 1945266 and DMS 2310718. Any opinions, findings, and conclusions or recommendations expressed in this material are those of the authors and do not necessarily reflect the views of the National Science Foundation.}\fi
\if1\blind{[Redacted for blind review]}\fi

\bibliographystyle{jasa3}

\bibliography{references}

\begin{thebibliography}{42}
\newcommand{\enquote}[1]{``#1''}
\expandafter\ifx\csname natexlab\endcsname\relax\def\natexlab#1{#1}\fi
\expandafter\ifx\csname url\endcsname\relax
  \def\url#1{{\tt #1}}\fi
\expandafter\ifx\csname urlprefix\endcsname\relax\def\urlprefix{URL }\fi

\bibitem[\protect\citeauthoryear{An}{An}{1997}]{an1997log}
An, M.~Y. (1997), \enquote{{Log-Concave Probability Distributions: Theory and
  Statistical Testing},} {\em Duke University Dept of Economics Working
  Paper\/}.

\bibitem[\protect\citeauthoryear{Bagnoli and Bergstrom}{Bagnoli and
  Bergstrom}{2005}]{bagnoli2005}
Bagnoli, M. and Bergstrom, T. (2005), \enquote{{Log-Concave Probability and its
  Applications},} {\em Economic Theory\/}, 26, 445--469.

\bibitem[\protect\citeauthoryear{Barber and Samworth}{Barber and
  Samworth}{2021}]{barber2021local}
Barber, R.~F. and Samworth, R.~J. (2021), \enquote{Local continuity of
  log-concave projection, with applications to estimation under model
  misspecification,} {\em Bernoulli\/}, 27, 2437--2472.

\bibitem[\protect\citeauthoryear{Bowman}{Bowman}{1984}]{bowman1984alternative}
Bowman, A.~W. (1984), \enquote{{An Alternative Method of Cross-Validation for
  the Smoothing of Density Estimates},} {\em Biometrika\/}, 71, 353--360.

\bibitem[\protect\citeauthoryear{Carroll, Delaigle, and Hall}{Carroll
  et~al.}{2011}]{carroll2011testing}
Carroll, R.~J., Delaigle, A., and Hall, P. (2011), \enquote{{Testing and
  Estimating Shape-Constrained Nonparametric Density and Regression in the
  Presence of Measurement Error},} {\em Journal of the American Statistical
  Association\/}, 106, 191--202.

\bibitem[\protect\citeauthoryear{Chac{\'o}n and Duong}{Chac{\'o}n and
  Duong}{2010}]{chacon2010multivariate}
Chac{\'o}n, J.~E. and Duong, T. (2010), \enquote{{Multivariate Plug-In
  Bandwidth Selection with Unconstrained Pilot Bandwidth Matrices},} {\em
  Test\/}, 19, 375--398.

\bibitem[\protect\citeauthoryear{Chen, Mazumder, and Samworth}{Chen
  et~al.}{2021}]{chen2021new}
Chen, W., Mazumder, R., and Samworth, R. (2021), \enquote{{A New Computational
  Framework for Log-Concave Density Estimation},} {\em arXiv preprint
  arXiv:2105.11387\/}.

\bibitem[\protect\citeauthoryear{Chen and Samworth}{Chen and
  Samworth}{2013}]{chen2013smoothed}
Chen, Y. and Samworth, R.~J. (2013), \enquote{{Smoothed Log-Concave Maximum
  Likelihood Estimation with Applications},} {\em Statistica Sinica\/}, 23,
  1373--1398.

\bibitem[\protect\citeauthoryear{Cule, Gramacy, and Samworth}{Cule
  et~al.}{2009}]{cule2009logconcdead}
Cule, M., Gramacy, R., and Samworth, R. (2009), \enquote{{{LogConcDEAD}: An {R}
  Package for Maximum Likelihood Estimation of a Multivariate Log-Concave
  Density},} {\em Journal of Statistical Software\/}, 29, 1--20.

\bibitem[\protect\citeauthoryear{Cule and Samworth}{Cule and
  Samworth}{2010}]{cule2010theoretical}
Cule, M. and Samworth, R. (2010), \enquote{{Theoretical Properties of the
  Log-Concave Maximum Likelihood Estimator of a Multidimensional Density},}
  {\em Electronic Journal of Statistics\/}, 4, 254--270.

\bibitem[\protect\citeauthoryear{Cule, Samworth, and Stewart}{Cule
  et~al.}{2010}]{cule2010}
Cule, M., Samworth, R., and Stewart, M. (2010), \enquote{{Maximum Likelihood
  Estimation of a Multi-Dimensional Log-Concave Density},} {\em Journal of the
  Royal Statistical Society: Series B (Statistical Methodology)\/}, 72,
  545--607.

\bibitem[\protect\citeauthoryear{Delignette-Muller and
  Dutang}{Delignette-Muller and Dutang}{2015}]{fitdistrplus}
Delignette-Muller, M.~L. and Dutang, C. (2015), \enquote{{{fitdistrplus}: An
  {R} Package for Fitting Distributions},} {\em Journal of Statistical
  Software\/}, 64, 1--34.

\bibitem[\protect\citeauthoryear{Dowle and Srinivasan}{Dowle and
  Srinivasan}{2021}]{datatable}
Dowle, M. and Srinivasan, A. (2021), {\em {data.table: Extension of
  `data.frame'}\/}. R package version 1.14.2.

\bibitem[\protect\citeauthoryear{D{\"u}mbgen, H{\"u}sler, and
  Rufibach}{D{\"u}mbgen et~al.}{2007}]{dumbgen2007active}
D{\"u}mbgen, L., H{\"u}sler, A., and Rufibach, K. (2007), \enquote{{Active Set
  and {EM} Algorithms for Log-Concave Densities Based on Complete and Censored
  Data},} {\em arXiv preprint arXiv:0707.4643\/}.

\bibitem[\protect\citeauthoryear{D\"umbgen and Rufibach}{D\"umbgen and
  Rufibach}{2011}]{logcondens}
D\"umbgen, L. and Rufibach, K. (2011), \enquote{{{logcondens}: Computations
  Related to Univariate Log-Concave Density Estimation},} {\em Journal of
  Statistical Software\/}, 39, 1--28.

\bibitem[\protect\citeauthoryear{Duong}{Duong}{2021}]{ks}
Duong, T. (2021), {\em {ks: Kernel Smoothing}\/}. R package version 1.13.2.

\bibitem[\protect\citeauthoryear{Duong and Hazelton}{Duong and
  Hazelton}{2003}]{duong2003plug}
Duong, T. and Hazelton, M.~L. (2003), \enquote{{Plug-In Bandwidth Matrices for
  Bivariate Kernel Density Estimation},} {\em Journal of Nonparametric
  Statistics\/}, 15, 17--30.

\bibitem[\protect\citeauthoryear{Duong and Hazelton}{Duong and
  Hazelton}{2005}]{duong2005cross}
--- (2005), \enquote{{Cross-Validation Bandwidth Matrices for Multivariate
  Kernel Density Estimation},} {\em Scandinavian Journal of Statistics\/}, 32,
  485--506.

\bibitem[\protect\citeauthoryear{Genz and Bretz}{Genz and
  Bretz}{2009}]{mvtnormbook}
Genz, A. and Bretz, F. (2009), {\em {Computation of Multivariate Normal and t
  Probabilities}\/}, Lecture Notes in Statistics, Heidelberg: Springer-Verlag.

\bibitem[\protect\citeauthoryear{Genz, Bretz, Miwa, Mi, Leisch, Scheipl, and
  Hothorn}{Genz et~al.}{2021}]{mvtnorm}
Genz, A., Bretz, F., Miwa, T., Mi, X., Leisch, F., Scheipl, F., and Hothorn, T.
  (2021), {\em {{mvtnorm}: Multivariate Normal and t Distributions}\/}. R
  package version 1.1-3.

\bibitem[\protect\citeauthoryear{Hazelton}{Hazelton}{2011}]{hazelton2011assessing}
Hazelton, M.~L. (2011), \enquote{{Assessing Log-Concavity of Multivariate
  Densities},} {\em Statistics \& Probability Letters\/}, 81, 121--125.

\bibitem[\protect\citeauthoryear{Jones, Marron, and Park}{Jones
  et~al.}{1991}]{jones1991simple}
Jones, M., Marron, J.~S., and Park, B.~U. (1991), \enquote{{A Simple Root $n$
  Bandwidth Selector},} {\em The Annals of Statistics\/}, 19, 1919--1932.

\bibitem[\protect\citeauthoryear{Kappel and Kuntsevich}{Kappel and
  Kuntsevich}{2000}]{kappel2000implementation}
Kappel, F. and Kuntsevich, A.~V. (2000), \enquote{{An Implementation of
  {Shor's} r-Algorithm},} {\em Computational Optimization and Applications\/},
  15, 193--205.

\bibitem[\protect\citeauthoryear{Kim and Samworth}{Kim and
  Samworth}{2016}]{kim2016global}
Kim, A.~K. and Samworth, R.~J. (2016), \enquote{{Global Rates of Convergence in
  Log-Concave Density Estimation},} {\em Annals of Statistics\/}, 44,
  2756--2779.

\bibitem[\protect\citeauthoryear{Koenker and Mizera}{Koenker and
  Mizera}{2018}]{koenker2018shape}
Koenker, R. and Mizera, I. (2018), \enquote{{Shape Constrained Density
  Estimation via Penalized {R}{\'e}nyi Divergence},} {\em Statistical
  Science\/}, 33, 510--526.

\bibitem[\protect\citeauthoryear{Kur, Dagan, and Rakhlin}{Kur
  et~al.}{2019}]{kur2019optimality}
Kur, G., Dagan, Y., and Rakhlin, A. (2019), \enquote{{Optimality of Maximum
  Likelihood for Log-Concave Density Estimation and Bounded Convex
  Regression},} {\em arXiv preprint arXiv:1903.05315\/}.

\bibitem[\protect\citeauthoryear{Nagler and Vatter}{Nagler and
  Vatter}{2020}]{kde1d}
Nagler, T. and Vatter, T. (2020), {\em {kde1d: Univariate Kernel Density
  Estimation}\/}. R package version 1.13.2.

\bibitem[\protect\citeauthoryear{Nystrom, Levine, Roskies, and Scott}{Nystrom
  et~al.}{2015}]{xsedebridges}
Nystrom, N.~A., Levine, M.~J., Roskies, R.~Z., and Scott, J.~R. (2015),
  \enquote{{Bridges: A Uniquely Flexible {HPC} Resource for New Communities and
  Data Analytics},} in {\em Proceedings of the 2015 XSEDE Conference:
  Scientific Advancements Enabled by Enhanced Cyberinfrastructure\/}, XSEDE
  '15, New York, NY, USA: Association for Computing Machinery.

\bibitem[\protect\citeauthoryear{Pr{\'e}kopa}{Pr{\'e}kopa}{1973}]{prekopa1973logarithmic}
Pr{\'e}kopa, A. (1973), \enquote{{On Logarithmic Concave Measures and
  Functions},} {\em Acta Scientiarum Mathematicarum\/}, 34, 335--343.

\bibitem[\protect\citeauthoryear{{R Core Team}}{{R Core Team}}{2021}]{Rcore}
{R Core Team} (2021), {\em {R: A Language and Environment for Statistical
  Computing}\/}, R Foundation for Statistical Computing, Vienna, Austria.

\bibitem[\protect\citeauthoryear{Rockafellar}{Rockafellar}{1997}]{rockafellar1997convex}
Rockafellar, R.~T. (1997), {\em Convex analysis. Reprint of the 1970
  original\/}, Princeton University Press, Princeton, New Jersey.

\bibitem[\protect\citeauthoryear{Rudemo}{Rudemo}{1982}]{rudemo1982empirical}
Rudemo, M. (1982), \enquote{{Empirical Choice of Histograms and Kernel Density
  Estimators},} {\em Scandinavian Journal of Statistics\/}, 9, 65--78.

\bibitem[\protect\citeauthoryear{Samworth}{Samworth}{2018}]{samworth2018recent}
Samworth, R.~J. (2018), \enquote{Recent progress in log-concave density
  estimation,} {\em Statistical Science\/}, 33, 493--509.

\bibitem[\protect\citeauthoryear{Schubert}{Schubert}{2019}]{clustermq}
Schubert, M. (2019), \enquote{{clustermq Enables Efficient Parallelisation of
  Genomic Analyses},} {\em Bioinformatics\/}, 35, 4493--4495.

\bibitem[\protect\citeauthoryear{Scrucca, Fop, Murphy, and Raftery}{Scrucca
  et~al.}{2016}]{mclust}
Scrucca, L., Fop, M., Murphy, T.~B., and Raftery, A.~E. (2016),
  \enquote{{{mclust} 5: Clustering, Classification and Density Estimation Using
  {G}aussian Finite Mixture Models},} {\em The {R} Journal\/}, 8, 289--317.

\bibitem[\protect\citeauthoryear{Shor}{Shor}{2012}]{shor2012minimization}
Shor, N.~Z. (2012), {\em {Minimization Methods for Non-Differentiable
  Functions}\/}, volume~3, Springer Science \& Business Media.

\bibitem[\protect\citeauthoryear{Towns, Cockerill, Dahan, Foster, Gaither,
  Grimshaw, Hazlewood, Lathrop, Lifka, Peterson, Roskies, Scott, and
  Wilkins-Diehr}{Towns et~al.}{2014}]{xsede}
Towns, J., Cockerill, T., Dahan, M., Foster, I., Gaither, K., Grimshaw, A.,
  Hazlewood, V., Lathrop, S., Lifka, D., Peterson, G.~D., Roskies, R., Scott,
  J.~R., and Wilkins-Diehr, N. (2014), \enquote{{XSEDE: Accelerating Scientific
  Discovery},} {\em Computing in Science \& Engineering\/}, 16, 62--74.

\bibitem[\protect\citeauthoryear{Venables and Ripley}{Venables and
  Ripley}{2002}]{mass}
Venables, W.~N. and Ripley, B.~D. (2002), {\em {Modern Applied Statistics with
  S}\/}, New York: Springer, fourth edition. ISBN 0-387-95457-0.

\bibitem[\protect\citeauthoryear{Wand and Jones}{Wand and
  Jones}{1994}]{wand1994multivariate}
Wand, M.~P. and Jones, M.~C. (1994), \enquote{{Multivariate Plug-In Bandwidth
  Selection},} {\em Computational Statistics\/}, 9, 97--116.

\bibitem[\protect\citeauthoryear{Wasserman, Ramdas, and Balakrishnan}{Wasserman
  et~al.}{2020}]{wasserman2020universal}
Wasserman, L., Ramdas, A., and Balakrishnan, S. (2020), \enquote{{Universal
  Inference},} {\em Proceedings of the National Academy of Sciences\/}, 117,
  16880--16890.

\bibitem[\protect\citeauthoryear{Wickham, Averick, Bryan, Chang, McGowan,
  François, Grolemund, Hayes, Henry, Hester, Kuhn, Pedersen, Miller, Bache,
  Müller, Ooms, Robinson, Seidel, Spinu, Takahashi, Vaughan, Wilke, Woo, and
  Yutani}{Wickham et~al.}{2019}]{tidyverse}
Wickham, H., Averick, M., Bryan, J., Chang, W., McGowan, L.~D., François, R.,
  Grolemund, G., Hayes, A., Henry, L., Hester, J., Kuhn, M., Pedersen, T.~L.,
  Miller, E., Bache, S.~M., Müller, K., Ooms, J., Robinson, D., Seidel, D.~P.,
  Spinu, V., Takahashi, K., Vaughan, D., Wilke, C., Woo, K., and Yutani, H.
  (2019), \enquote{{Welcome to the tidyverse},} {\em Journal of Open Source
  Software\/}, 4, 1686.

\bibitem[\protect\citeauthoryear{Wong and Shen}{Wong and
  Shen}{1995}]{wong1995probability}
Wong, W.~H. and Shen, X. (1995), \enquote{{Probability Inequalities for
  Likelihood Ratios and Convergence Rates of Sieve {MLEs}},} {\em The Annals of
  Statistics\/}, 23, 339--362.

\end{thebibliography}

\newpage

\appendix

\setcounter{page}{1}

\section{Proofs of Theoretical Results} \label{app:proofs}

We provide proofs for theoretical statements made in the main text. For convenience, statements are reproduced. 

\subsection{Validity of the Universal Likelihood Ratio Test} \label{app:validity}

\thmValidNonpar*

\begin{proof}
This result is due to \cite{wasserman2020universal}. First, we use only the data $\{Y_i: i \in \mathcal{D}_1\}$ to fit a density $\hat{f}_1$. Let $\mathcal{M}^*$ be the support of the distribution $P^*$ with density $f^*$, and let $\hat{\mathcal{M}}_1$ be the support of the distribution with density $\hat{f}_1$. We see
\begin{align*}
\mathbb{E}_{P^*}&\left[T_n(f^*) \mid \{Y_i\}_{i\in\mathcal{D}_1}\right] = \mathbb{E}_{P^*}\left[\frac{\mathcal{L}_0(\hat{f}_1)}{\mathcal{L}_0(f^*)} \: \Bigg| \: \{Y_i\}_{i\in\mathcal{D}_1} \right] = \mathbb{E}_{P^*}\left[\prod_{i\in\mathcal{D}_0} \frac{\hat{f}_1(Y_i)}{f^*(Y_i)} \: \Bigg| \: \{Y_i\}_{i\in\mathcal{D}_1} \right] \\
&\overset{iid}{=} \prod_{i\in\mathcal{D}_0} \mathbb{E}_{P^*}\left[\frac{\hat{f}_1(Y_i)}{f^*(Y_i)} \: \Bigg| \: \{Y_i\}_{i\in\mathcal{D}_1} \right] = \prod_{i \in \mathcal{D}_0} \left\{\int_{\mathcal{M}^*} \frac{\hat{f}_1(y_i)}{f^*(y_i)} f^*(y_i) dy_i \right\} = \prod_{i \in \mathcal{D}_0} \left\{\int_{\mathcal{M}^*} \hat{f}_1(y_i) dy_i \right\} \\
&\leq  \prod_{i\in\mathcal{D}_0} \left\{\int_{\hat{\mathcal{M}}_1} \hat{f}_1(y_i) dy_i \right\} = 1.
\end{align*}
This implies that $\mathbb{E}_{P^*}[T_n(f^*)] = \mathbb{E}_{P^*} \left[ \mathbb{E}_{P^*}\left[ T_n(f^*) \mid \{Y_i\}_{i\in\mathcal{D}_1} \right] \right] \leq 1$. Furthermore, recall that $T_n(f^*) = \mathcal{L}_0(\hat{f}_1) / \mathcal{L}_0(f^*)$ and $T_n(\hat{f}_0) = \mathcal{L}_0(\hat{f}_1) / \mathcal{L}_0(\hat{f}_0)$, where $\hat{f}_0 = \argmax{f\in\mathcal{F}_d} \mathcal{L}_0(f)$. Thus, under $H_0: f^*\in\mathcal{F}_d$, it holds that $T_n(\hat{f}_0) \leq T_n(f^*)$ and $\mathbb{E}_{P^*}[T_n(\hat{f}_0)] \leq 1$.  

Applying Markov's inequality and the above fact, under $H_0$,
\begin{equation*}
\mathbb{P}_{P^*}\left(T_n(\hat{f}_0) \geq 1/\alpha \right) \leq \alpha \mathbb{E}_{P^*}[T_n(\hat{f}_0)] \leq \alpha.
\end{equation*}

\end{proof}

\subsection{Non-log-concave Gaussian mixture distributions with log-concave projections}  \label{app:gaussianproj}

We show that there exist $d$-dimensional Gaussian mixture densities that are not log-concave but whose lower dimensional projections are all log-concave.  This result heavily exploits the property that full rank affine transformations preserve log-concavity or lack thereof. This follows directly from the properties of concave functions, and we state it here for completeness. 
\begin{lem} \label{lem:affine_full_rank_preserves}
    Let $A \in \mathbb{R}^{d \times d}$ be any full rank matrix, and $b \in \mathbb{R}^d$ any vector. A random vector $X$ with a density in $\mathbb{R}^d$ is log-concave if and only if $AX + b$ is log-concave.  
\end{lem} 
\begin{proof}
    Suppose that $X\in\mathbb{R}^d$ is a random vector with a log-concave density. Then we can write its density as $p_X(x) = \exp(\varphi(x))$ for some concave function $\varphi$. Denoting $Y =  A X + b,$ we observe that \[ p_Y(y) = |\det A|^{-1} p_X( A^{-1}(y - b)) = |\det A|^{-1} \exp(\varphi( A^{-1}(y-b))). \] But the composition of concave and linear functions is concave, and so $p_Y$ is of the required form for $Y$ to have a log-concave density. In addition, $X = A^{-1}Y - A^{-1}b,$ and so a similar argument applies in the reverse.
\end{proof}

We now construct the family of densities that we use to prove the key result. Let \[ \gamma_{d, \sigma}(x) := \frac{1}{(2\pi \sigma^2)^{d/2}} \exp\left( - \|x\|^2/2\sigma^2\right)\] denote the isotropic centered $d$-dimensional Gaussian density with variance $\sigma^2 I_d$. We are concerned with the behavior of the family 
\begin{align*}
    \mathcal{G} &:= \{ f_{d, \sigma}(x) : d \in \mathbb{N}, \sigma > 1\}, \textrm{ where }
    f_{d,\sigma}(x) := \frac12 ( \gamma_{d, 1}(x) + \gamma_{d, \sigma}(x)).
\end{align*}
For this family, we show the result of Theorem~\ref{thm:prop_gaussian}. \begin{thm}\label{thm:prop_gaussian}
For any natural number $d \geq 2$, there exists some $\sigma > 1$ such that (a) $f = f_{d,\sigma}$ is not log-concave and (b) if $X \sim f$ and $P: \mathbb{R}^d \to \mathbb{R}^j$ is any surjective affine map, then $PX$ has a log-concave law if and only if $j < d$.
\end{thm}

\begin{proof}
    
Firstly note that any affine projection can be seen as a composition of a translation and a projection. Since translations are affine maps preserving the dimension, by Lemma \ref{lem:affine_full_rank_preserves} it suffices to consider linear projections. Fix natural numbers $d\geq 2$ and $j < d$. In the following, we abuse notation and treat $P$ as a $\mathbb{R}^{j \times d}$ matrix with full row rank. 
    
Next, observe that if \( X \sim f_{d, \sigma},\) then due to the full row rank of $P$, $Y := PX$ is distributed as \[ Y \sim \frac12 \left(N(0, PP^T) + N(0, \sigma^2 PP^T)\right). \] This follows from the standard properties of multivariate Gaussians and the representation of mixtures using independent components. That is, we can write $X = B G + (1-B) \sigma H$  for mutually independent standard Gaussians $G,H$ and a fair bit $B \sim \text{Bernoulli}(1/2)$. Then $PX$ follows the stated distribution of $Y$.
    
$P$ is full row rank, so there exists some matrix $M \in \mathbb{R}^{j \times j}$ (specifically the canonical inverse square root $(P P^T)^{-1/2}$) of full rank such that \[ Z:= MY \sim \frac{1}{2} \left( N(0, I_j) + N(0, \sigma^2 I_j)\right).\] That is, $Z \sim f_{j,\sigma}$. 
(If $d = 2$, $j = 1$, and $P$ is a $1\times 2$ matrix whose squared entries sum to 1, then $PP^T = 1$ which means that $M = 1$.) Since $M$ is full rank, by Lemma \ref{lem:affine_full_rank_preserves}, $Y$ is log-concave if and only if $Z$ is log concave. Thus, to argue our claim, it suffices to prove that there exists $\sigma > 1$ such that $f_{d,\sigma}$ is not log-concave but $f_{j,\sigma}$ is log-concave.
    
The claim now follows by exploiting the following characterization of $f_{k, \sigma}$, $k\in\mathbb{N}$.   
    
\begin{lem}\label{lem:gaussian_analysis}
    There exists a strictly decreasing sequence $\{\tau_k\}_{k \in \mathbb{N}}$ taking values in $(1,\infty)$ such that $f_{k, \sigma}$ is log-concave if and only if $\sigma \le \tau_k.$
\end{lem}
    
Taking Lemma~\ref{lem:gaussian_analysis} as given, we can select any $\sigma \in (\tau_{d}, \tau_{d-1})$. This interval must exist since the sequence $\{\tau_k\}_{k \in \mathbb{N}}$ is strictly decreasing. At this choice of $\sigma$, $f_{d,\sigma}$ is not log-concave, but $f_{j,\sigma}$ is log-concave since $\sigma < \tau_{d-1} \leq \tau_j$.
\end{proof}

The argument thus concludes with the proof of Lemma \ref{lem:gaussian_analysis}, with which we now proceed.

\begin{proof}[Proof of Lemma \ref{lem:gaussian_analysis}\newline]

\textbf{Step 1:} \textit{Establish the form of the Hessian $h_{j,\sigma}(x)$ of $\log f_{j, \sigma}(x)$. $f_{j,\sigma}$ is log-concave if and only if this Hessian is negative semi-definite for all $x\in\mathbb{R}^j$.}

Since the density $f_{j,\sigma}$ is smooth, its log-concavity is determined by the Hessian $h_{j, \sigma}$ of its logarithm. Let $\nabla^2 f_{j,\sigma}(x)$ denote the Hessian matrix of $f_{j,\sigma}(x)$, and let $\nabla f_{j,\sigma}(x)$ denote the gradient vector of $f_{j,\sigma}(x)$. Then the Hessian of $\log f_{j,\sigma}$ is given by
$$h_{j, \sigma}(x) :=  \left[f_{j,\sigma}(x) \nabla^2 f_{j,\sigma}(x) - \nabla f_{j,\sigma}(x) \nabla f_{j,\sigma}(x)^T\right] / f_{j,\sigma}^2(x).$$
Thus, $f_{j,\sigma}$ is log-concave if and only if $h_{j, \sigma}(x)$ is negative semi-definite for all $x \in \mathbb{R}^j$.

\textbf{Step 2:} \textit{Construct another matrix that is negative semidefinite if and only if $h_{j,\sigma}$ is negative semidefinite.}

Observe that $\gamma_{j,\sigma}(x)$ has the following gradient vector $\nabla \gamma_{j,\sigma}(x)$ and Hessian matrix $\nabla^2 \gamma_{j, \sigma}(x)$:
\begin{align*}
    \nabla \gamma_{j,\sigma}(x) &= - \sigma^{-2} \gamma_{j,\sigma}(x) x\\
    \nabla^2 \gamma_{j, \sigma}(x) &= \sigma^{-4} \gamma_{j, \sigma}(x) xx^T - \sigma^{-2} \gamma_{j,\sigma}(x) I_j.
\end{align*}  
By linearity of differentiation, we see that 
\begin{align*}
   4 f_{j,\sigma}^2(x) h_{j, \sigma}(x) &= 4 \left[f_{j,\sigma}(x) \nabla^2 f_{j,\sigma}(x) - \nabla f_{j,\sigma}(x) \nabla f_{j,\sigma}(x)^T\right] \\
   &= 4 [\left\{\gamma_{j,1}(x)/2 + \gamma_{j,\sigma}(x)/2 \right\} \times \\
   &\qquad \left\{\gamma_{j, 1}(x) xx^T/2 - \gamma_{j,1}(x) I_j/2 + \sigma^{-4} \gamma_{j, \sigma}(x) xx^T/2 - \sigma^{-2} \gamma_{j,\sigma}(x) I_j/2 \right\} ] - \\
   &\qquad 4[\left\{- \gamma_{j,1}(x) x/2 - \sigma^{-2} \gamma_{j,\sigma}(x) x/2\right\} \left\{- \gamma_{j,1}(x) x^T/2 - \sigma^{-2} \gamma_{j,\sigma}(x) x^T/2 \right\}] \\
   &= \left\{\gamma_{j,1}(x) + \gamma_{j,\sigma}(x)\right\} \left\{\gamma_{j, 1}(x) + \sigma^{-4} \gamma_{j,\sigma}(x)\right\} xx^T - \\ 
   &\qquad \left\{\gamma_{j,1}(x) + \gamma_{j,\sigma}(x)\right\} \left\{\gamma_{j,1}(x) + \sigma^{-2} \gamma_{j,\sigma}(x)\right\} I_j - \left\{\gamma_{j, 1}(x) + \sigma^{-2} \gamma_{j, \sigma}(x)\right\}^2 xx^T.
\end{align*}
Hence,
\begin{align}
  &\frac{4 (1-\sigma^{-2})^{-2}}{\gamma_{j,1}(x) \gamma_{j,\sigma}(x)} f_{j,\sigma}^2(x) h_{j, \sigma}(x) \notag \\
  &= \frac{(1-\sigma^{-2})^{-2}}{\gamma_{j,1}(x) \gamma_{j,\sigma}(x)} \left[\left\{\gamma_{j,1}(x) + \gamma_{j,\sigma}(x)\right\} \left\{\gamma_{j, 1}(x) + \sigma^{-4} \gamma_{j,\sigma}(x)\right\} - \left\{\gamma_{j, 1}(x) + \sigma^{-2} \gamma_{j, \sigma}(x)\right\}^2 \right] xx^T \notag \\
  &\qquad - (1-\sigma^{-2})^{-2}\frac{\left\{\gamma_{j,1}(x) + \gamma_{j,\sigma}(x)\right\} \left\{\gamma_{j,1}(x) + \sigma^{-2}\gamma_{j,\sigma}(x) \right\}}{\gamma_{j,1}(x) \gamma_{j,\sigma}(x)} I_j \notag \\
  &= \frac{(1-\sigma^{-2})^{-2}}{\gamma_{j,1}(x) \gamma_{j,\sigma}(x)} \Big[\gamma_{j,1}^2(x) + \sigma^{-4} \gamma_{j,1}(x) \gamma_{j,\sigma}(x) + \gamma_{j,1}(x) \gamma_{j,\sigma}(x) + \sigma^{-4} \gamma_{j,\sigma}^2(x) - \notag \\
  &\hspace*{8em} \gamma_{j,1}^2(x) - 2\sigma^{-2}\gamma_{j,1}(x) \gamma_{j,\sigma}(x) - \sigma^{-4} \gamma_{j,\sigma}^2(x) \Big] xx^T \notag \\
  &\qquad - (1-\sigma^{-2})^{-2}\frac{\left\{\gamma_{j,1}(x) + \gamma_{j,\sigma}(x)\right\} \left\{\gamma_{j,1}(x) + \sigma^{-2}\gamma_{j,\sigma}(x) \right\}}{\gamma_{j,1}(x) \gamma_{j,\sigma}(x)} I_j \notag \\ 
  &= \frac{(1-\sigma^{-2})^{-2}}{\gamma_{j,1}(x) \gamma_{j,\sigma}(x)} \Big[\sigma^{-4} \gamma_{j,1}(x) \gamma_{j,\sigma}(x) + \gamma_{j,1}(x) \gamma_{j,\sigma}(x) - 2\sigma^{-2}\gamma_{j,1}(x) \gamma_{j,\sigma}(x) \Big] xx^T \notag \\
  &\qquad - (1-\sigma^{-2})^{-2}\frac{\left\{\gamma_{j,1}(x) + \gamma_{j,\sigma}(x)\right\} \left\{\gamma_{j,1}(x) + \sigma^{-2}\gamma_{j,\sigma}(x) \right\}}{\gamma_{j,1}(x) \gamma_{j,\sigma}(x)} I_j \notag \\
  &= xx^T - (1-\sigma^{-2})^{-2}\frac{\left\{\gamma_{j,1}(x) + \gamma_{j,\sigma}(x)\right\} \left\{\gamma_{j,1}(x) + \sigma^{-2}\gamma_{j,\sigma}(x) \right\}}{\gamma_{j,1}(x) \gamma_{j,\sigma}(x)} I_j. \label{eq:h_times_factor}
\end{align} 
    
The factor multiplying $h_{j, \sigma}(x)$ above is non-negative. Therefore, the matrix in expression~\ref{eq:h_times_factor} is negative semidefinite if and only if $h_{j, \sigma}(x)$ is negative semidefinite. 

\textbf{Step 3:} \textit{Determine a necessary and sufficient condition for all eigenvalues of this matrix to be non-positive (i.e., for $h_{j,\sigma}$ to be negative semidefinite).}

We can observe the spectrum of the matrix from expression~\ref{eq:h_times_factor}. Since $xx^T$ has the spectrum $\|x\|^2$ once and $0$ $j-1$ times, the spectrum of this matrix consists of $\tilde A_{j,\sigma}(x)$ once and $-\tilde B_{j,\sigma}(x)$ $j-1$ times, where 
\begin{align*} 
\tilde{A}_{j, \sigma}(x) &= \|x\|^2 - \tilde B_{j,\sigma}(x) \\ 
\tilde B_{j, \sigma}(x) &=(1-\sigma^{-2})^{-2}\frac{(\gamma_{j,1}(x) + \gamma_{j,\sigma}(x)) ( \gamma_{j,1}(x) + \sigma^{-2}\gamma_{j,\sigma}(x) )}{\gamma_{j,1}(x) \gamma_{j,\sigma}(x)}. 
\end{align*}
    
The matrix in expression~\ref{eq:h_times_factor} is negative semi-definite if and only if all of its eigenvalues are non-positive. Since $-\tilde B_{j, \sigma}(x)$ is evidently negative, we conclude that $f_{j,\sigma}$ is log-concave if and only if $\tilde{A}_{j,\sigma}(x) \le 0$ for every $x$. The subsequent argument develops the range of $\sigma$ in which this holds. To this end, first observe that both $\tilde{A}_{j,\sigma}$ and $\tilde{B}_{j,\sigma}$ depend on $x$ only through $\|x\|^2$. Denoting $\rho = \|x\|^2$ and expanding out the ratio of Gaussian densities above, we may define 
\begin{align*}
    A_{j,\sigma}(\rho) &= \rho - B_{j,\sigma}(\rho) \\
    B_{j,\sigma}(\rho) &= (1-\sigma^{-2})^{-2} (e^{\rho/2} + \sigma^j e^{\rho/2\sigma^2})(e^{-\rho/2} + \sigma^{-2-j} e^{-\rho/2\sigma^2}).
\end{align*}
For $\rho = \|x\|^2$, $B_{j,\sigma}(\rho) = \tilde B_{j, \sigma}(x)$ because 
\begin{align*}
    \tilde B_{j, \sigma}(x) &= (1-\sigma^{-2})^{-2} \: \frac{(\gamma_{j,1}(x) + \gamma_{j,\sigma}(x)) ( \gamma_{j,1}(x) + \sigma^{-2}\gamma_{j,\sigma}(x) )}{\gamma_{j,1}(x) \gamma_{j,\sigma}(x)} \\
    &= (1-\sigma^{-2})^{-2} \: \frac{\exp(-\|x\|^2/2) + \sigma^{-j}\exp(-\|x\|^2/2\sigma^2)  }{\sigma^{-j} \exp(-\|x\|^2/2) \exp(-\|x\|^2/2\sigma^2)} \times \\
    &\qquad \left\{\exp(-\|x\|^2/2) + \sigma^{-2-j}\exp(-\|x\|^2/2\sigma^2)\right\} \\
    &= (1-\sigma^{-2})^{-2} \: \frac{\left\{\exp(-\rho/2) + \sigma^{-j}\exp(-\rho/2\sigma^2)\right\} \left\{\exp(-\rho/2) + \sigma^{-2-j}\exp(-\rho/2\sigma^2)\right\}}{\sigma^{-j} \exp(-\rho/2) \exp(-\rho/2\sigma^2)} \\
    &= (1-\sigma^{-2})^{-2} \left\{\sigma^j e^{\rho/2\sigma^2} + e^{\rho/2} \right\} \left\{e^{-\rho/2} + \sigma^{-2-j}e^{-\rho/2\sigma^2}\right\} \\
    &= B_{j,\sigma}(\rho).
\end{align*}
Then $A_{j,\sigma}(\rho)$ is clearly equal to $\tilde{A}_{j, \sigma}(x)$ for $\rho = \|x\|^2$.

Again, $f_{j,\sigma}$ is log concave if and only if for all $\rho \ge 0,$ $A_{j,\sigma}(\rho) \le 0,$ or equivalently if and only if $M(j,\sigma) \le 0,$ where \[ M(j,\sigma) := \sup_{\rho \ge 0} A_{j,\sigma}(\rho). \]

\textbf{Step 4:} \textit{Derive an explicit expression for $M(j, \sigma)$. The density $f_{j,\sigma}$ is log-concave if and only if $M(j, \sigma) \leq 0$.}

Let us first simplify $B_{j,\sigma}(\rho)$ as 
\begin{align*} 
B_{j,\sigma}(\rho) &= (1-\sigma^{-2})^{-2} \left( 1 + \sigma^{-2} + \sigma^{-1} \left[\sigma^{j+1} e^{-\rho(1-\sigma^{-2})/2} + \sigma^{(-j-1)} e^{\rho(1-\sigma^{-2})/2} \right] \right)\\ 
&= (1-\sigma^{-2})^{-2} \left( 1 + \sigma^{-2} + \sigma^{-1} \left[e^{-\rho (1-\sigma^{-2})/2 + (j+1)\log\sigma} + e^{\rho (1-\sigma^{-2})/2 - (j+1)\log\sigma} \right]  \right)\\ 
&= (1-\sigma^{-2})^{-2} \left( 1 + \sigma^{-2} + 2\sigma^{-1} \text{cosh}\left( \rho(1-\sigma^{-2})/2 - (j+1) \log \sigma \right) \right). 
\end{align*}
    
Notice that $B_{j,\sigma}$ is a convex function of $\rho,$ and thus $A_{j,\sigma}$ is a concave function of $\rho$. Therefore, $A_{j,\sigma}$ admits a unique maximum $\rho^*(j,\sigma),$ which satisfies the equation 
$$\frac{\partial}{\partial \rho} A_{j,\sigma}(\rho) \equiv 1 - (1-\sigma^{-2})^{-2}  \cdot 2 \sigma^{-1} \cdot \frac{1-\sigma^{-2}}{2} \cdot \text{sinh}(\rho^*(j,\sigma) (1-\sigma^{-2})/2 - (j+1) \log \sigma) = 0.$$
We can see that 
$$\rho^*(j,\sigma) =  \frac{2(j+1) \log \sigma}{1-\sigma^{-2}} +  \frac{2}{1-\sigma^{-2}} \text{arcsinh}( \sigma - \sigma^{-1})$$ 
solves this expression (where $\text{arcsinh}$ is the inverse of the $\text{sinh}$ function) because 
\begin{align*}
    1 &- (1-\sigma^{-2})^{-2}  \cdot 2 \sigma^{-1} \cdot \frac{1-\sigma^{-2}}{2} \cdot \text{sinh}(\rho^*(j,\sigma) (1-\sigma^{-2})/2 - (j+1) \log \sigma) \\
    &= 1 - (1-\sigma^{-2})^{-1}  \cdot \sigma^{-1} \cdot \text{sinh}((j + 1) \log\sigma + \text{arcsinh}(\sigma - \sigma^{-1}) - (j+1) \log \sigma) \\
    &= 1 - (1-\sigma^{-2})^{-1}  \cdot \sigma^{-1} \cdot (\sigma - \sigma^{-1}) \\
    &= 1 - (1-\sigma^{-2})^{-1} \cdot (1 - \sigma^{-2}) \\
    &= 0.
\end{align*}

Observe further that since $\text{cosh}^2(x) = 1 +\text{sinh}^2(x),$ we can also infer that  
\begin{align*}
   \text{cosh}( \rho^*(j,\sigma) (1-\sigma^{-2})/2 -(j+1)\log \sigma) &=  \sqrt{ 1 + \text{sinh}^2(\rho^*(j,\sigma) (1-\sigma^{-2})/2 -(j+1)\log \sigma)} \\
   &= \sqrt{ 1 + (\sigma - \sigma^{-1})^2}.  
\end{align*}
Plugging $\rho^*(j,\sigma)$ into $A_{j,\sigma},$ we conclude that \[ M(j,\sigma) = \frac{2(j+1) \log \sigma}{1-\sigma^{-2}} +  \frac{2\text{arcsinh}( \sigma - \sigma^{-1})}{1-\sigma^{-2}}  - \frac{ \left( 1 + \sigma^{-2} + 2\sigma^{-1} \sqrt{1 + (\sigma - \sigma^{-1})^2}\right)}{(1-\sigma^{-2})^{2}}.
    \]

\textbf{Step 5:} \textit{Show that $M(j, \sigma)$ strictly increases with $\sigma$ for $\sigma > 1$ by determining that the derivatives of each term with respective to $\sigma$ are positive.}

To show that $M(j, \sigma)$ strictly increases with $\sigma$ for $\sigma > 1$, we consider the derivatives of the three terms of $M(j,\sigma)$ with respect to $\sigma$. The derivative of the first term is 
\begin{align*}
    \frac{\partial}{\partial \sigma} \frac{2(j+1) \log \sigma}{1-\sigma^{-2}} &= 2(j+1) \frac{(1-\sigma^{-2})(1/\sigma) - (\log \sigma) (2\sigma^{-3})}{(1 - \sigma^{-2})^2} \\
    &= 2(j+1) \frac{\sigma^4 (1-\sigma^{-2})(1/\sigma) - \sigma^4 (\log \sigma) (2\sigma^{-3})}{(\sigma^2 - 1)^2} \\
    &= 2(j+1) \frac{\sigma^3 - \sigma - 2\sigma \log \sigma }{(\sigma^2 - 1)^2} \\
    &= 2(j+1) \frac{\sigma (\sigma^2 - 2 \log \sigma - 1)}{(\sigma^2 - 1)^2},
\end{align*}
which is positive for $\sigma > 1$. The derivative of the second term is
\begin{align}
   \frac{\partial}{\partial \sigma}  &\frac{2\text{arcsinh}( \sigma - \sigma^{-1})}{1-\sigma^{-2}} \notag \\
   &= 2 \frac{(1-\sigma^{-2})((\sigma - \sigma^{-1})^2 + 1)^{-1/2} (1 + \sigma^{-2}) - 2\sigma^{-3} \text{arcsinh}(\sigma - \sigma^{-1}) }{(1-\sigma^{-2})^2} \notag \\
   &= 2 \frac{(\sigma^{4}-1)(\sigma^2 - 2 + \sigma^{-2} + 1)^{-1/2} - 2\sigma \text{arcsinh}(\sigma - \sigma^{-1}) }{(\sigma^2 - 1)^2} \notag \\
   &= 2 \frac{(\sigma^{4}-1)(\sigma^2 + \sigma^{-2} - 1)^{-1/2}  (\sigma^2 + \sigma^{-2} - 1)^{1/2} - 2\sigma (\sigma^2 + \sigma^{-2} - 1)^{1/2} \text{arcsinh}(\sigma - \sigma^{-1}) }{(\sigma^2 - 1)^2 (\sigma^2 + \sigma^{-2} - 1)^{1/2}} \notag \\
   &= 2 \frac{\sigma^{4} - 1 - 2 (\sigma^4 - \sigma^2 + 1)^{1/2} \text{arcsinh}(\sigma - \sigma^{-1}) }{(\sigma^2 - 1)^2 (\sigma^2 + \sigma^{-2} - 1)^{1/2}}. \label{eq:M_term2}
\end{align}
The denominator of this expression is clearly positive. We can also show that the numerator is positive. For the derivations that follow, we will always assume that $\sigma > 1$. First, we know that $(\sigma^4 - \sigma^2 + 1)^{1/2} < \sigma^2$. Second, note that
$$\sigma^4 - 2\sigma^3 + 2\sigma - 1 = (\sigma-1)^3 (\sigma + 1) > 0.$$ 
Dividing both sides by $2\sigma^2$ and rearranging, we see that
$$\text{sinh}(2\log\sigma) = \sigma^2/2 - \sigma^{-2}/2 > \sigma - \sigma^{-1}.$$
Since arcsinh is monotone increasing, this implies that $2\log\sigma > \text{arcsinh}(\sigma - \sigma^{-1})$. In addition, $\log \sigma < \sigma - 1$. Returning to the numerator of expression~\ref{eq:M_term2}, we see that
\begin{align*}
\sigma^{4} &- 1 - 2 (\sigma^4 - \sigma^2 + 1)^{1/2} \text{arcsinh}(\sigma - \sigma^{-1}) \\
&> \sigma^{4} - 1 - 2\sigma^2 \text{arcsinh}(\sigma - \sigma^{-1}) \\
&> \sigma^4 - 1 - 4\sigma^2 \log\sigma \\
&> 0.
\end{align*}
It is possible to show the final inequality by verifying that the derivative of $\sigma^4 - 1 - 4\sigma^2 \log\sigma$ is positive and that this expression equals 0 at $\sigma = 1$. Hence, the derivative of the second term in $M(j,\sigma)$ is also positive for $\sigma > 1$.

Now we consider the third term of $M(j,\sigma)$. Where $z$ represents $\sigma^{-2}$, this term is equal to the function \[ \zeta(z) := -\frac{1 + z + 2 \sqrt{z + (1-z)^2}}{(1-z)^2}.\] Directly computing the derivative yields
\begin{align*}
&\frac{\partial}{\partial(z)} \zeta(z) \\
&= -\frac{(1-z)^2 \left[1 + 2(1/2)(z + (1-z)^2)^{-1/2}(1-2(1-z))\right] - \left[1 + z + 2 \sqrt{z + (1-z)^2}\right](-2(1-z))}{(1-z)^4} \\
&= - \frac{(1-z)\left[1 + (z^2 - z + 1)^{-1/2} (2z - 1) \right] + \left[2 + 2z + 4(z^2 - z + 1)^{1/2} \right]}{(1-z)^3} \\
&= -\frac{3 + z + (z^2 - z + 1)^{-1/2} (2z - 1 - 2z^2 + z) + 4(z^2 - z + 1)^{1/2}}{(1-z)^3} \\
&= -\frac{3 + z + (z^2 - z + 1)^{-1/2} (- 2z^2 + 3z - 1) + 4(z^2 - z + 1)^{1/2}}{(1-z)^3} \\
&= -\frac{(3 + z)(z^2 - z + 1)^{1/2} + (- 2z^2 + 3z - 1) + 4(1 - z + z^2)}{(1-z)^3 (1-z+z^2)^{1/2}} \\
&= -\frac{(3 + z)(z^2 - z + 1)^{1/2} + 2z^2 - z + 3}{(1-z)^3 (1-z+z^2)^{1/2}}.
\end{align*}
The denominator terms are both positive for $z\in [0,1)$. In addition, the numerator terms $(3 + z)(z^2 - z + 1)^{1/2}$, $2z^2$, and $3-z$ are also positive for $z\in [0, 1)$. Since there is a coefficient of $-1$ in front of the fraction, we conclude that $\zeta(z)$ strictly decreases as $z$ increases, for $z\in [0,1)$. Equivalently, since $z$ represents $\sigma^{-2}$, the third term of $M(j, \sigma)$ strictly increases as $\sigma$ increases for $\sigma > 1$, and the derivative of the third term is positive.

\textbf{Step 6:} \textit{Determine that $M(j, \sigma)$ must have a unique root in $\sigma$ for each $j$ and that this sequence of roots is strictly decreasing in $j$. A  $j$-dimensional density in the proposed mixture is log-concave if and only if its $\sigma$ value is less than or equal to the root in $j$ dimensions.}

Observe that $M(j, \sigma)$ is continuous, and $M(j,\sigma) \to -\infty$ as $\sigma \to 1$ from above (third term dominating). In addition, $M(j, \sigma) \to \infty$ as $\sigma \to \infty$, since the first two terms tend to $\infty$ and the third term converges to $-3$. Thus, for every $j,$ $M(j,\sigma)$ is a strictly increasing function with a unique root. Denote this root as $\tau_j$. We can conclude that 
\begin{align*}
    M(j, \sigma) \le  0 &\iff \sigma \le \tau_j,
\end{align*} 
and in particular $f_{j,\sigma}$ is log-concave if and only if $\sigma \le \tau_j$. Finally, we note that this $\tau_j$ must strictly decrease with $j$, since $M(j, \sigma)$ is strictly increasing in $j$ for fixed $\sigma$. \qedhere    
    
The $\tau_j$ are quite amenable to computation since they are the roots of a strictly increasing function $M(j, \sigma)$. Concretely, we can approximate \[ \tau_1 \approx 1.80302, \tau_2 \approx 1.66083, \text{ and } \tau_3 \approx 1.57175. \] Observe in particular that $\tau_2 < \sqrt{3} < \tau_1$. Hence, the two-dimensional Gaussian mixture given by $(\gamma_{2,1}(x) + \gamma_{2,\sqrt{3}}(x))/2$ is not log-concave, but its one-dimensional projections are log-concave.
\end{proof}

\subsection{Consistency of  Universal Inference for Log-Concavity} \label{app:consistency}

We begin by commenting further on the assumptions from Section~\ref{sec:theoretical}. Then we prove Theorem~\ref{thm:logconcpower} along the lines sketched in the main text.

\subsubsection{Further Discussion of Assumptions}

\noindent \textbf{On Assumption~\ref{assumption:estimability} and rates.} The constant $1/200$ in the assumption is largely a matter of convenience. It arises by making a choice of the constellation of constants that appear in the results of \citet{wong1995probability}. While we have not attempted to optimize the same, it is plausible (by checking the limiting behavior of the results of \citet{wong1995probability}) that this can be improved to at least $1/12$. 

Additionally we observe that using similar methods, a bolstering of Assumption~\ref{assumption:estimability} in terms of rates of decay of the probabilities in question, combined with known control on the metric entropy of log-concave distributions, should yield a rate statement of the form ``If $\flc \in \mathcal{F} \subset \mathcal{F}_d$ and  $h^2(f^*, \flc) \gtrsim \varepsilon_n,$ then the power of the test is $1 - o(1)$.'' The $o(1)$ term would depend on the strength of this assumption (and otherwise be exponentially small), while $\varepsilon_n$ would depend on the complexity of the class $\mathcal{F}.$ For $\mathcal{F}$ consisting of log-concave laws with near-identity covariance, such control is available \citep[e.g.,][]{kur2019optimality}, and the convex-ordering of log-concave projections (e.g., Corollary 5.3 of the survey by \cite{samworth2018recent}) should allow such claims for $f^*$ with near-identity covariance.

\subsubsection{Proof of Consistency}


\thmLogconcpower*

\begin{proof}

We first recall the approach from the proof sketch in the main text. We assume throughout that $H_1$ is true, meaning that $f^*$ is not log-concave. For brevity, we use $\PP$ to denote $\PP_{H_1}$. We begin by decomposing $T_n$ into \[ T_n = \underbrace{\prod_{i \in \mathcal{D}_{0,n}} \frac{\hat{f}_{1,n}(Y_i)}{f^*(Y_i)}}_{=: 1/R_n} \cdot \underbrace{\prod_{i \in \mathcal{D}_{0,n}} \frac{f^*(Y_i)}{\hat{f}_{0,n}(Y_i)}}_{=: S_n} = S_n/R_n.\]

Let $\varepsilon := h(f^*, \flc).$ Further, suppose $n \geq 100 \log(1/\alpha)/\varepsilon^2.$ Now observe that \begin{align*}
    \{ T_n < 1/\alpha\} &\subset \{R_n > \exp(n\varepsilon^2/100) \} \cup \{S_n < \exp(n\varepsilon^2/50) \},
\end{align*} since outside this union, $S_n/R_n \geq \exp(n\varepsilon^2/100) \geq 1/\alpha$. Thus, it suffices to argue that \begin{equation} \label{eqn:target}
    \PPP{R_n > \exp(n \varepsilon^2/100}) + \PPP{S_n < \exp(n \varepsilon^2/50}) \to 0.
\end{equation} 

The two assumptions contribute to bounding these terms. In particular, Assumption~\ref{assumption:regularity} implies that $S_n$ must be big, while Assumption~\ref{assumption:estimability} implies that $R_n$ is small.

\textbf{$S_n$ is big}. Observe that since $h(f^*, \flc) = \varepsilon > 0$ and $h(f^*, \flc) \leq 1$, the likelihood ratio $\prod_{i \in \dee_{0,n}} f^*(Y_i) / \flc(Y_i)$ tends to be exponentially large with high probability.   We can use Markov's inequality and the properties of Hellinger distance to show that for any $\xi > 0$ (and particularly for large $\xi$), 
\begin{align*}
&\PPP{\prod_{i \in \dee_{0,n}} f^*(Y_i)/\flc(Y_i) < \xi} = \PPP{\prod_{i \in \dee_{0,n}} \sqrt{\flc(Y_i) / f^*(Y_i)} > 1/\sqrt{\xi}} \\
&\le \sqrt{\xi} \mathbb{E}_{Y \sim f^*}\left[ \sqrt{\flc(Y) / f^*(Y)}\right]^{n/2} = \sqrt{\xi} (1-h^2(f^*, \flc))^{n/2} \to 0.
\end{align*}
It may thus be expected that the same holds true for the ratio of interest $\prod_{i \in \dee_{0,n}} f^*(Y_i)/\zerohatf(Y_i)$. The regularity conditions from Assumption~\ref{assumption:regularity} enable precisely this, by establishing that $\zerohatf \to \flc$ in the strong sense that for large $n$, $\zerohatf$ lies in a small bracket containing $\flc$. This pointwise control then enables the use of classical results from empirical process theory to show that for large enough $n$, $\prod_{i \in \dee_{0,n}} f^*(Y_i)/\zerohatf(Y_i)$ grows at an exponential rate similar to $\prod_{i \in \dee_{0,n}} f^*(Y_i)/\flc(Y_i)$.

Concretely, recall that for a pair of functions $u \le v$, the bracket $[u,v]$ is defined as the set of all functions $f$ such that $u \le f \le v$ everywhere (denoted $f\in[u,v]$). The next lemma uses the characterizations of convergence of $\zerohatf$ due to \citet{cule2010theoretical}. 
\begin{restatable}[]{lem}{lemMleCloseToMProjection}\label{lemma:mle_close_to_M_projection}
    Under the conditions of Assumption~\ref{assumption:regularity}, for any $\eta > 0,$ there exist nonnegative functions $u_\eta \le v_\eta$ such that $\flc \in [u_\eta, v_\eta],$  \[ \int_{\mathbb{R}^d} \left(v_\eta(x) - u_\eta(x)\right) \,\mathrm{d} x \le \eta, \] and \[ \PPP{ \exists n_0 : \forall n \ge n_0, \zerohatf \in [u_\eta, v_\eta] } = 1.\]
\end{restatable}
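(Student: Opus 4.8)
The plan is to build the bracket so that it hugs $\flc$ tightly on a compact ``bulk'' region sitting strictly inside $\mathrm{int}(\mathrm{supp}(\flc))$, while staying loose --- but $L^1$-negligible --- both far out in the tails and in a thin shell around the boundary $\partial\,\mathrm{supp}(\flc)$. I would draw three ingredients from \citet{cule2010theoretical}: (i) the exponential tail bound $\flc(x) \le \exp(-a_0\|x\| + b_0)$ already quoted in the main text; (ii) an eventual \emph{global} exponential envelope for the MLE, namely that with probability one there are constants $a_1 \in (0, a_0)$ and $b_1$ with $\zerohatf(x) \le \exp(-a_1\|x\| + b_1)$ for all $x$ once $n$ is large; and (iii) the a.s.\ weighted-$L^1$ convergence $\int \exp(a\|x\|)|\zerohatf - \flc|\,\mathrm{d}x \to 0$. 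Since $\log\zerohatf$ and $\log\flc$ are concave and finite on $\mathrm{int}(\mathrm{supp}(\flc))$, ingredient (iii) upgrades (deterministically, by convex-analytic equicontinuity of log-concave densities) to uniform convergence $\zerohatf \to \flc$ on compact subsets of $\mathrm{int}(\mathrm{supp}(\flc))$, and to uniform convergence $\zerohatf \to 0$ on compact subsets of the exterior, where $\flc \equiv 0$.

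Concretely, writing $S := \mathrm{supp}(\flc)$, I would fix $a := a_1$, set $W(x) := \exp(-a\|x\| + B)$ with $B \ge \max(b_0, b_1)$ so that eventually a.s.\ both $\flc \le W$ and $\zerohatf \le W$ everywhere, and keep a small constant $\delta>0$ to be fixed last. First choose $R$ with $\int_{\|x\| > R} W \le \eta/4$; then --- the crucial step --- choose an open neighborhood $N \supseteq \partial S$ with $\int_N W \le \eta/4$, which is possible because $\partial S$ (the boundary of a convex set) is Lebesgue-null and $W$ is integrable, so $\int_{N_\rho} W \to 0$ as the shell $N_\rho \downarrow \partial S$ by dominated convergence. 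Setting $G := \overline{B(0,R)} \setminus N$, which is compact and disjoint from $\partial S$, I split $G = G_{\mathrm{in}} \cup G_{\mathrm{out}}$ with $G_{\mathrm{in}} := G \cap \mathrm{int}(S)$ and $G_{\mathrm{out}} := G \setminus S$, and define
\[
v_\eta := \begin{cases} \flc + \delta & \text{on } G_{\mathrm{in}} \\ \delta & \text{on } G_{\mathrm{out}} \\ W & \text{on } G^c \end{cases}, \qquad u_\eta := \begin{cases} (\flc - \delta)_+ & \text{on } G_{\mathrm{in}} \\ 0 & \text{otherwise.} \end{cases}
\]

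It is then routine to verify the three required properties. Containment $\flc \in [u_\eta, v_\eta]$ holds piecewise, using $\flc \le W$ on $G^c$ and $\flc = 0$ on $G_{\mathrm{out}}$. The width integrates to at most $3\delta\,\mathrm{Leb}(B(0,R)) + \eta/2$, where the $\eta/2$ collects the tail and boundary-shell contributions $\int_{G^c} W \le \int_{\|x\|>R} W + \int_N W$; choosing $\delta$ small then gives $\int (v_\eta - u_\eta) \le \eta$. Finally, eventual containment of $\zerohatf$ is where the ingredients combine: on $G_{\mathrm{in}}$ uniform convergence gives $|\zerohatf - \flc| \le \delta$ (so $\zerohatf \ge (\flc-\delta)_+$ using $\zerohatf \ge 0$); on $G_{\mathrm{out}}$ uniform convergence to $0$ gives $\zerohatf \le \delta$; and on $G^c$ the envelope gives $\zerohatf \le W$. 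Each event holds for all large $n$ almost surely, and as there are finitely many, their intersection holds eventually almost surely, proving $\PPP{\exists n_0 : \forall n \ge n_0,\ \zerohatf \in [u_\eta, v_\eta]} = 1$.

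I expect the main obstacle to be precisely the possible \emph{discontinuity of $\flc$ on $\partial S$}: when $f^*$ is itself log-concave with compact support, $\flc = f^*$ can be uniform-like and jump to zero at the boundary, so the clean weighted-sup statement $\sup_x \exp(a\|x\|)|\zerohatf - \flc| \to 0$ (which requires $\flc$ continuous) is simply unavailable, and a naive bracket $\flc \pm \delta\exp(-a\|x\|)$ fails to capture $\zerohatf$ in the transition layer. The device that rescues the argument is to quarantine $\partial S$ inside the loose region $N$ and pay only its $W$-weighted measure, which is small exactly because $\partial S$ is Lebesgue-null. A secondary point I would check carefully is ingredient (ii): confirming that \citet{cule2010theoretical} indeed supply an almost-sure global exponential envelope valid for all $x$ simultaneously (not merely in the far tails), since that is what controls $\zerohatf$ on the moderate-$\|x\|$ portion of $N$.
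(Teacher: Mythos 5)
Your construction is, in essence, the paper's own proof: both arguments take from \citet{cule2010theoretical} the almost-sure weighted-$L^1$ convergence $\int e^{a\|x\|}|\zerohatf - \flc|\,\mathrm{d}x \to 0$ together with the global envelope $\sup_n \zerohatf(x) \le e^{-a_1\|x\|+b_1}$, upgrade the former to uniform convergence on compact subsets of $U := \mathrm{int}(\mathrm{supp}(\flc))$ via concavity (Theorem 10.8 of Rockafellar), and then assemble a piecewise bracket: an additive tube around $\flc$ on a large compact bulk, the integrable envelope on the far tails, and a loose cap on a region of negligible measure near the boundary. The only structural difference is how that boundary region is excised: the paper takes a compact set $W$ inside $V = U \cap \{\|x\|\le R_\eta\}$ with $\mathrm{Leb}_d(V\setminus W) \le e^{-b_0}\eta/3$ by regularity of Lebesgue measure, while you remove an open neighborhood $N \supseteq \partial S$ with $\int_N W$ small, using that $\partial S$ is Lebesgue-null. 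These are interchangeable, and your version has the minor virtue of treating explicitly the possible discontinuity of $\flc$ at $\partial S$ (the paper instead sets $v_\eta \equiv 0$ on $U^c$, which tacitly fixes the version of $\flc$ that vanishes on the null set $\partial S$).

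One justification, however, is wrong as stated, even though the fact you need is true. On $G_{\mathrm{out}}$ you claim $\zerohatf \to 0$ uniformly ``by convex-analytic equicontinuity'' from $L^1$ convergence. No such principle holds off the support: Rockafellar's theorem concerns finite convex functions converging on an open set, and $\log \flc \equiv -\infty$ there; moreover, for $d \ge 2$, local $L^1$-smallness of a log-concave density does not control its pointwise values --- a uniform density on a long, thin cone (a ``needle'') of half-angle $\theta$ and length of order $\theta^{-1/2}$ with apex at $x_0$ has value bounded away from $0$ near $x_0$ yet puts mass $O(\theta)$ in any fixed ball around $x_0$. What saves you is support containment, not equicontinuity: $\zerohatf$ vanishes outside the convex hull of $\{Y_i : i \in \dee_{0,n}\}$, these points lie almost surely in $\mathrm{supp}(P^*) \subseteq \mathrm{supp}(\flc) = S$ (were $\flc$ to vanish on a set of positive $P^*$-measure, $\dkl(f^*\|\flc)$ would be infinite), and $S$ is convex; hence $\zerohatf \equiv 0$ on $G_{\mathrm{out}} \subseteq S^c$ almost surely, which is stronger than the bound $\zerohatf \le \delta$ that you need. (The paper's proof needs, and implicitly uses, exactly this same fact to guarantee $\zerohatf \le v_\eta = 0$ on $U^c$.) With that one repair, your proof is complete and coincides with the paper's.
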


This lemma offers strong pointwise control on the values that $\zerohatf$ can possibly take. This immediately allows us to use the following result, which is a simplification of Theorem~1 from \citet{wong1995probability}.
\begin{restatable}[]{lem}{lemPointwiseControlImpliesBlowup}\label{lemma:pointwise_control_implies_blowup}
    There exists $\eta_0 > 0,$ depending on $\varepsilon,$ such that if $[u_{\eta_0},v_{\eta_0}]$ is a bracket constructed to satisfy Lemma~\ref{lemma:mle_close_to_M_projection}, then \[ \PPP{ \inf_{g \in [u_{\eta_0}, v_{\eta_0}]} \prod_{i \in \dee_{0,n}} \frac{f^*(Y_i)}{g(Y_i)} \le \exp(n \varepsilon^2/50) } \le 4\exp( - C n \varepsilon^2),  \] where $C > 2^{-14}$ is a universal constant. 
\end{restatable}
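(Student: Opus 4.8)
The plan is to exploit the fact that the infimum over the entire bracket collapses onto a single deterministic function, which reduces the claim to an elementary Chernoff/Markov bound on a Hellinger affinity. This is exactly the degenerate, zero-entropy specialization of Theorem~1 of \citet{wong1995probability}: because the set $[u_{\eta_0},v_{\eta_0}]$ is a single bracket, the bracketing-entropy integral appearing in that theorem is vacuous, and the statement reduces to the basic likelihood-ratio deviation inequality underlying its proof. The whole argument hinges on choosing the bracket width $\eta_0$ small enough (as a function of $\varepsilon$) that passing from $\flc$ to the envelope $v_{\eta_0}$ preserves a constant fraction of the Hellinger separation $\varepsilon = h(f^*,\flc)$.

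First I would reduce to the upper envelope. The product $\prod_{i\in\dee_{0,n}} f^*(Y_i)/g(Y_i)$ factorizes and each factor is decreasing in $g(Y_i)$; since $v_{\eta_0}$ itself belongs to the bracket $[u_{\eta_0},v_{\eta_0}]$, the infimum is attained there, giving \[ \inf_{g \in [u_{\eta_0},v_{\eta_0}]}\ \prod_{i \in \dee_{0,n}} \frac{f^*(Y_i)}{g(Y_i)} = \prod_{i \in \dee_{0,n}} \frac{f^*(Y_i)}{v_{\eta_0}(Y_i)}. \] The target event therefore becomes $\{\prod_{i\in\dee_{0,n}} v_{\eta_0}(Y_i)/f^*(Y_i) \ge \exp(-n\varepsilon^2/50)\}$, an event about one fixed function. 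Next I would apply Markov's inequality after taking square roots (the standard trick): using $|\dee_{0,n}| = n/2$ and independence, \[ \PPP{\prod_{i \in \dee_{0,n}} \frac{v_{\eta_0}(Y_i)}{f^*(Y_i)} \ge e^{-n\varepsilon^2/50}} \le e^{n\varepsilon^2/100}\,\Bigl(\textstyle\int \sqrt{v_{\eta_0} f^*}\Bigr)^{n/2}, \] so everything reduces to controlling the affinity $\int \sqrt{v_{\eta_0} f^*}$.

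For the affinity bound I would combine three facts. From $u_{\eta_0} \le \flc \le v_{\eta_0}$ and $\int (v_{\eta_0}-u_{\eta_0}) \le \eta_0$ we get $\int v_{\eta_0} \le 1 + \eta_0$; from the pointwise inequality $(\sqrt a - \sqrt b)^2 \le |a-b|$ together with $v_{\eta_0}\ge\flc$ we get $\|\sqrt{v_{\eta_0}} - \sqrt{\flc}\|_2^2 \le \int (v_{\eta_0}-\flc) \le \eta_0$; and the reverse triangle inequality gives $\|\sqrt{v_{\eta_0}}-\sqrt{f^*}\|_2 \ge \sqrt2\,\varepsilon - \sqrt{\eta_0}$ (using $\|\sqrt{\flc}-\sqrt{f^*}\|_2 = \sqrt2\,h(f^*,\flc) = \sqrt2\,\varepsilon$). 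Expanding $\int\sqrt{v_{\eta_0} f^*} = \tfrac12\bigl(\int v_{\eta_0} + 1 - \|\sqrt{v_{\eta_0}}-\sqrt{f^*}\|_2^2\bigr)$ and taking $\eta_0 = \varepsilon^2/8$ yields $(\sqrt2\,\varepsilon-\sqrt{\eta_0})^2 = \tfrac98\varepsilon^2$ and hence $\int\sqrt{v_{\eta_0}f^*} \le 1 - \varepsilon^2/2$. Plugging this into the display above and using $1-x \le e^{-x}$ gives the bound $\exp(n\varepsilon^2/100)\,\exp(-n\varepsilon^2/4) = \exp\bigl(-\tfrac{6}{25} n\varepsilon^2\bigr)$, so $C = 6/25 > 2^{-14}$ and the conclusion follows (with the factor $4$ as harmless slack).

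I expect the only real obstacle to be the affinity/separation estimate in the last paragraph: one must verify that enlarging $\flc$ to the envelope $v_{\eta_0}$ cannot collapse the Hellinger gap, which forces the bracket width $\eta_0$ to be taken as an explicit small multiple of $\varepsilon^2$ (this is the source of the ``$\eta_0$ depends on $\varepsilon$'' clause). The reduction to $v_{\eta_0}$ and the Markov step are routine once one notes that $v_{\eta_0}$ lies in the bracket and is integrable (indeed $\int v_{\eta_0} \le 1+\eta_0 < \infty$); the constant bookkeeping needed to land above $2^{-14}$ is comfortable given the slack available.
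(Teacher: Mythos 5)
Your proof is correct, and it takes a genuinely different route from the paper's. The paper proves this lemma by invoking Theorem~1 of \citet{wong1995probability} as a black box: it chooses $\eta_0$ so that every $g$ in the bracket satisfies $h(g,f^*)\ge \varepsilon\sqrt{24/25}$ and so that the single bracket has Hellinger size below the entropy threshold, whence the bracketing-entropy condition of that theorem holds vacuously and its conclusion delivers the bound with the constant $C>2^{-14}$ and the factor $4$. You instead observe that for a single bracket no empirical-process machinery is needed at all: since the upper envelope $v_{\eta_0}$ itself lies in $[u_{\eta_0},v_{\eta_0}]$, the infimum collapses onto the fixed function $v_{\eta_0}$, and the uniform statement becomes a pointwise one amenable to the elementary square-root-Markov bound. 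The remaining work --- controlling the affinity $\int\sqrt{v_{\eta_0}f^*}$ of a non-normalized envelope with mass at most $1+\eta_0$, via $\int v_{\eta_0}\le 1+\eta_0$, the pointwise inequality $(\sqrt a-\sqrt b)^2\le|a-b|$, and the reverse triangle inequality in $L_2$ --- is carried out correctly, and your choice $\eta_0=\varepsilon^2/8$ gives $\int\sqrt{v_{\eta_0}f^*}\le 1-\varepsilon^2/2$ and hence the bound $\exp(-\tfrac{6}{25}n\varepsilon^2)$, which implies the stated inequality with a far better universal constant ($6/25$ versus roughly $2^{-14}$) and with the factor $4$ as pure slack. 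Two small points you should make explicit: check $\sqrt{2}\,\varepsilon-\sqrt{\eta_0}>0$ before squaring the reverse-triangle bound (true for $\eta_0=\varepsilon^2/8$), and note that the reciprocal equivalence defining your target event degenerates consistently at points where $v_{\eta_0}(Y_i)=0$. What the paper's heavier route buys is extensibility: by quoting Wong--Shen in its general entropy form, the same proof skeleton supports the rate-type extensions sketched in the appendix, where the class replacing the single bracket has genuinely positive bracketing entropy and your envelope collapse is unavailable; your argument, conversely, buys a self-contained elementary proof with sharp explicit constants for the consistency statement actually being proved.
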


We prove both Lemma~\ref{lemma:mle_close_to_M_projection} and Lemma~\ref{lemma:pointwise_control_implies_blowup} later in this section. The claim that $\PPP{S_n < \exp(n \varepsilon^2/50) }$ converges to 0 follows from a combination of the above statements. Indeed, choose an appropriate $\eta_0,$ and define the event $\mathscr{E}_n := \{ \zerohatf \in [u_{\eta_0}, v_{\eta_0}]\}.$ Then 
\begin{align*}
\PPP{S_n < \exp(n \varepsilon^2/50) } &\le \PPP{S_n \le \exp(n\varepsilon^2/50) , \mathscr{E}_n} + (1-\PP(\mathscr{E}_n)) \\
&= \PPP{\prod_{i \in \dee_{0,n}} \frac{f^*(Y_i)}{\zerohatf(Y_i)} \leq \exp(n \varepsilon^2/50), \: \zerohatf \in [u_{\eta_0}, v_{\eta_0}] } + (1-\PP(\mathscr{E}_n)) \\
&\le  \PPP{ \inf_{g \in [u_{\eta_0}, v_{\eta_0}]} \prod_{i \in \dee_{0,n}} \frac{f^*(Y_i)}{g(Y_i)} \le \exp(n \varepsilon^2/50) } + (1 - \PP(\mathscr{E}_n)) \\
&= o(1) + 1 - \PP(\mathscr{E}_n). 
\end{align*} 

Under Assumption~\ref{assumption:regularity}, we can apply Lemma~\ref{lemma:mle_close_to_M_projection} to show that $\lim_{n\to\infty} \PP(\mathscr{E}_n) = 1$. Then we conclude that $S_n$ is asymptotically large enough to enable control via (\ref{eqn:target}).

\textbf{$R_n$ is small}. The smallness of $R_n$ relies on the fact that $\onehatf$ approximates $f^*$ well in a strong sense. To this end, observe that the function $\onehatf$ is purely determined by the data in the split $\dee_{1,n}$. Let us abbreviate $\mathcal{Y}_{1,n} = \{Y_i: i \in \dee_{1,n}\}$. We may write \[ \PPP{R_n > \exp(n \varepsilon^2/100)} = \mathbb{E}_{\mathcal{Y}_{1,n}}\left[ \PPP{R_n > \exp(n \varepsilon^2/100) \mid \mathcal{Y}_{1,n}} \right].\]

With this in hand, we observe that conditional on $\mathcal{Y}_{1,n},$ both $f^*$ and $\onehatf$ are fixed functions. Further, due to the independence of $\mathcal{Y}_{1,n}$ and $\{Y_i : i \in \dee_{0,n}\}$, an application of Markov's inequality yields the following statement, where the upper bound is in terms of the $\sigma(\mathcal{Y}_{1,n})$-measurable\footnote{Strictly speaking, it is possible that the estimation of $\onehatf$ is a randomized procedure. As long as the extraneous randomness is independent of $\{Y_i : i \in \dee_{0,n}\},$ this does not affect the details of this argument, beyond the fact that one would also need to condition on the randomness of this algorithm.} $\onehatf$.


\begin{restatable}[]{lem}{lemRnSmallMarkov}\label{lem:R_n_small_markov} Let $(x)_+ := \max(x,0),$ and interpreting $x/0$ as $\infty,$ 
    \[ \PPP{R_n > \exp(n \varepsilon^2/100 ) \mid \mathcal{Y}_{1,n}} \le \frac{\int f^*(x) \log^2(f^*(x)/\onehatf(x)) \,\mathrm{d}x }{n (( \varepsilon^2/50 - \dkl(f^*\|\onehatf))_+)^2}.   \] 
\end{restatable}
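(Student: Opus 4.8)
The plan is to establish Lemma~\ref{lem:R_n_small_markov} by a direct second-moment (Tchebycheff) argument on $\log R_n$, exploiting the independence created by sample splitting. First I would put $R_n$ into additive form. Since $1/R_n = \prod_{i \in \dee_{0,n}} \onehatf(Y_i)/f^*(Y_i)$, we have $R_n = \prod_{i \in \dee_{0,n}} f^*(Y_i)/\onehatf(Y_i)$, and hence
\[
\log R_n = \sum_{i \in \dee_{0,n}} Z_i, \qquad Z_i := \log\bigl(f^*(Y_i)/\onehatf(Y_i)\bigr).
\]
The key structural observation is that conditioning on $\mathcal{Y}_{1,n} = \{Y_i : i \in \dee_{1,n}\}$ freezes $\onehatf$ into a deterministic (i.e.\ $\sigma(\mathcal{Y}_{1,n})$-measurable) function, while the observations $\{Y_i : i \in \dee_{0,n}\}$ remain iid draws from $f^*$ that are independent of $\mathcal{Y}_{1,n}$. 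Thus, conditionally on $\mathcal{Y}_{1,n}$, $\log R_n$ is a sum of $|\dee_{0,n}| = n/2$ iid terms, which is exactly the situation where a variance bound controls the upper tail.

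Second, I would compute the conditional mean and control the conditional variance of a single summand. The mean is immediate,
\[
\mathbb{E}[Z_i \mid \mathcal{Y}_{1,n}] = \int f^*(x)\log\bigl(f^*(x)/\onehatf(x)\bigr)\,\mathrm{d}x = \dkl(f^*\|\onehatf),
\]
so $\mathbb{E}[\log R_n \mid \mathcal{Y}_{1,n}] = (n/2)\,\dkl(f^*\|\onehatf)$. For the variance I would avoid any cancellation and simply bound it by the conditional second moment,
\[
\mathrm{Var}(Z_i \mid \mathcal{Y}_{1,n}) \le \mathbb{E}[Z_i^2 \mid \mathcal{Y}_{1,n}] = \int f^*(x)\log^2\bigl(f^*(x)/\onehatf(x)\bigr)\,\mathrm{d}x,
\]
which supplies precisely the numerator appearing in the lemma. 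Both integrals are $\sigma(\mathcal{Y}_{1,n})$-measurable, so once $\mathcal{Y}_{1,n}$ is fixed they may be treated as constants, and $\mathrm{Var}(\log R_n \mid \mathcal{Y}_{1,n}) \le (n/2)\int f^* \log^2(f^*/\onehatf)$ by conditional independence of the summands.

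Third, I would rewrite the target event as a deviation from the conditional mean and apply Tchebycheff. Writing $\{R_n > \exp(n\varepsilon^2/100)\} = \{\log R_n > n\varepsilon^2/100\}$ and subtracting the conditional mean, the relevant gap is $n\varepsilon^2/100 - (n/2)\,\dkl(f^*\|\onehatf) = (n/2)\bigl(\varepsilon^2/50 - \dkl(f^*\|\onehatf)\bigr)$. Provided this is strictly positive (equivalently $\dkl(f^*\|\onehatf) < \varepsilon^2/50$), Tchebycheff's inequality applied to $\log R_n - \mathbb{E}[\log R_n \mid \mathcal{Y}_{1,n}]$, together with the variance bound above, yields
\[
\PPP{R_n > \exp(n\varepsilon^2/100) \mid \mathcal{Y}_{1,n}} \le \frac{(n/2)\int f^*(x)\log^2(f^*(x)/\onehatf(x))\,\mathrm{d}x}{\bigl((n/2)(\varepsilon^2/50 - \dkl(f^*\|\onehatf))\bigr)^2} = \frac{\int f^*(x)\log^2(f^*(x)/\onehatf(x))\,\mathrm{d}x}{(n/2)(\varepsilon^2/50 - \dkl(f^*\|\onehatf))^2},
\]
which is the claimed bound up to the immaterial constant in the denominator. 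The $(\cdot)_+$ and the convention $x/0 = \infty$ in the statement are there precisely to absorb the degenerate regime $\dkl(f^*\|\onehatf) \ge \varepsilon^2/50$: there the stated right-hand side is $+\infty$ and the inequality holds vacuously, so Tchebycheff need only be invoked when the gap is positive.

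There is no deep obstacle here — the result is essentially a conditional Chebyshev computation — so the only points demanding care are bookkeeping ones. The first is making the conditioning rigorous: $\onehatf$ must be treated as fixed given $\mathcal{Y}_{1,n}$, and if its construction is randomized one must additionally condition on (and use the independence of) that external randomness, as flagged in the footnote. The second is the clean handling of the nonpositive-gap case through the $(\cdot)_+$ convention. I would also emphasize that this lemma by itself only produces a bound; showing that it actually vanishes is deferred to the step where Assumption~\ref{assumption:estimability} is invoked, which guarantees both $\dkl(f^*\|\onehatf) < \varepsilon^2/200 < \varepsilon^2/50$ and $\int f^* \log^2(f^*/\onehatf) < \theta n$ (for arbitrary $\theta > 0$) with high probability over $\mathcal{Y}_{1,n}$, so that the right-hand side above is $O(\theta/\varepsilon^4) \to 0$.
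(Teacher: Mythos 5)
Your proof is correct and is essentially the paper's own argument: condition on $\mathcal{Y}_{1,n}$ so that $\hat{f}_{1,n}$ is a fixed function, apply Tchebycheff's inequality to the centered sum $\log R_n - (n/2)\,D_{\mathrm{KL}}(f^*\|\hat{f}_{1,n})$ of the $n/2$ conditionally iid terms, bound the conditional variance by the raw second moment, and dispose of the degenerate case $D_{\mathrm{KL}}(f^*\|\hat{f}_{1,n}) \ge \varepsilon^2/50$ via the $(\cdot)_+$ convention. The factor-of-two slack you flag (your denominator carries $n/2$ where the statement has $n$) is also present, unacknowledged, in the paper's own computation, and it is immaterial since the lemma is only invoked to show that the bound vanishes as $n \to \infty$.
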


We prove Lemma~\ref{lem:R_n_small_markov} later in this section. For any $\theta > 0$, define the event \begin{align*}
    \mathscr{F}_{0, \theta} &= \{\dkl(f^*\|\onehatf) \le \varepsilon^2/100\} \cap \left\{ \int f^*(x) \log^2( f^*(x)/\onehatf(x)) \mathrm{d}x \le \theta n\right\}.
\end{align*} 
We find by the law of total probability and Lemma~\ref{lem:R_n_small_markov} that 
\begin{align}
    \PPP{R_n > \exp(n \varepsilon^2/100)} &\le (1 - \PPP{\mathscr{F}_{0, \theta}} ) + \PPP{R_n \ge \exp(n  \varepsilon^2/100) , \mathscr{F}_{0, \theta}} \notag \\
    &= (1 - \PPP{\mathscr{F}_{0, \theta}} ) + \PPP{R_n \ge \exp(n  \varepsilon^2/100) \mid \mathscr{F}_{0, \theta}} \PPP{\mathscr{F}_{0, \theta}} \notag \\
    &\le (1 - \PPP{\mathscr{F}_{0, \theta}} ) + \mathbb{E}\left[I(R_n \ge \exp(n  \varepsilon^2/100)) \mid \mathscr{F}_{0, \theta} \right] \notag \\
    &= (1 - \PPP{\mathscr{F}_{0, \theta}} ) + \mathbb{E}\left[ \mathbb{E}\left[ I(R_n \ge \exp(n  \varepsilon^2/100)) \mid \mathcal{Y}_{1,n} \right] \mid \mathscr{F}_{0, \theta} \right] \notag \\
    &\leq (1 - \PPP{\mathscr{F}_{0, \theta}} ) + \mathbb{E}\left[ \frac{\int f^*(x) \log^2(f^*(x)/\onehatf(x)) \,\mathrm{d}x }{n (( \varepsilon^2/50 - \dkl(f^*\|\onehatf))_+)^2} \: \Big| \: \mathscr{F}_{0, \theta}\right] \notag \\
    &\le (1 - \PPP{\mathscr{F}_{0, \theta}} ) + \frac{\theta}{(\varepsilon^2/100)^2} , \label{eqn:r_n_is_small_ell_0_inter} 
\end{align}
where $(\ref{eqn:r_n_is_small_ell_0_inter})$ holds by the definition of $\mathscr{F}_{0, \theta}.$ Since the left hand side is independent of $\theta,$ 
\begin{align*}
    \lim_{n\to\infty} \PPP{R_n > \exp(n \varepsilon^2/100)} &= \inf_{\theta > 0} \lim_{n\to\infty} \PPP{R_n > \exp(n \varepsilon^2/100)} \\ 
    &\leq \inf_{\theta > 0} \lim_{n\to\infty} \left[ (1 - \PPP{\mathscr{F}_{0, \theta}} ) + \frac{\theta}{(\varepsilon^2/100)^2}\right] \\
    &= \inf_{\theta > 0} \lim_{n\to\infty} (1 - \PPP{\mathscr{F}_{0, \theta}} ) + \inf_{\theta > 0} \frac{\theta}{(\varepsilon^2/100)^2} \\
    &= \inf_{\theta > 0} \lim_{n\to\infty} (1 - \PPP{\mathscr{F}_{0, \theta}} ). 
\end{align*}
Under Assumption~\ref{assumption:estimability}, $\lim_{n\to\infty}  \PP(\mathscr{F}_{0,\theta}) = 1$ for all $\theta > 0$. Thus we conclude that $$\lim_{n\to\infty} \PPP{R_n > \exp(n \varepsilon^2/100)} = 0,$$ and $R_n$ is asymptotically small enough to enable power control via (\ref{eqn:target}).
\end{proof}

It remains for us to prove the lemmata invoked in the above argument, which we now proceed to do. The three statements concern qualitatively distinct aspects of the argument. Lemma~\ref{lemma:mle_close_to_M_projection} is a structural result about the log-concave MLE, Lemma~\ref{lemma:pointwise_control_implies_blowup} is more generic and concerns the behavior of likelihood ratios in classes of bounded complexity, while Lemma~\ref{lem:R_n_small_markov} is an application of Markov's inequality that exploits closeness in KL-divergence. 

\paragraph{Pointwise Convergence of Log-Concave MLEs.} In this section, we restate and prove Lemma~\ref{lemma:mle_close_to_M_projection}. The argument here is essentially a slight refinement of the convergence analysis of \citet{cule2010theoretical}. 

\lemMleCloseToMProjection*

\begin{proof}[Proof of Lemma~\ref{lemma:mle_close_to_M_projection}]

We shall heavily exploit the results of \citet{cule2010theoretical}. Let $U := \mathrm{int}(\mathrm{supp}(\flc)),$ and let $\mathscr{F}_a$ be the event that $\int_{\mathbb{R}^d} e^{a\|x\|}|\zerohatf(x) - \flc(x)| dx \to 0.$ We know that there exists  $a > 0$ such that $\PPP{\mathscr{F}_a} = 1$ \citep[][Thm. 4]{cule2010theoretical}. 

    

    Now, as argued in the proof of Theorem 4 of Cule and Samworth (page 264, paragraph starting with ``we claim that''), there exists some $a_1 > 0$ and $b_1 \in \mathbb{R}$ such that for every $x\in\mathbb{R}^d$, 
    \begin{equation}
     \sup_{n} \zerohatf(x) \le e^{-a_1\|x\| + b_1}. \label{eq:fhat_0n_UB}
    \end{equation}
    Furthermore, as stated in Theorem 4 of Cule and Samworth, since $\flc$ is log-concave, there exists some $a_2 > 0$ and $b_2 \in \mathbb{R}$ such that for all $x\in\mathbb{R}^d$, $\flc(x) \le e^{-a_2\|x\| + b_2}$. Let $a_0 := \min(a_1,a_2)$ and $b_0 := \max(b_1, b_2),$ so that $\max( \flc(x), \sup_n \zerohatf(x)) \le e^{-a_0\|x\| + b_0}$ for every $x$. Let $R_\eta$ be such that \[ \int_{\|x\| > R_\eta} e^{-a_0\|x\| + b_0} \mathrm{d}x \le \eta/3.\]
    
    Further, given $\mathscr{F}_a,$ we note that on any compact subset of $U$, $\zerohatf \to \flc$ uniformly, as argued by \citet{cule2010theoretical} using Theorem 10.8 of \citet{rockafellar1997convex}.\footnote{While \citet{cule2010theoretical} explicitly argue this only for balls contained within $U$, this in fact follows for any compact subset. The gist of the argument is as follows. The results of \citet{cule2010theoretical} imply that $\log \zerohatf$ converges to $\log \flc$ on all positive Lebesgue measure sets (and thus on a dense subset of $U$). Since $\log \flc$ and $\log \zerohatf$ are concave functions, Theorem 10.8 of \citet{rockafellar1997convex} implies that convergence is uniform on any compact subset of $U.$ Finally, since both $\log\zerohatf$ and $\log \flc$ are uniformly bounded (by $b_0$ in our notation), this uniform convergence extends to $\zerohatf$ and $\flc$ due to the uniform continuity of the exponential function on sets of the form $[-\infty, C]$ for $C < \infty$.} 
    
    Let $V:= U \cap \{x \in \mathbb{R}^d: \|x\| \le R_\eta\}$. Since $V$ has finite Lebesgue measure, by a standard consequence of regularity of the Lebesgue measure, it contains a compact set $W$ such that $\mathrm{Leb}_d(V \setminus W) \le e^{-b_0}\eta/3$. Fix such a $W$. 
    Finally, note that since $W$ is compact, $\zerohatf \to \flc$ uniformly on $W$, and thus there exists some $n_0$ such that under $\mathscr{F}_a$, it follows that if $n > n_0$, then 
    \begin{equation}
       \max_{x \in W} |\zerohatf(x) - \flc(x)| \le \frac{\eta}{6\mathrm{Leb}_d(\{x \in \mathbb{R}^d : \|x\| \le R_\eta\})}. \label{eq:W_condition}
    \end{equation}
    
    We are now in a position to construct the functions $u_\eta$ and $v_\eta$. Indeed, observe that in each of the sets constructed above, we have explicit bounds available on $\zerohatf$. In particular, we let \begin{align*}
        u_\eta(x) &:= \begin{cases} \flc(x) - \frac{\eta}{6\mathrm{Leb}_d(\{x \in \mathbb{R}^d : \|x\| \le R_\eta\})} & x \in W \\ 0 & x \not\in W\end{cases},\\
        v_\eta(x) &:= \begin{cases} \flc(x) + \frac{\eta}{6\mathrm{Leb}_d(\{x \in \mathbb{R}^d : \|x\| \le R_\eta\})} & x \in W \\ e^{b_0} & x \in V \setminus W \\ e^{-a_0\|x\| + b_0} & x \in U \cap \{\|x\| > R_\eta\} \\ 0 & x \in U^c\end{cases}.
    \end{align*}
    
    We first note that $v_\eta \ge \flc \ge u_\eta$ everywhere, and \begin{align*}
        &\int_{\mathbb{R}^d} \left(v_\eta(x) - u_\eta(x) \right) \,\mathrm{d}x \\
        &= \left( \int_W + \int_{V \setminus W} + \int_{U \cap \{\|x\| > R_\eta\}} \right) (v_\eta(x) - u_\eta(x)) dx \\
        &\le \frac{\eta}{3} \frac{\mathrm{Leb}_d(W)}{\mathrm{Leb}_d(\{x \in \mathbb{R}^d : \|x\| \le R_\eta\})} + e^{b_0} \mathrm{Leb}_d(V \setminus W) + \int_{\{x \in\mathbb{R}^d : \|x\| > R_\eta\}} e^{-a_0\|x\| + b_0} dx \\
        &\le \eta/3 + \eta/3 + \eta/3  = \eta.
    \end{align*} 
    
    For our choice of $n_0$, under the event $\mathscr{F}_a$, (\ref{eq:fhat_0n_UB}) and (\ref{eq:W_condition}) guarantee that $u_\eta(x) \le \zerohatf(x) \le v_\eta(x)$ for all $n \geq n_0$. Since $\PPP{\mathscr{F}_a} = 1,$ the conclusion follows.    
\end{proof}

\paragraph{Pointwise convergence of $\zerohatf$ implies asymptotic blowup.} In this section, we restate and prove Lemma~\ref{lemma:pointwise_control_implies_blowup}. We use the results of \citet{wong1995probability}, which are stated in terms of the Hellinger bracketing entropy. Since we are only interested in consistency, the hypotheses underlying these can be satisfied by taking a small enough 
$\eta_0 > 0$ and analyzing the behavior over the corresponding bracket $[u_{\eta_0}, v_{\eta_0}]$, along with exploiting the fact that if $h(\flc, f^*)$ is large, then so is $h(g, f^*)$ for any $g \in [u_{\eta_0}, v_{\eta_0}]$. 

For completeness, we briefly describe Hellinger bracketing entropy. For a pair of functions $v \le u$, the bracket $[v,u]$ is the set of functions that is sandwiched between $v$ and $u$, i.e., $\{f : v(x) \le f(x) \le u(x) \text{ for all } x\}$. The Hellinger size of such a bracket is defined as the Hellinger distance $h(v,u)$, where we are assuming that $v$ and $u$ are nonnegative. The Hellinger bracketing entropy of a class of densities $\mathcal{F}$ at a scale $\zeta$ is denoted $\mathcal{H}_{[]}(\mathcal{F}, h, \zeta)$, and it equals the logarithm of the smallest number of brackets of size at most $\zeta$ that cover the class $\mathcal{F}$. Importantly, the boundary functions $u,v$ need not belong to $\mathcal{F}$ itself (and so need not themselves be densities). We also make the observation that $\mathcal{H}_{[]}(\cdot, \cdot, \zeta)$ is nonincreasing in $\zeta$.

We shall also use the following property of Hellinger divergence: For any pair of non-negative functions $u,v$, \begin{equation}\label{eqn:dH_squared_is_smaller_than_L1}
     h^2(u,v) \le \frac{1}{2} \|u - v\|_1,
\end{equation}
where the $1$-norm is in the $L_1(\mathrm{Leb}_d)$ sense, and $h^2(u,v) = (h(u,v))^2$. To see this, observe that \begin{align*} h^2(u,v) &= \frac{1}{2} \int (\sqrt{u(x)} - \sqrt{v(x)})^2 \mathrm{d}x \\ &\le \frac{1}{2} \int |\sqrt{u(x)} - \sqrt{v(x)}| \cdot |\sqrt{u(x)} + \sqrt{v(x)}| \mathrm{d}x \\ &= \frac{1}{2} \int |u(x) - v(x)| \mathrm{d}x. \end{align*}

\lemPointwiseControlImpliesBlowup*

\begin{proof}[Proof of Lemma~\ref{lemma:pointwise_control_implies_blowup}]

Since the bracket $[u_{\eta_0}, v_{\eta_0}]$ satisfies Lemma~\ref{lemma:mle_close_to_M_projection}, note that $\flc \in [u_{\eta_0}, v_{\eta_0}]$ and $\int_{\mathbb{R}^d} \left(v_{\eta_0}(x) - u_{\eta_0}(x) \right) dx \leq \eta_0$.

    This proof will use the following generic result, where we slightly weaken the constants for convenience.
    \begin{thm} \label{thm:wong1995restate}
        \emph{\citep[Theorem 1]{wong1995probability}} Let $\mathcal{F}$ be a class of densities, and let $\zeta_0$ be such that \begin{equation}\label{eqn:wong_shen_condition}
             \int_{\zeta_0^2/2^8}^{\sqrt{2} \zeta_0} \mathcal{H}^{1/2}_{[]}(\mathcal{F}, h, \zeta/10) \mathrm{d}\zeta \le \frac{1}{2^{11}}\sqrt{n} \zeta_0^2.
        \end{equation}
        Then there exists a universal constant $C > 2^{-14}$ such that if $X_i \sim f_0$ iid, then \begin{equation}\label{eqn:wong_shen_conclusion} \PP_{X_i \overset{\mathrm{iid}}{\sim} f_0}\left( \inf_{g \in \mathcal{F} : h(f_0, g) \ge \zeta_0} \prod_{i = 1}^{n/2}  \frac{f_0(X_i)}{g(X_i)} \le \exp( n\zeta_0^2/48) \right) \le 4 \exp(- C n \zeta_0^2).\end{equation}
    \end{thm}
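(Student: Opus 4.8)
The plan is to reconstruct Wong and Shen's argument, whose essential content is a maximal inequality for likelihood ratios driven by the bracketing entropy integral. Writing $m = n/2$ for the number of samples, the first move is to reduce the stated likelihood-ratio event to an empirical-process statement about square-root ratios. Since $\inf_{g}\prod_{i=1}^m f_0(X_i)/g(X_i) \le \exp(n\zeta_0^2/48)$ is exactly the event that $\sup_{g}\prod_{i=1}^m \sqrt{g(X_i)/f_0(X_i)} \ge \exp(-n\zeta_0^2/96)$, where the supremum runs over $\{g\in\mathcal{F}: h(f_0,g)\ge\zeta_0\}$, I would work throughout with the nonnegative variables $W_i(g) = \sqrt{g(X_i)/f_0(X_i)}$. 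The reason to pass through square roots rather than logs is that $\log(g/f_0)$ has heavy (even unbounded) lower tails, whereas the exponential moments of $W_i(g)$ are directly and sharply controlled by the Hellinger distance, which is what makes the peeling constants tractable.

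Second, I would establish the single-function anchor. For fixed $g$, Markov's inequality applied to the nonnegative product gives
\[ \PP\Big(\prod_{i=1}^m W_i(g) \ge t\Big) \le t^{-1}\,\mathbb{E}_{f_0}\big[W_1(g)\big]^m = t^{-1}\big(1 - h^2(f_0,g)\big)^m \le t^{-1}\exp\big(-m\, h^2(f_0,g)\big), \]
using the identity $\mathbb{E}_{f_0}\sqrt{g/f_0} = \int\sqrt{gf_0} = 1 - h^2(f_0,g)$ under the paper's Hellinger normalization. With $t = \exp(-n\zeta_0^2/96) = \exp(-m\zeta_0^2/48)$ and $h(f_0,g)\ge\zeta_0$, this is exponentially small in $m\zeta_0^2$; this decay must survive the passage to a supremum over the infinite class.

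The third and central step upgrades the pointwise bound to a uniform one over $\{h(f_0,g)\ge\zeta_0\}$ by chaining against Hellinger brackets, and this is where hypothesis (\ref{eqn:wong_shen_condition}) is consumed. The key structural observation is that square-root ratios are monotone within a bracket: if $g\in[v,u]$ then $W_i(g)\le\sqrt{u(X_i)/f_0(X_i)}$ pointwise, so the supremum of $\prod_i W_i(g)$ over the bracket is dominated by the fixed upper envelope $\prod_i\sqrt{u/f_0}$, to which the anchor applies after accounting for the bracket width via $\int\sqrt{uf_0}\le\int\sqrt{vf_0}+\sqrt2\,h(u,v)$. I would then peel $\{h(f_0,g)\ge\zeta_0\}$ into Hellinger shells $S_j=\{2^j\zeta_0\le h<2^{j+1}\zeta_0\}$; on $S_j$ the anchor contributes decay $\exp(-c\,m\,4^j\zeta_0^2)$, while covering $S_j$ requires about $\exp(\mathcal{H}_{[]}(\mathcal{F},h,\delta_j))$ brackets at a scale $\delta_j\propto 2^j\zeta_0^2$. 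A plain union bound would cost the full entropy $\mathcal{H}_{[]}$, which is too large; the square-root entropy $\mathcal{H}_{[]}^{1/2}$ appearing in (\ref{eqn:wong_shen_condition}) is the signature of a Dudley-type chaining within each shell, which replaces the entropy by its integral over scales from the finest useful resolution (the lower limit $\zeta_0^2/2^8$, since refining brackets below width $\sim\zeta_0^2$ no longer improves the width correction) up to the shell radius (the upper limit $\sqrt2\,\zeta_0$). Condition (\ref{eqn:wong_shen_condition}) is precisely the statement that this chaining sum is at most a small fraction of $\sqrt m\,\zeta_0^2$, so the entropy cost is dominated by the $\exp(-c\,m\,4^j\zeta_0^2)$ decay; summing the geometric series over $j\ge 0$ yields $4\exp(-Cn\zeta_0^2)$, with the factor $4$ and the constant $C>2^{-14}$ emerging from the bookkeeping of the peeling and chaining constants (the mild weakening relative to Wong and Shen being absorbed here).

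The main obstacle is exactly this third step: producing a chaining and peeling maximal inequality for the likelihood-ratio process with explicit, sharp constants. The technical crux is controlling the exponential moments of the chaining increments $W_i(g)-W_i(g')$ uniformly over the class — these are sub-exponential but not bounded, and the naive route through $\log(g/f_0)$ fails. The resolution I would follow is to keep everything phrased through the square-root ratios and bracket upper envelopes, so that every exponential-moment estimate collapses to a Hellinger computation of the form $\int\sqrt{uf_0}$. The most delicate accounting is to thread the peeling so that the single entropy-integral condition at scale $\zeta_0$ controls the chaining cost in \emph{every} shell simultaneously — this works because the $4^j$ gain in the exponential decay across shells overwhelms the at-most-polynomial growth of the entropy integral, but verifying this balance while preserving the stated constants is where the real labor lies.
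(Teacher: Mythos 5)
The paper offers no proof of this statement at all---it is imported, with slightly weakened constants and $n \mapsto n/2$, directly from Theorem~1 of \citet{wong1995probability}---and your reconstruction follows precisely the strategy of that cited source: the square-root transform with the single-function anchor $\mathbb{E}_{f_0}\bigl[\sqrt{g/f_0}\bigr] = 1 - h^2(f_0,g)$, domination by bracket upper envelopes with the width correction $\int \sqrt{u f_0} \le \int \sqrt{v f_0} + \sqrt{2}\,h(u,v)$, peeling into Hellinger shells, and an entropy-integral chaining argument whose cost is absorbed by the $4^j$ gain in exponential decay across shells. Your outline is correct as far as it goes and rightly flags the one step left unexecuted---Wong and Shen's adaptive-truncation chaining for the unbounded increments (their Theorem~3), together with the product-to-sum conversion via $\log x \le x - 1$ applied to the square-root ratios---and it is worth noting that in the paper's only application the class is a single bracket, so the entropy integral vanishes and your ``plain union bound'' envelope argument already suffices.
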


    Now, recall that $\varepsilon = h(f^*, \flc) > 0.$   Let us choose $\eta_0$ such that for the bracket $[u_{\eta_0}, v_{\eta_0}]$ from the proof of Lemma~\ref{lemma:mle_close_to_M_projection}, \begin{itemize}
        \item $\forall g \in [u_{\eta_0}, v_{\eta_0}]$, $h(g, f^*) \ge \varepsilon \sqrt{24/25},$
        \item $\sqrt{\eta_0/2} \le (\varepsilon^2 / 10) (24/25) \cdot 2^{-8}$ 
    \end{itemize}
    
    It is evident that the second criterion can be met by taking $\eta_0$ small enough. For the first, observe that for any $g \in [u_{\eta_0}, v_{\eta_0}],$ 
    \[ h(g, f^*) \ge h(\flc, f^*) - h(g, \flc),\] due to the triangle inequality (which applies since $h$ is a metric). Further, observe that \begin{align*}
        h(g, \flc) &\le \sqrt{\frac{1}{2} \int |g(x) - \flc(x) | \,\mathrm{d}x}\\
        &= \sqrt{\frac{1}{2} \int \left(\max( g(x), \flc(x)) - \min(g(x), \flc(x))\right)  \,\mathrm{d}x} \\
        &\le \sqrt{\frac{1}{2} \int \left(v_{\eta_0}(x) - u_{\eta_0}(x)\right) \,\mathrm{d}x} \leq \sqrt{\eta_0/2},
    \end{align*} 
    where we have used the fact that both $\flc$ and $g$  lie in  $[u_{\eta_0}, v_{\eta_0}].$ Since the upper bound decays as $\sqrt{\eta_0},$ taking $\eta_0$ small enough also yields that $h(g, f^*) \ge \varepsilon \sqrt{24/25}$.
    
    But, with this in hand, we observe that since $[u_{\eta_0}, v_{\eta_0}]$ is a bracket with Hellinger size at most $\sqrt{\eta_0/2} < (\varepsilon^2/10) (24/25) \cdot 2^{-8},$ defining $\zeta_0 := \varepsilon \sqrt{24/25} ,$ we find that   $$\mathcal{H}_{[]}([u_{\eta_0}, v_{\eta_0}], h, (\zeta_0^2 / 2^8) (1/10)) = \log(1) = 0.$$
    
    As a result, the conclusion of Theorem~\ref{thm:wong1995restate} above applies to $[u_{\eta_0}, v_{\eta_0}]$ with the above value of $\zeta_0$. (Since bracketing entropies decrease with $\zeta,$ the integral in the condition (\ref{eqn:wong_shen_condition}) evaluates to $0$ under our choice of $\zeta_0$.) Instantiating $f_0 = f^*$ and $\mathcal{F} = [u_{\eta_0}, v_{\eta_0}]$ in (\ref{eqn:wong_shen_conclusion}), we find that \[ \PPP{ \inf_{g \in [u_{\eta_0}, v_{\eta_0}], h(g, f^*) \ge \zeta_0} \prod_{i \in \dee_{0,n}} \frac{f^*(Y_i)}{g(Y_i)} \le \exp( n \zeta_0^2/48)} \le 4\exp(- Cn \zeta_0^2). \]
    
    The conclusion then follows on observing that we chose $\eta_0$ such that $h(g,f^*) \ge \varepsilon \sqrt{24/25} = \zeta_0$ for all $g \in [u_{\eta_0}, v_{\eta_0}].$ So the infimum in the probability expression extends to all $g \in [u_{\eta_0}, v_{\eta_0}].$ Finally, we observe that $\zeta_0^2 /48= \varepsilon^2/50.$\qedhere

\end{proof}


\paragraph{Likelihood ratios do not blow up if KL divergence is small.} In this section, we restate and prove Lemma~\ref{lem:R_n_small_markov}. This final piece of the puzzle is a generic application of Tchebycheff's inequality to a log-likelihood ratio between two densities that are close in the sense of Assumption~\ref{assumption:estimability}.

\lemRnSmallMarkov*

\begin{proof}[Proof of Lemma~\ref{lem:R_n_small_markov}]

    
    
    Observe that 
    \begin{align*}
        \PPP{ R_n \ge \exp(n \varepsilon^2/100) \: \middle| \: \mathcal{Y}_{1,n} } &= \PPP{ \prod_{i \in \dee_{0,n} } \frac{f^*(Y_i)}{\onehatf(Y_i)} \ge \exp(n \varepsilon^2/100) \: \middle| \: \mathcal{Y}_{1,n} }\\
        &= \PPP{ \sum_{i \in \dee_{0,n} } \log \frac{f^*(Y_i)}{\onehatf(Y_i)} \ge n \varepsilon^2/100 \: \middle| \: \mathcal{Y}_{1,n} } \\
        &= \PPP{ \frac{2}{n} \sum_{i \in \dee_{0,n} } \left(\log \frac{f^*(Y_i)}{\onehatf(Y_i)} - \mu\right) \ge \varepsilon^2/50 - \mu \: \middle| \: \mathcal{Y}_{1,n}  },
    \end{align*}
    where we use $\mu := \mathbb{E}[ \log (f^*(Y_i) / \onehatf(Y_i)) \mid \mathcal{Y}_{1,n}]$ for $i \in \dee_{0,n}$. (Since the data are iid, this is invariant to the choice of $i$.) Now, if $\mu \ge \varepsilon^2/50,$ then we may simply upper bound this final probability by 1. On the other hand, if $\mu < \varepsilon^2/50,$ then we observe that the right hand side is positive, and so we may upper bound the above quantity by noting \begin{align*}
        \PPP{\frac{4}{n^2} \left( \sum_{i \in \dee_{0,n}} \left( \log \frac{f^*(Y_i)}{\onehatf(Y_i)} - \mu \right) \right)^2 \ge (\varepsilon^2/50 - \mu)^2 \: \middle| \: \mathcal{Y}_{1,n}} \\ \le \frac{4\mathbb{E}\left[ \left( \sum_{i \in \dee_{0,n}} \left(\log (f^*(Y_i)/\onehatf(Y_i)) - \mu \right) \right)^2 \: \middle| \: \mathcal{Y}_{1,n} \right] }{n^2 (\varepsilon^2/50 - \mu)^2}.
    \end{align*} 
    
    We observe that due to the independence of the data, the summands in \[\sum_{i \in \dee_{0,n}} \left(\log(f^*(Y_i)/\onehatf(Y_i)) - \mu\right) \] are centered and iid given $\mathcal{Y}_{1,n}$. Therefore the conditional mean of the square of the sum is its conditional variance. The additivity of variance over sums of independent random variables yields that if $\mu < \varepsilon^2/50,$ then \[ \PPP{ \prod_{i \in \dee_{0,n} } \frac{f^*(Y_i)}{\onehatf(Y_i)} \ge \exp(n \varepsilon^2/100) \: \middle| \: \mathcal{Y}_{1,n} } \le \frac{ 4\int f^*(x) (\log (f^*(x)/\onehatf(x)) - \mu)^2 \mathrm{d}x }{n (\varepsilon^2/50 - \mu)^2} .\]
    
    The conclusion now follows upon recalling that $\mu = \int f^*(x) \log (f^*(x) / \onehatf(x)) \mathrm{d}x = \dkl(f^*\|\onehatf),$ and that variance is less than or equal to a raw second moment. \qedhere
\end{proof}

\section{Additional Normal Mixture Simulations} \label{app:addl_sims}

\subsection{Visualizing Log-concave MLEs} \label{app:addl_visualize}

In Section~\ref{sec:solving}, we visualize the log-concave MLEs of samples from two-component Gaussian mixtures of the form $$f^*(x) = 0.5\phi_d(x) + 0.5\phi_d(x-\mu).$$ Section~\ref{sec:solving} considers the $n=5000$ and $d=1$ setting for both log-concave ($\|\mu\|\leq 2$) and not log-concave ($\|\mu\| > 2$) true densities. We provide visualizations in several additional settings.

In the one-dimensional setting, we compute the log-concave MLEs $\hat{f}_n$ on samples $\{x_1, \ldots, x_n\}$. Figures \ref{fig:logconc_densities_n50_d1} and \ref{fig:logconc_densities_n5000_d1_app} show the true and log-concave MLE densities for samples with $n=50$ and $n=5000$, respectively. These simulations use both the \texttt{LogConcDEAD} and \texttt{logcondens} packages. \texttt{logcondens} only works in one dimension but is much faster than \texttt{LogConcDEAD}. Visually, we see that these two packages produce approximately the same densities. Furthermore, we include values of $n^{-1} \sum_{i=1}^n \log(f^*(x_i))$ on the true density plots and $n^{-1} \sum_{i=1}^n \log(\hat{f}_n(x_i))$ on the log-concave MLE plots. The log likelihood is approximately the same for the two density estimation methods. 

When $\|\mu\| = 0$ or $\|\mu\| = 2$, the true density is log-concave. As we increase from $n=50$ to $n = 5000$, the log-concave MLE becomes a better approximation to the true density. We see this improvement both visually and numerically. That is, $n^{-1} \sum_{i=1}^n \log(\hat{f}_n(x_i))$ is closer to $n^{-1} \sum_{i=1}^n \log(f^*(x_i))$ for larger $n$. When $\|\mu\| = 4$, the underlying density is not log-concave. The log-concave MLE at $\|\mu\| = 4$ and  $n=5000$ seems to have normal tails, but it is nearly uniform in the middle.

\begin{figure}[H]
\centering
\includegraphics[scale=.6]{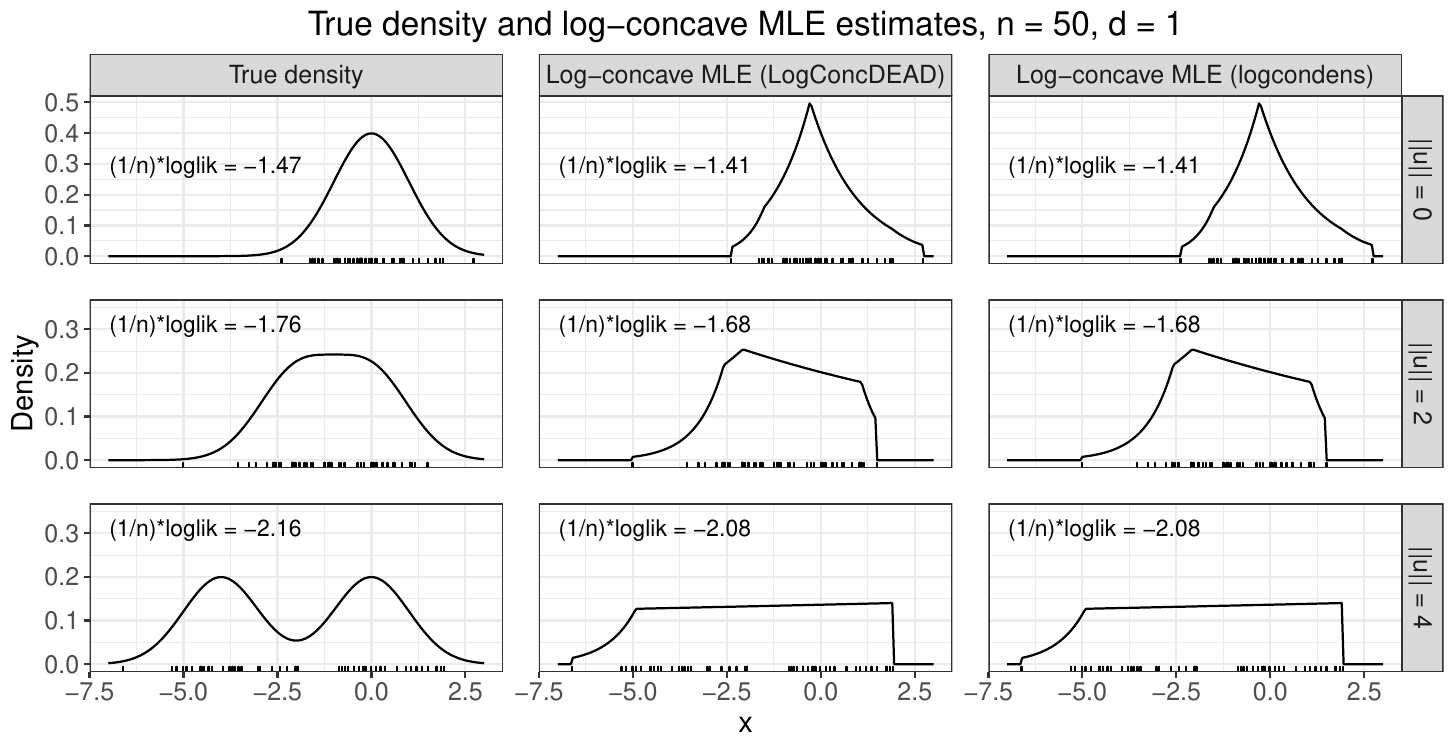}
\caption{Density plots from fitting log-concave MLE on $n = 50$ observations. Tick marks represent the observations. The true density is the Normal mixture $f^*(x) = 0.5\phi_1(x) + 0.5\phi_1(x-\mu)$. In all settings, the \texttt{LogConcDEAD} and \texttt{logcondens} packages return similar results.}
\label{fig:logconc_densities_n50_d1}
\end{figure} 

\begin{figure}[H]
\centering
\includegraphics[scale=.6]{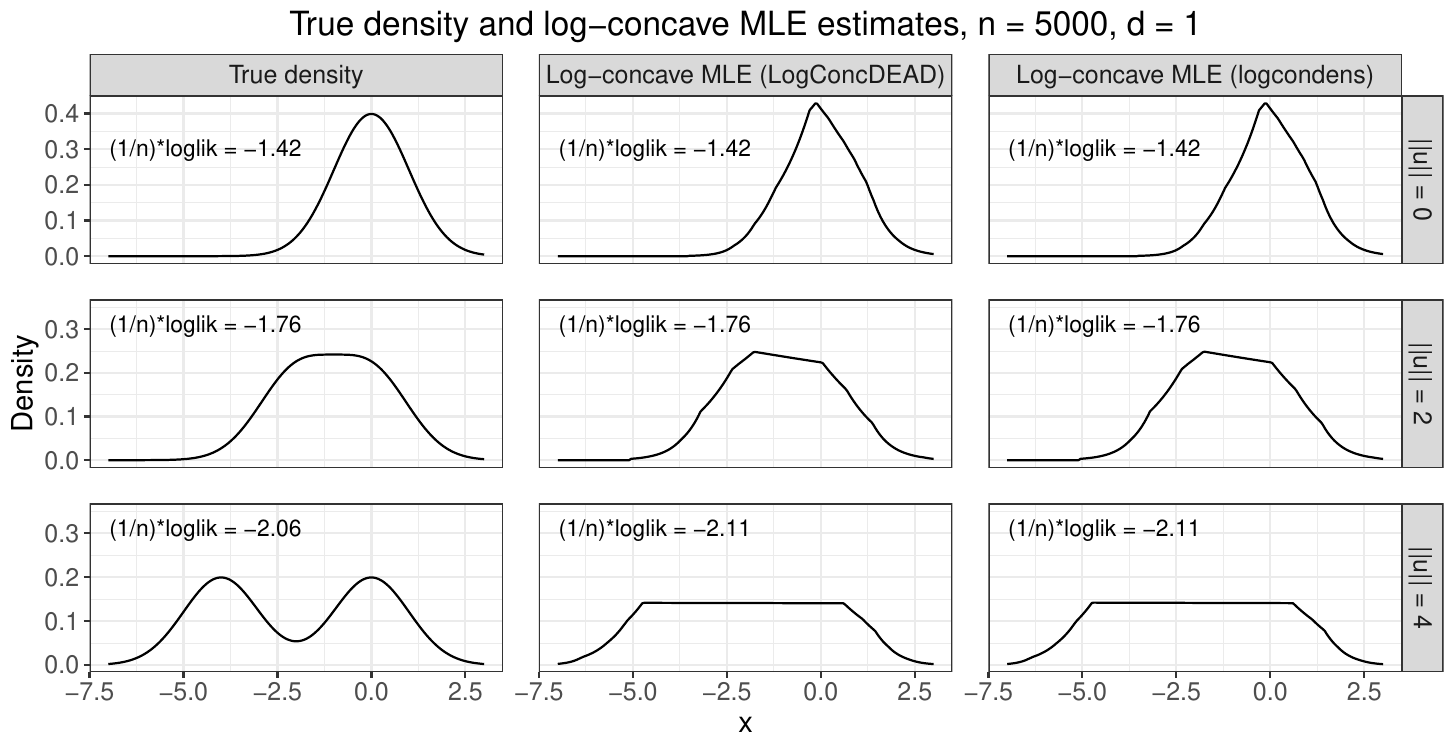}
\caption{Density plots from fitting log-concave MLE on $n = 5000$ observations. The true density is the Normal mixture $f^*(x) = 0.5\phi_1(x) + 0.5\phi_1(x-\mu)$. In all settings, the \texttt{LogConcDEAD} and \texttt{logcondens} packages return similar results. In the $\|\mu\| = 0$ and $\|\mu\| = 2$ log-concave settings, the log-concave MLE on 5000 observations is close to the true density. In the $\|\mu\| = 4$ non-log-concave setting, the log-concave densities appear to have normal tails and uniform centers.}
\label{fig:logconc_densities_n5000_d1_app}
\end{figure} 

We observe similar behavior in the two-dimensional setting. In two dimensions, we use $\mu = (-\|\mu\|, 0)$. Figures~\ref{fig:logconc_contours_n50_d2} and \ref{fig:logconc_contours_n500_d2} show two-dimensional contour plots for the true and log-concave MLEs with $n=50$ and $n=500$. In Figure~\ref{fig:logconc_contours_n50_d2}, we can clearly see that the support of the log-concave MLE is the convex hull of the observed sample. For $\|\mu\| = 0$ and $\|\mu\| = 2$, the true density and log-concave MLE have more similar appearances when $n=500$. In addition, $n^{-1} \sum_{i=1}^n \log(\hat{f}_n(x_i))$ is closer to $n^{-1} \sum_{i=1}^n \log(f^*(x_i))$ for larger $n$. When $\|\mu\| = 4$, the log-concave MLE density is nearly flat in the center of the density.
 
\begin{figure}[H]
\centering
\includegraphics[scale=.55]{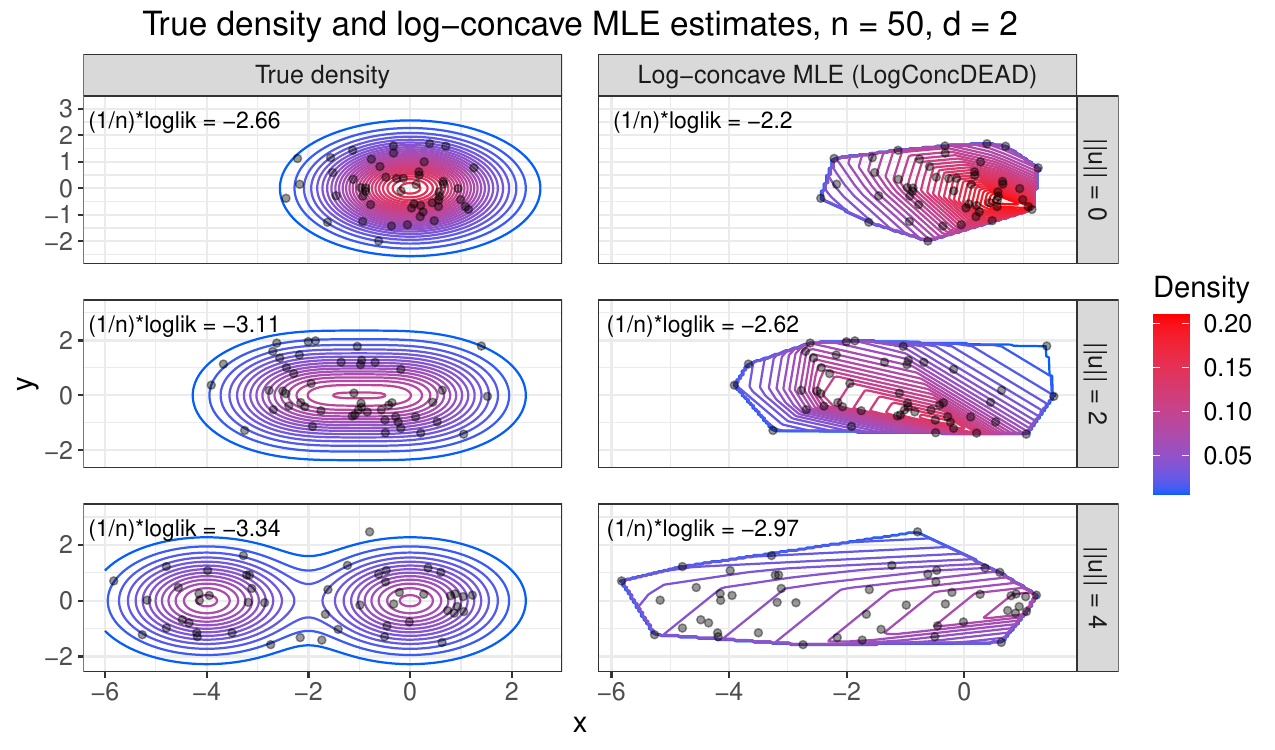}
\caption{Contour plots from fitting log-concave MLE on $n = 50$ observations. Points represent the 50 observations. The true density is the Normal mixture $f^*(x) = 0.5\phi_2(x) + 0.5\phi_2(x-\mu)$. The log-concave MLE has a density of zero outside of the convex hull of the observations.}
\label{fig:logconc_contours_n50_d2}
\end{figure} 

\begin{figure}[H]
\centering
\includegraphics[scale=.55]{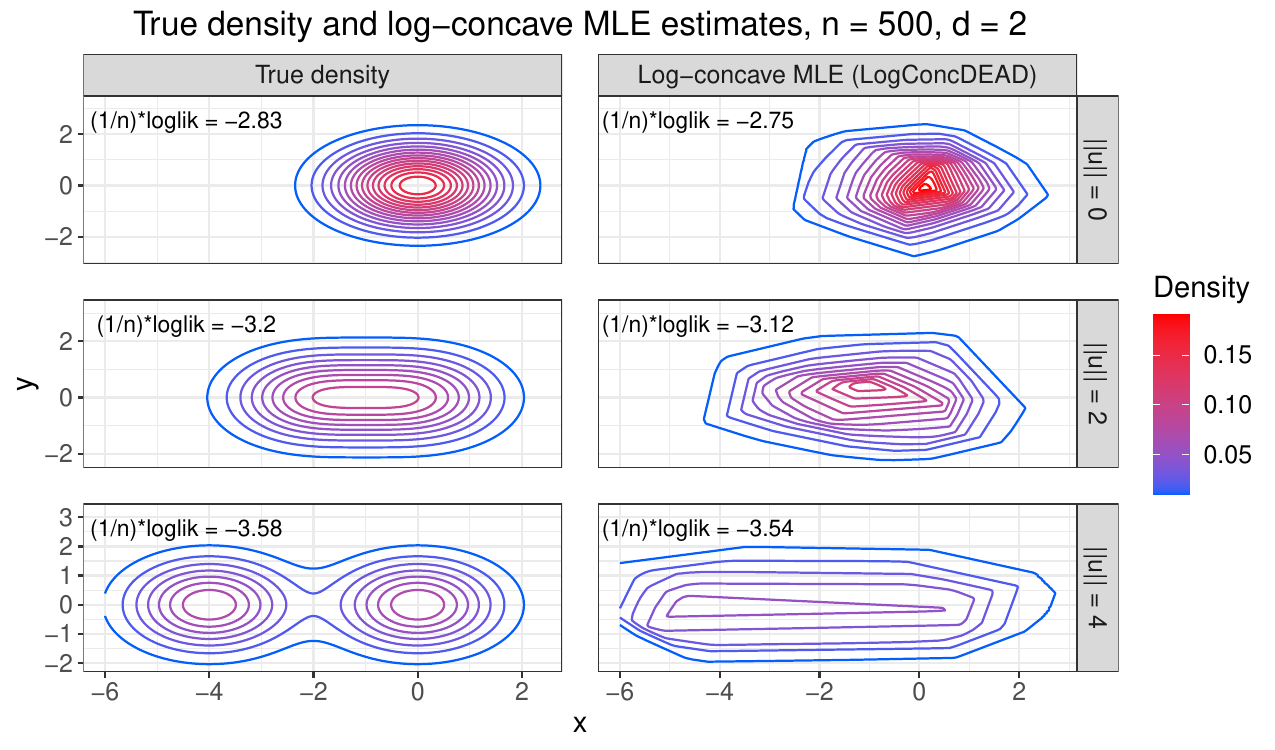}
\caption{Contour plots from fitting log-concave MLE on $n = 500$ observations. The true density is the Normal mixture $f^*(x) = 0.5\phi_2(x) + 0.5\phi_2(x-\mu)$. In the $\|\mu\| = 0$ and $\|\mu\| = 2$ settings, the log-concave MLE over 500 observations has a similar appearance to the true density. When $\|\mu\| = 4$, the interior of the log-concave MLE density appears to be nearly uniform, similar to the $d=1$ case.}
\label{fig:logconc_contours_n500_d2}
\end{figure} 

\subsection{Permutation Test under Additional Parameter Settings} \label{app:perm_test_addl}

Figure~\ref{fig:perm_randproj_n100} demonstrated that the permutation test for log-concavity was not valid for $d\geq 4$ at $n=100$. We consider whether these results still hold with a larger sample size. Figure~\ref{fig:perm_test_reject_n250} simulates the permutation test at $n = 250$. Compared to the $n = 100$ setting, this setting has slightly higher power at $\|\mu\| = 4$ and $\|\mu\| = 5$ when $d = 1$ or $d = 2$. We still see that the rejection proportion is much higher than 0.10 for $\|\mu\| \leq 2$ at $d = 4$ and $d = 5$.

\begin{figure}[H]
\centering
\includegraphics[scale=.7]{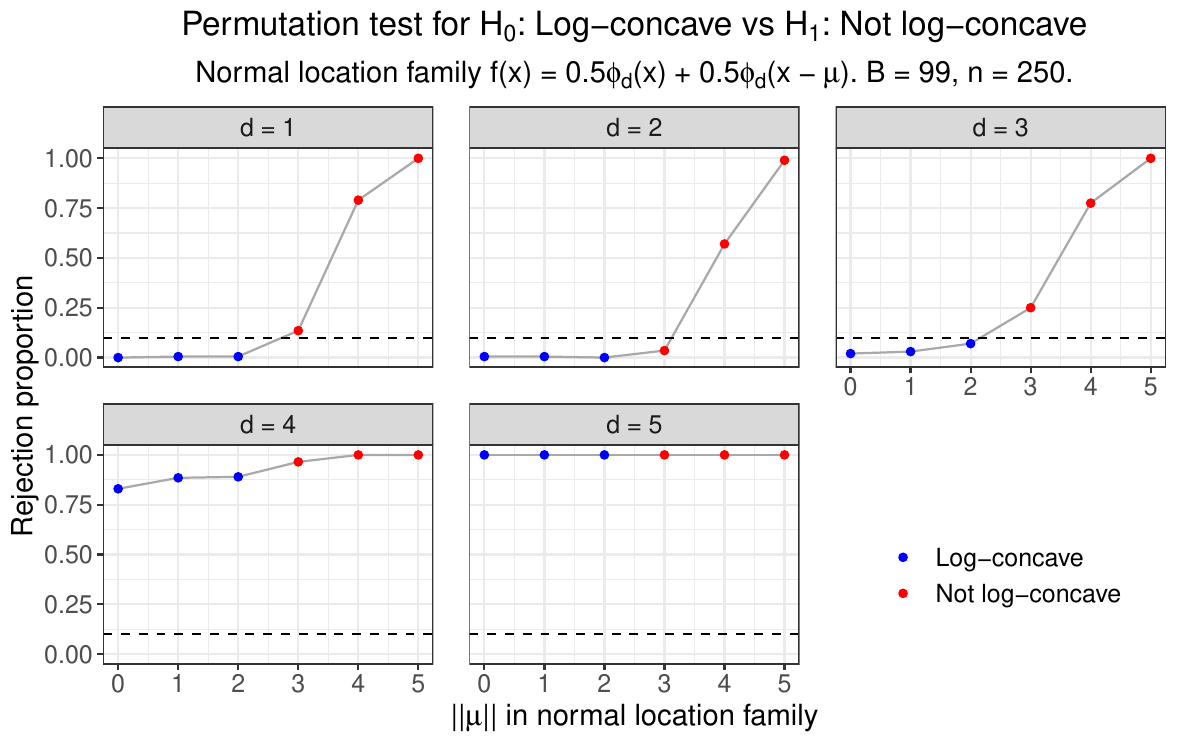}
\caption{Rejection proportions for test of $H_0: f^*$ is log-concave versus $H_1: f^*$ is not log-concave, using the permutation test from \cite{cule2010}. We set $\alpha = 0.10$ and $n=250$, and we perform 200 simulations at each combination of ($d, \|\mu\|$). The test permutes the observations $B = 99$ times. The results are similar to Figure~\ref{fig:perm_randproj_n100}. The permutation test is valid or approximately valid for $d\leq 3$, but it is not valid for $d\geq 4$.}
\label{fig:perm_test_reject_n250}
\end{figure}

Next, we consider whether the permutation test results hold if we increase $B$, the number of times that we shuffle the sample. In Figure~\ref{fig:perm_test_reject_B_vary}, we show the results of simulations at $B \in \{100, 200, 300, 400, 500\}$ on $n=100$ observations. Each row corresponds to the same set of simulations performed at five values of $B$. Looking across each row, we do not see an effect as $B$ increases from 100 to 500. In these analyses, the lack of validity at $d=4$ and $d=5$ remains as we increase $n$ or increase $B$.

\begin{figure}[H]
\centering
\includegraphics[scale=.8]{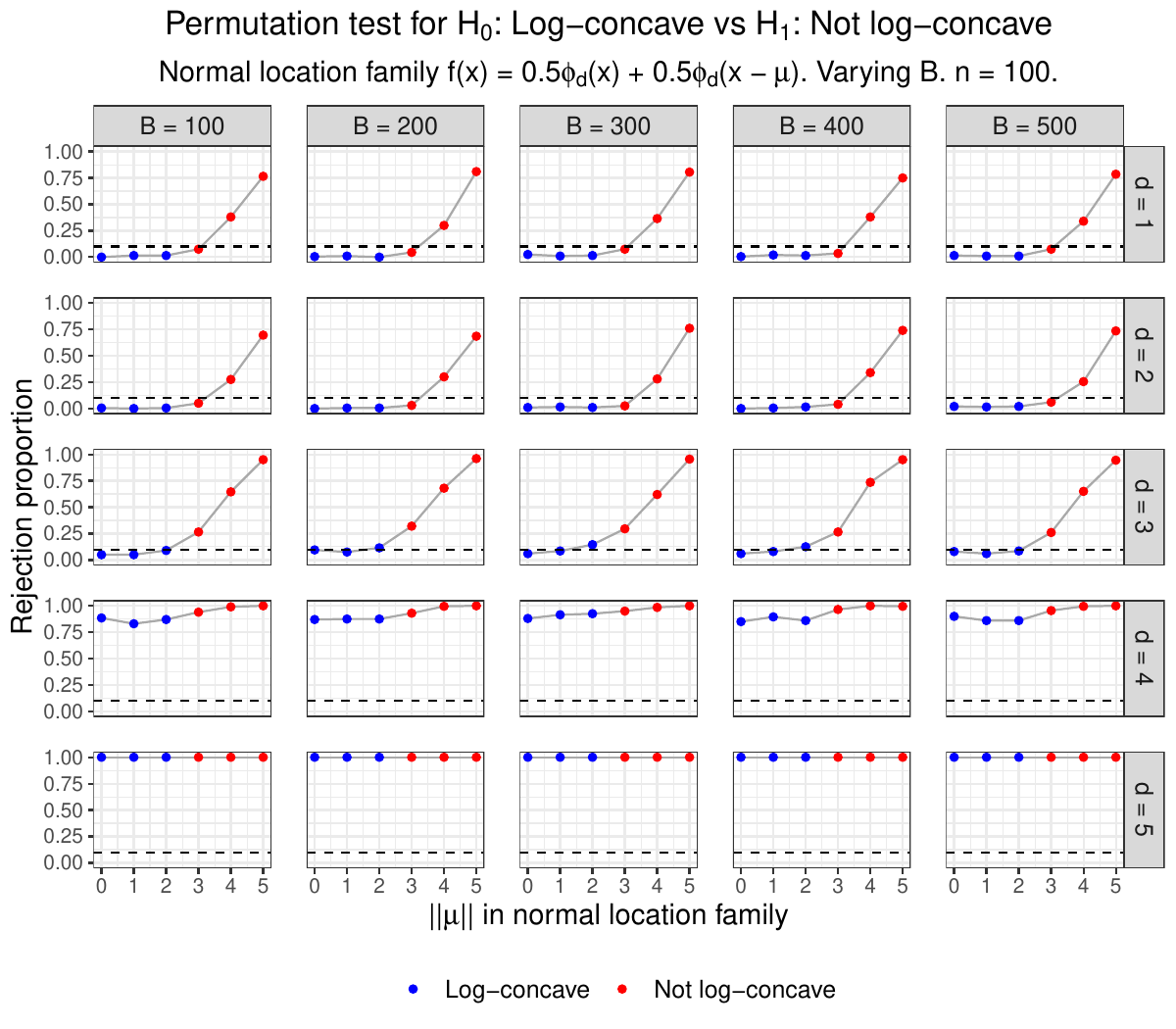}
\caption{Rejection proportions for test of $H_0: f^*$ is log-concave versus $H_1: f^*$ is not log-concave, using the permutation test from \cite{cule2010}. We set $\alpha = 0.10$ and $n=100$, and we perform 200 simulations at each combination of ($B, d, \|\mu\|$). At these larger numbers of shuffles $B$, the permutation test still is not valid for $d\geq 4$.}
\label{fig:perm_test_reject_B_vary}
\end{figure}

Recall that the test statistic is $T = \sup_{A \in \mathcal{A}_0} |P_n(A) - P^*_n(A)|$, and the test statistic on a shuffled sample is $T_b^* = \sup_{A\in\mathcal{A}_0} |P_{n,b}(A) - P_{n,b}^*(A)|$. Both $P$ and $P^*$ are proportions (out of $n$ observations), so $T$ and $T_b^*$ can only take on finitely many values. We consider whether the conservativeness of the test (e.g., $d=1$) or the lack of validity of test (e.g., $d=5$) is due to this discrete nature. Figure~\ref{fig:perm_test_quantiles} plots the distribution of shuffled test statistics $T_b^*$ across eight simulations. The left panels consider the $d = 1$ case at all combinations of $\|\mu\| \in \{0, 2\}$ and $B \in \{100, 500\}$. We see that ``bunching'' of the quantiles is not responsible for the test being conservative in this case. (For instance, if the $90^{th}$ percentile were equal to the $99^{th}$ percentile, then it would make sense for the method to be conservative at $\alpha = 0.10$.) Instead, the 0.90, 0.95, and 0.99 quantiles (dashed blue lines) are all distinct, and the original test statistic (solid black line) is less than each of these values. We also consider the behavior in the $d=5$ case (right panels). Again, these three quantiles are all distinct. In this case, though, the original test statistic is in the far right tail of the distribution of shuffled data test statistics.

\begin{figure}[H]
\centering
\includegraphics[scale=.8]{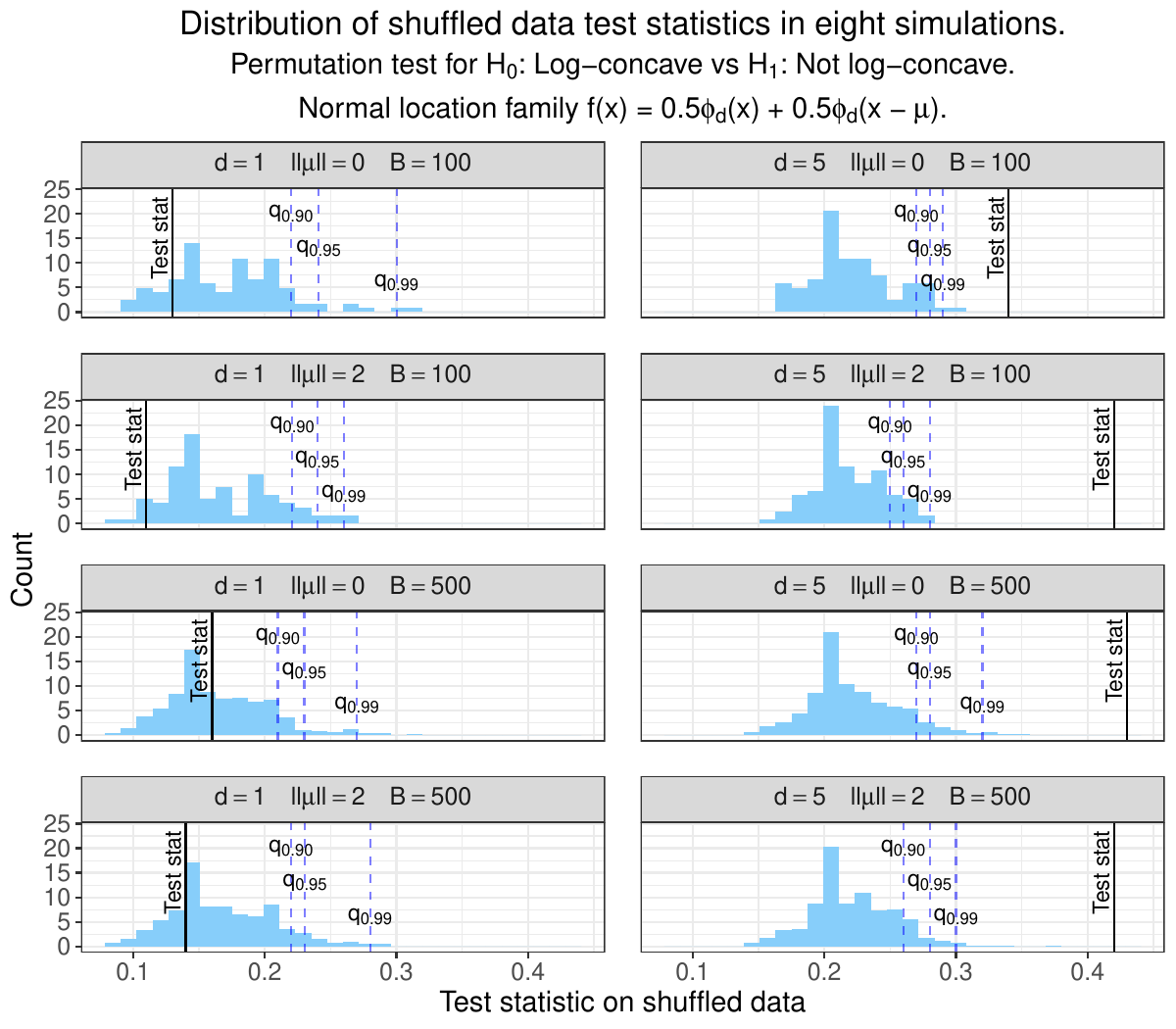}
\caption{Distribution of $T_{n,b}^*$ across eight simulations. The dashed blue lines correspond to the quantiles of the distribution of shuffled data test statistics. The solid black lines correspond to the original test statistics in each simulation. We note that the conservative nature of the permutation test at $d=1$ and the anticonservative nature of the permutation test at $d=5$ are not due to the discreteness of the test statistics.}
\label{fig:perm_test_quantiles}
\end{figure}

\subsection{Relationship between power, $\|\mu\|$, and $d$ in Full Oracle Test} \label{app:power_mu_d}

Unlike the permutation test, the full oracle universal test controls the type I error both theoretically and in simulations. In Section~\ref{sec:fulloracle}, we note that $\|\mu\|$ needs to grow exponentially with $d$ to maintain a certain level of power in the full oracle test. Figure~\ref{fig:power_vary_d} demonstrates this relationship, by exploring how $\|\mu\|$ needs to grow with $d$ to maintain power of approximately 0.90. For each value of $d$, we vary $\|\mu\|$ in increments of 1 and estimate the power through 200 simulations. We choose the value of $\|\mu\|$ with power closest to 0.90. If none of the $\|\mu\|$ values have power in the range of $[0.88, 0.92]$ at a given $d$, then we use finer-grained values of $\|\mu\|$. From the best fit curve, it appears that $\|\mu\|$ needs to grow at an exponential rate in $d$ to maintain the same power. Thus, while the full oracle approach offers an improvement in validity over the permutation test, the power becomes substantially worse in higher dimensions.

\begin{figure}[H]
\centering
\includegraphics[scale=.65]{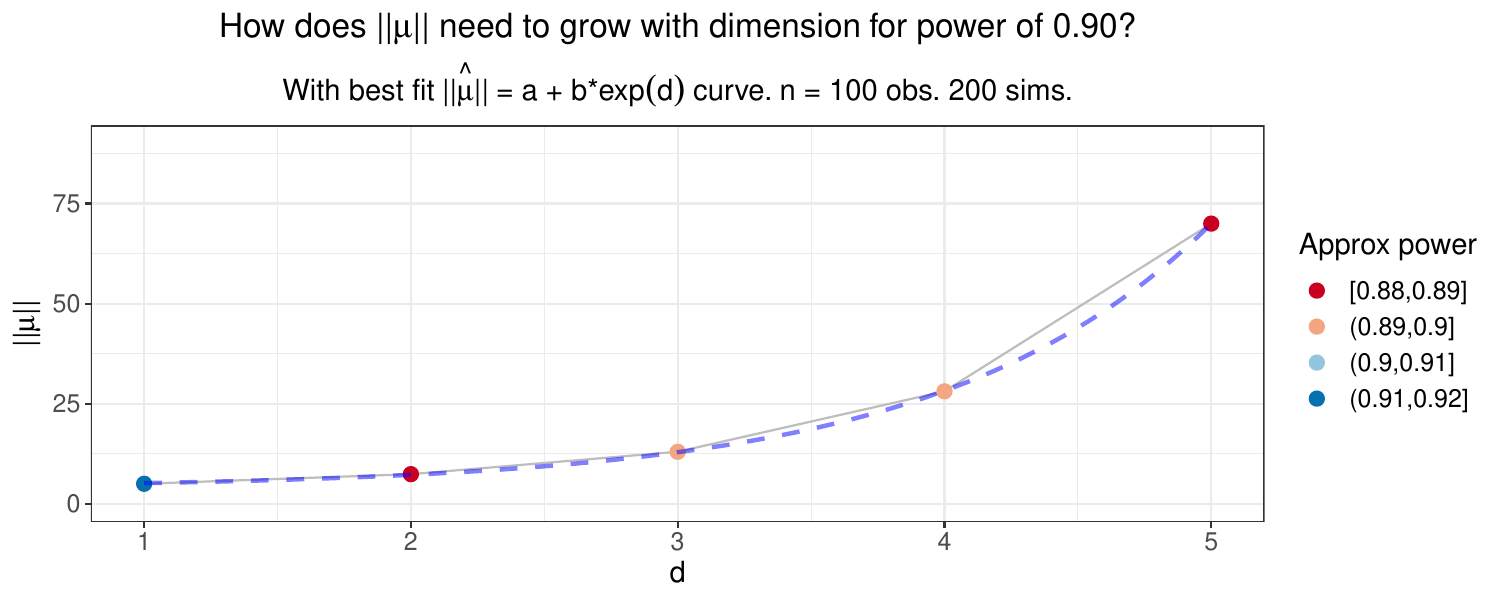}
\caption{Power of the universal test of $H_0: f^*$ is log-concave versus $H_1: f^*$ is not log-concave. Simulations use the true density numerator, $B = 100$ subsamples, and $n = 100$ observations. $\|\mu\|$ needs to grow exponentially in $d$ to maintain power of 0.90. }
\label{fig:power_vary_d}
\end{figure} 

\section{Comparing Full-dimensional and Projection Tests}\label{app:full_vs_proj}

\subsection{Normal Location Mixture Model}

Figures~\ref{fig:unequal_spaced} and \ref{fig:equal_spaced} show that for the normal location mixture in $d\geq 2$, the projection tests for 
log-concavity have higher power than the $d$-dimensional tests. We now offer more insight into this behavior, specifically for the two-dimensional density $f^*(x) = 0.5\phi_2(x) + 0.5\phi_d(x-\mu)$ with $\mu = -(6, 0)$. From the simulations in the upper right panel of Figure~\ref{fig:unequal_spaced}, the partial oracle two-dimensional test has estimated power of 0.025, and the partial oracle random projections test has estimated power of 0.975. 

Figure~\ref{fig:d1_vs_d2_test_stats} shows the distribution of log test statistics over 1000 simulations. Each simulation uses $B = 100$ data splits. The random projection test uses a single projection for illustration, though Figures~\ref{fig:unequal_spaced} and \ref{fig:equal_spaced} use 100 random projections for each data split. From the upper panel of Figure~\ref{fig:d1_vs_d2_test_stats}, only 4.5\% of the log test statistics from the two-dimensional partial oracle test are greater than or equal to $\log(1/0.1) \approx 2.3$. In contrast, 44.3\% of the log test statistics from the random projection partial oracle test are greater than or equal to $\log(1/0.1)$. This means that many of the individual projections provide stronger evidence against log-concavity than the two-dimensional density estimate. 

Furthermore, 19\% of the random projection test statistics are greater than 1000 (log test statistics greater than $\log(1000) \approx 6.9$). If the test uses 100 projections and even one projection exceeds 1000, then the average test statistic will automatically exceed 10, and the test will reject at $\alpha = 0.1$. A single projection has a test statistic above 1000 most often when the first component of the random projection vector $(\theta_1, \theta_2)$ is large. In particular, the test statistic exceeds 1000 in 65\% of simulations where $|\theta_1| > 0.9$ (where $\theta_1^2 + \theta_2^2 = 1$). When $(\theta_1, \theta_2)$ is drawn uniformly from the boundary of the unit circle (as in these simulations), we can work with the geometric properties of circles to show that $P(|\theta_1| > t) = (2/\pi) \text{arccos}(t)$, $t\in [0,1]$. From this formula, $P(|\theta_1| > 0.9) \approx 0.29$. Hence, repeated projections will likely identify some of these directions with strong evidence against log-concavity. Even one projection onto a direction with strong evidence against log-concavity may provide enough evidence to reject the null hypothesis. 
 
\begin{figure}[H]
\centering
\includegraphics[scale=.6]{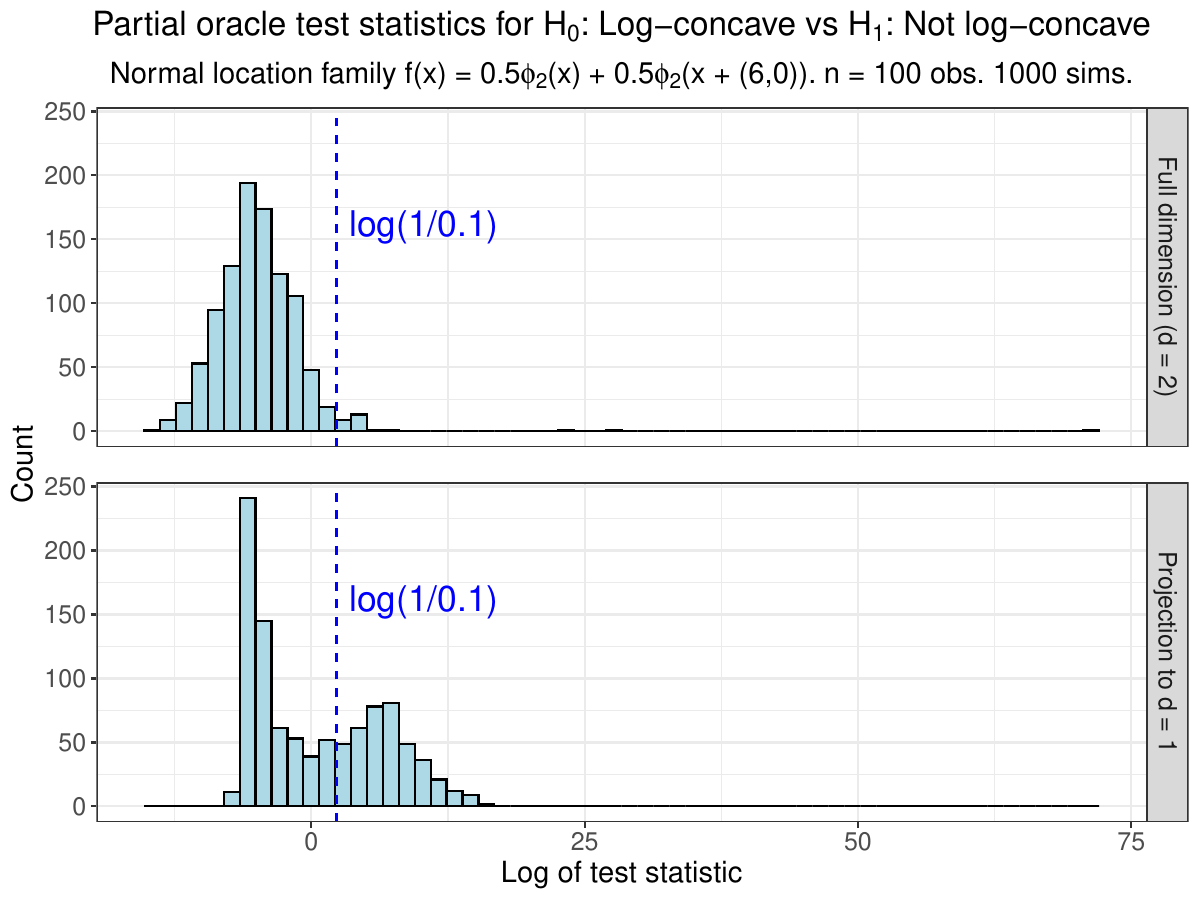}
\caption{Log test statistics for full dimensional and random projection partial oracle tests for log-concavity. To compute each test statistic, we generate 100 observations from the two-dimensional density $0.5\phi_2(x) + 0.5\phi_2(x + (6,0))$, and we compute the test statistic over $B = 100$ random splits of the data. The projection test statistics each use a single projection.}
\label{fig:d1_vs_d2_test_stats}
\end{figure} 

\subsection{Discrete Distribution}

We can also consider the power of the two-dimensional test versus a one-dimensional projection test for an example that is more analytically tractable. The setup for this example is modified from \cite{samworth2018recent}. Suppose the true distribution is a discrete uniform distribution on the three points $\{(-1, 0), (0, \sqrt{3}), (1, 0) \}$. That is, 
\begin{displaymath}
f^*(x, y) = \left\{
\begin{array}{ll}
1/3 & : (x,y) \in \{(-1, 0), (0, \sqrt{3}), (1, 0) \} \\
0 & : \text{ else.}
\end{array}
\right.
\end{displaymath}
Since this distribution is discrete with multiple atoms, it is not log-concave. The log-concave MLE in two dimensions is a uniform distribution over the equilateral triangle formed by the three points. Since the area of the triangle is $\sqrt{3}$, the true log-concave MLE density is $f_{d=2}^\text{LC}(x,y) = 1/\sqrt{3}$ for $(x,y)$ within the triangle (inclusive of the boundary). If we project onto the $x$-axis, then the true log-concave MLE density of the projection is the uniform density $f_{d=1}^\text{LC}(x,y) = 1/2$ over $x\in [-1, 1].$

The relative power of a two-dimensional test versus a one-dimensional projection test will depend on the choice of numerator. Suppose the two-dimensional test uses an alternative density estimate $\hat{f}_{1, d=2}$ that places uniform blocks of width 0.1 and height 0.1 around the three points. For the density to integrate to 1, $\hat{f}_{1, d=2}$ has the form
\begin{displaymath}
\hat{f}_{1, d=2}(x,y) = \left\{
\begin{array}{ll}
1/0.03 & : (x,y) \in [-1.05,-0.95]\times [-0.05,0.05] \: \cup \\
& \hspace*{5em} [-0.05, 0.05] \times [\sqrt{3} - 0.05, \sqrt{3}+0.05] \: \cup \\
& \hspace*{5em} [0.95, 1.05] \times [-0.05, 0.05]  \\
0 & : \text{ else.}
\end{array}
\right.
\end{displaymath}
Using an analogous setup in the projection test, we would use a density estimate of the form 
\begin{displaymath}
\hat{f}_{1, d=1}(x,y) = \left\{
\begin{array}{ll}
1/0.3 & : x \in [-1.05, -0.95] \: \cup \: [-0.05, 0.05] \: \cup \: [0.95, 1.05]  \\
0 & : \text{ else.}
\end{array}
\right.
\end{displaymath}
For large $n$, the estimated log-concave MLE will be approximately equal to the true log-concave MLE. Hence, for a single data split, the two-dimensional test statistic will be $$T_{n, d=2} \approx \frac{(1/0.03)^{n/2}}{(1/\sqrt{3})^{n/2}} \approx (57.74)^{n/2}$$  and the one-dimensional projection test statistics will be 
$$T_{n, d=1} \approx \frac{(1/0.3)^{n/2}}{(1/2)^{n/2}} \approx (6.67)^{n/2}.$$ Both tests will have high power, but the full-dimension test will have larger test statistics.

As another option, suppose the two-dimensional test uses an alternative density estimate $\hat{f}_{1, d=2}$ that places uniform blocks of width 0.1 and height 1 around the three points. Then $\hat{f}_{1, d=2}$ has the form
\begin{displaymath}
\hat{f}_{1, d=2}(x,y) = \left\{
\begin{array}{ll}
1/0.3 & : (x,y) \in [-1.05,-0.95]\times [-0.5,0.5] \: \cup \\
& \hspace*{5em} [-0.05, 0.05] \times [\sqrt{3} - 0.5, \sqrt{3}+0.5] \: \cup \\
& \hspace*{5em} [0.95, 1.5] \times [-0.05, 0.5]  \\
0 & : \text{ else.}
\end{array}
\right.
\end{displaymath}
The corresponding alternative density estimate for the $x$-axis projection test will be the same density as previous. Then for a single data split, the two-dimensional test statistic will be $$T_{n, d=2} \approx \frac{(1/0.3)^{n/2}}{(1/\sqrt{3})^{n/2}} \approx (5.77)^{n/2}$$  and the one-dimensional projection test statistics will be 
$$T_{n, d=1} \approx \frac{(1/0.3)^{n/2}}{(1/2)^{n/2}} \approx (6.67)^{n/2}.$$ In this case, the projection test will have larger test statistics.

Through this simple example, we have seen the interconnection between the true log-concave MLEs and the choice of numerator densities in determining which test has higher power. In practice, it may be sensible to conduct simulations to understand the power of these tests in a setup similar to the given data. Alternatively, as mentioned in Section~\ref{sec:counterpoint}, if one is not sure whether the projected or full-dimensional test will have higher power, one can simply run both, average the resulting test statistics, and threshold the average at $1/\alpha$. Since the average of e-values is an e-value, such a test is still valid, and the test is consistent if either of the original tests is consistent.

\section{Example: Testing Log-concavity of Beta Density} \label{app:beta}

In the one-dimensional normal mixture case, we saw that the full oracle universal test sometimes had higher power than the permutation test. We consider whether this holds in another one-dimensional setting.

The Beta$(\alpha, \beta)$ density has the form $$f(x; \alpha, \beta) = \frac{\Gamma(\alpha + \beta)}{\Gamma(\alpha) \Gamma(\beta)} x^{\alpha - 1} (1-x)^{\beta - 1}, \quad x \in (0, 1),$$ where $\alpha > 0$ and $\beta > 0$ are shape parameters. 

As noted in \cite{cule2010}, Beta$(\alpha, \beta)$ is log-concave if $\alpha \geq 1$ and $\beta \geq 1$. We can see this in a quick derivation:
\begin{align*}
\frac{\partial^2}{\partial x^2} \log f(x; \alpha, \beta) &= \frac{\partial^2}{\partial x^2} \left[\log\left(\frac{\Gamma(\alpha + \beta)}{\Gamma(\alpha) \Gamma(\beta)} \right) + (\alpha - 1)\log(x) + (\beta - 1) \log(1 - x) \right] \\
&= \frac{\partial}{\partial x} \left[\frac{\alpha - 1}{x} + \frac{1-\beta}{1-x} \right] \\
&= \frac{1-\alpha}{x^2} + \frac{1-\beta}{(1-x)^2}.
\end{align*}
This second derivative is less than or equal to 0 for all $x \in (0,1)$ only if both $\alpha \geq 1$ and $\beta \geq 1$. The Beta($\alpha, \beta$) distribution is hence log-concave when $\alpha \geq 1$ and $\beta \geq 1$. This means that tests of $H_0: f^*$ is log-concave versus $H_1: f^*$ is not log-concave should reject $H_0$ if $\alpha < 1$ or $\beta < 1$.

\subsection{Understanding Limiting Log-concave MLEs}

In general, it is non-trivial to solve for the limiting log-concave function $f^\text{LC} = \argmin{f\in\mathcal{F}_d} D_\text{KL}(f^* \| f)$. We try to determine $f^\text{LC}$ in a few specific cases. In Figure~\ref{fig:beta_true_logcondens}, we consider two choices of shape parameters $(\alpha, \beta)$ such that the Beta$(\alpha, \beta)$ densities are not log-concave. On the left panels, we plot the Beta densities. For the right panels, we simulate 100,000 observations from the corresponding Beta$(\alpha, \beta)$ density, we fit the log-concave MLE on the sample using \texttt{logcondens}, and we plot this log-concave MLE density. Thus, the right panels should be good approximations to $f^\text{LC}$ in these two settings.

\begin{figure}[H]
\centering
\includegraphics[scale=.75]{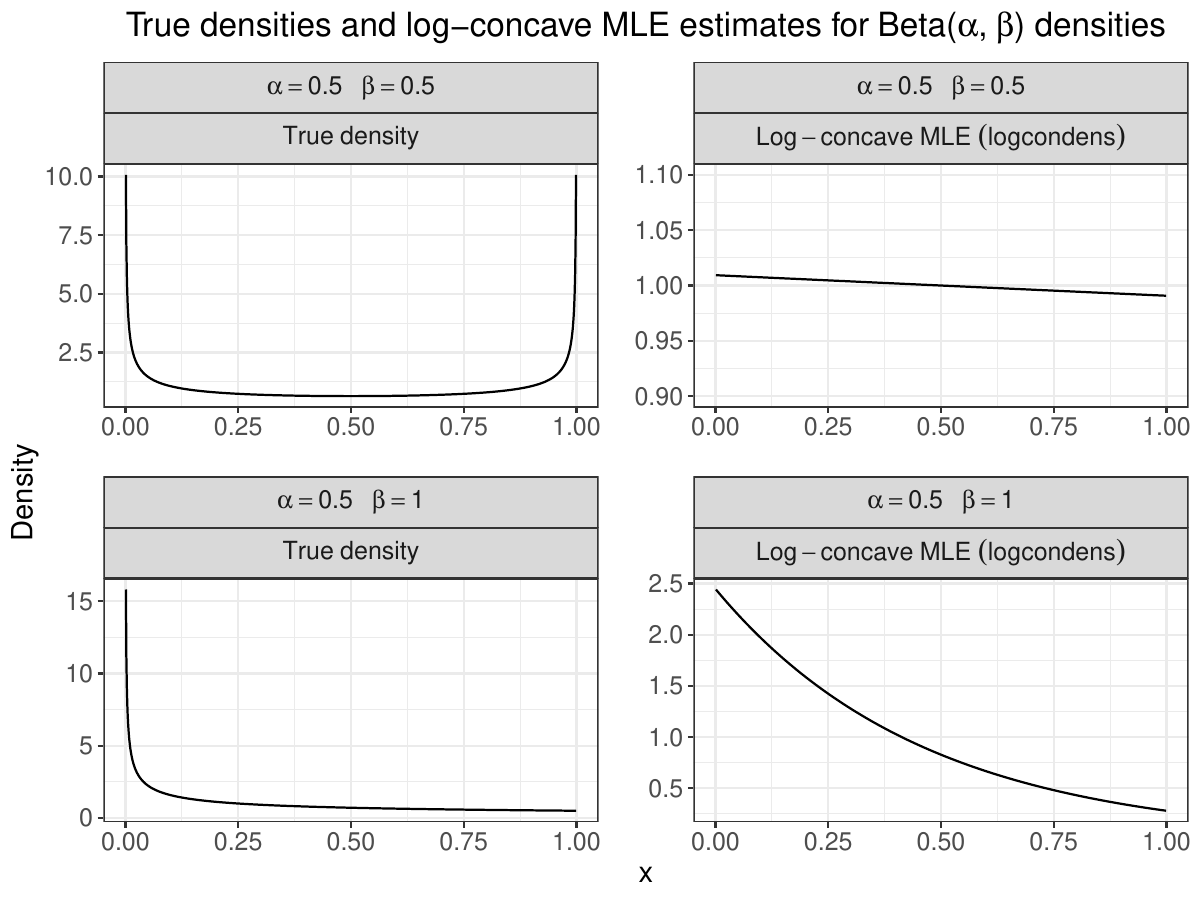}
\caption{Two non-log-concave Beta densities and their corresponding log-concave MLEs, as estimated over $n=100,000$ observations. The limiting log-concave MLE of the Beta(0.5, 0.5) density appears to be Unif(0, 1). The limiting log-concave MLE of the Beta(0.5, 1) density has an exponential appearance, which Figure~\ref{fig:truncated_exp} examines.}
\label{fig:beta_true_logcondens}
\end{figure} 

In the first setting ($\alpha = 0.5, \beta = 0.5$), it appears that the log-concave MLE is the Unif(0, 1) density. We consider the second setting ($\alpha = 0.5, \beta = 1$) in more depth. The density in row 2, column 2 looks similar to an exponential density, but $x$ can only take on values between 0 and 1. The truncated exponential density is given by 
$$f(x; \lambda, b) = \frac{\lambda \exp(-\lambda x)}{1 - \exp(-\lambda b)}, \quad 0 < x \leq b.$$ In this setting, we can try to fit a truncated exponential density with $b = 1$. In Figure~\ref{fig:truncated_exp}, we see that a truncated exponential density with $\lambda = 2.18$ and $b = 1$ provides a good fit for the log-concave MLE.

\begin{figure}[H]
\centering
\includegraphics[scale=.7]{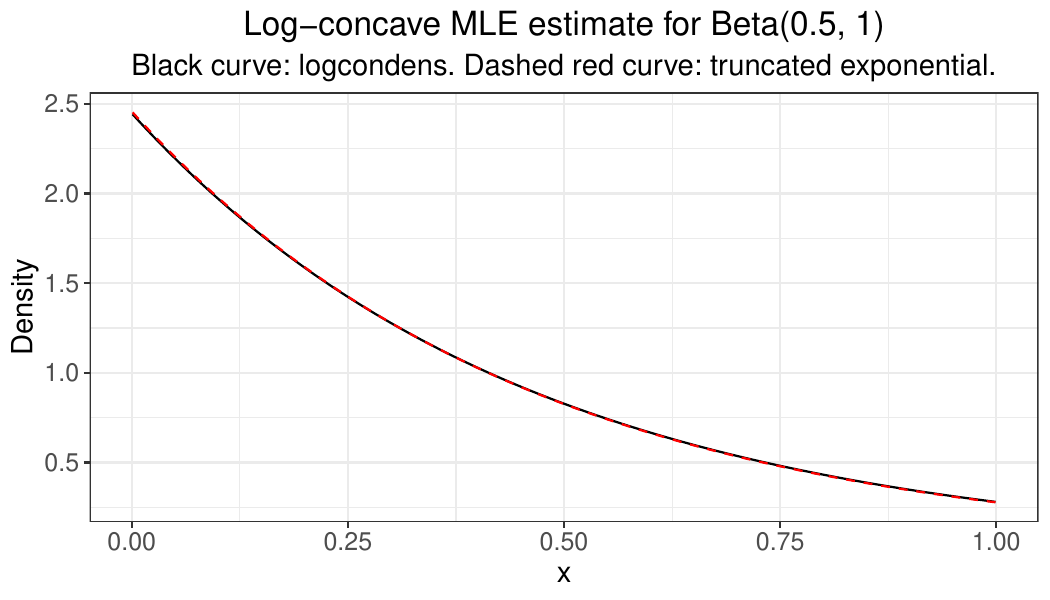}
\caption{The log-concave MLE (black solid) estimated over $n=100,000$ observations from the Beta(0.5, 1) density nearly perfectly matches the truncated exponential density with $\lambda = 2.18$ and $b = 1$ (red dashed). This suggests that this truncated exponential density may be the limiting log-concave MLE for this Beta density.}
\label{fig:truncated_exp}
\end{figure} 

We can also see that the truncated exponential density is log-concave:
\begin{align*}
\frac{\partial^2}{\partial x^2} \log f(x; \lambda, b) &= \frac{\partial^2}{\partial x^2} \left[\log(\lambda) -\lambda x - \log(1 - \exp(-\lambda b)) \right] \\
&= \frac{\partial}{\partial x} [-\lambda] \\
&= 0.
\end{align*}

\subsection{Universal Tests can have Higher Power than Permutation Tests}

Figure~\ref{fig:beta_densities} shows examples of both log-concave and not log-concave Beta densities. We use similar $\alpha$ and $\beta$ parameters in the simulations where we test for log-concavity. This shows that our simulations are capturing a variety of Beta density shapes.

\begin{figure}[H]
\centering
\includegraphics[scale=.65]{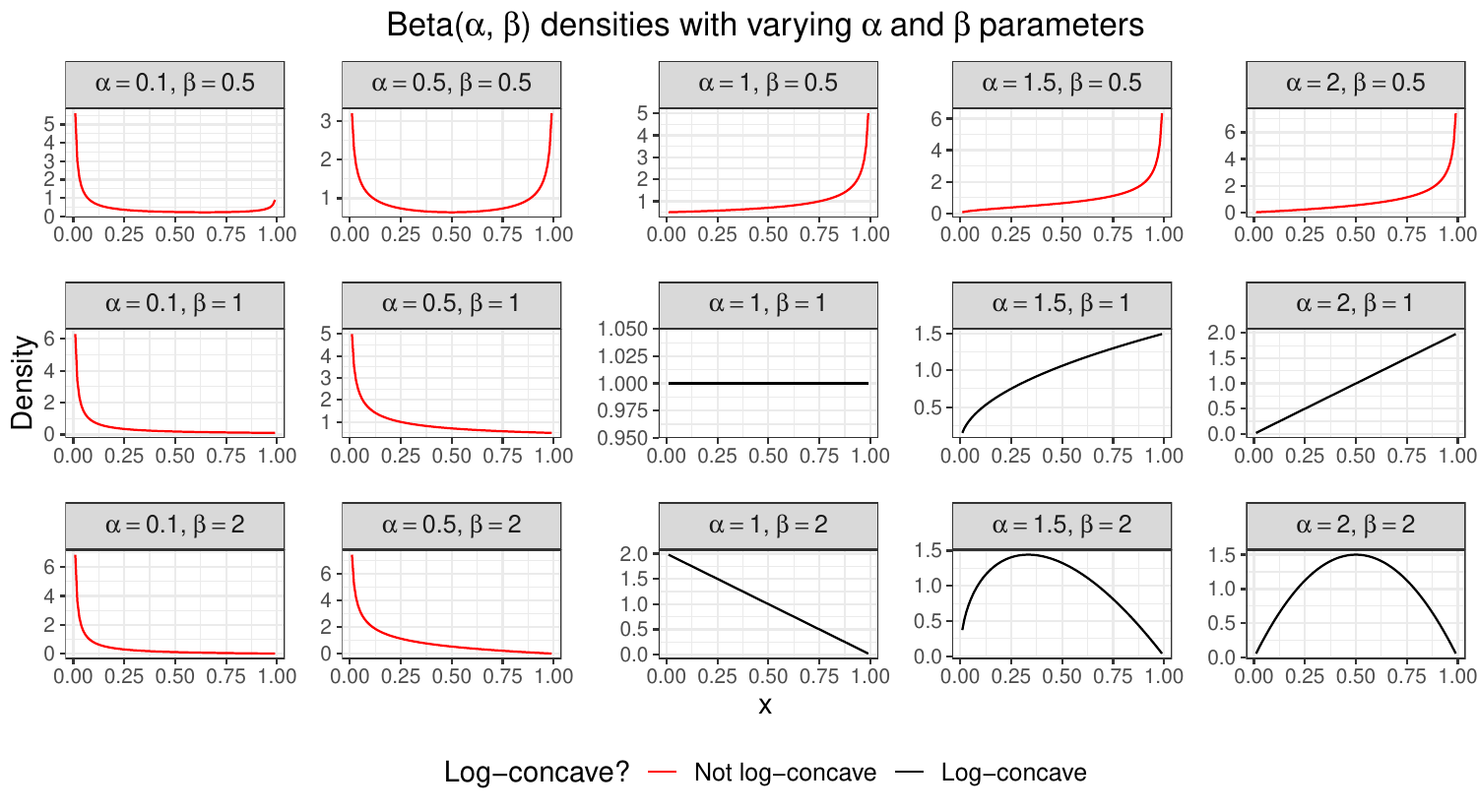}
\caption{Beta densities across a variety of $\alpha$ and $\beta$ parameters. We implement our log-concavity tests on Beta densities across these ranges of parameters, encompassing a variety of density shapes.}
\label{fig:beta_densities}
\end{figure} 

We now implement the full oracle LRT (universal), partial oracle LRT (universal), fully nonparametric LRT (universal), and permutation test. The full oracle LRT uses the true density in the numerator. The partial oracle LRT uses the knowledge that the true density comes from the Beta family. We use the \texttt{fitdist} function in the \texttt{fitdistrplus} library to find the MLE for $\alpha$ and $\beta$ on $\{Y_i : i\in\mathcal{D}_1\}$ computationally \citep{fitdistrplus}. Then the numerator of the partial oracle LRT uses this Beta MLE density. The fully nonparametric approach fits a kernel density estimate on $\{Y_i : i\in\mathcal{D}_1\}$. In particular, we use the \texttt{kde1d} function from the \texttt{kde1d} library, and we restrict the support of the KDE to $[0, 1]$ \citep{kde1d}. This restriction is particularly important in the Beta family case, since some of the non-log-concave Beta densities assign high probability to observations near 0 or 1. (See Figure~\ref{fig:beta_densities}.) The numerator of the fully nonparametric approach uses the KDE. 

Figure~\ref{fig:reject_beta_four} compares the four tests of $H_0: f^*$ is log-concave versus $H_1: f^*$ is not log-concave. We set $n=100$, and we perform 200 simulations to determine each rejection proportion. The universal methods subsample at $B = 100$, and the permutation test uses $B = 99$ shuffles. In the first panel, $\beta = 0.5$, so the density is not log-concave for any choice of $\alpha$. In the second and third panels, $\beta = 1$ and $\beta = 2$. In these cases, the density is log-concave only when $\alpha \geq 1$ as well.  

We observe that the permutation test is valid in all settings, but the three universal tests often have higher power. As expected, out of the universal tests, the full oracle approach has the highest power, followed by the partial oracle approach and then the fully nonparametric approach. When $\beta = 0.5$, all of the universal LRTs have power greater than or equal to the permutation test. When $\beta \in \{1, 2\}$, the universal approaches have higher power for some values of $\alpha$. Again, we see that even when the permutation test is valid, it is possible for universal LRTs to have higher power.

\begin{figure}[H]
\centering
\includegraphics[scale=.6]{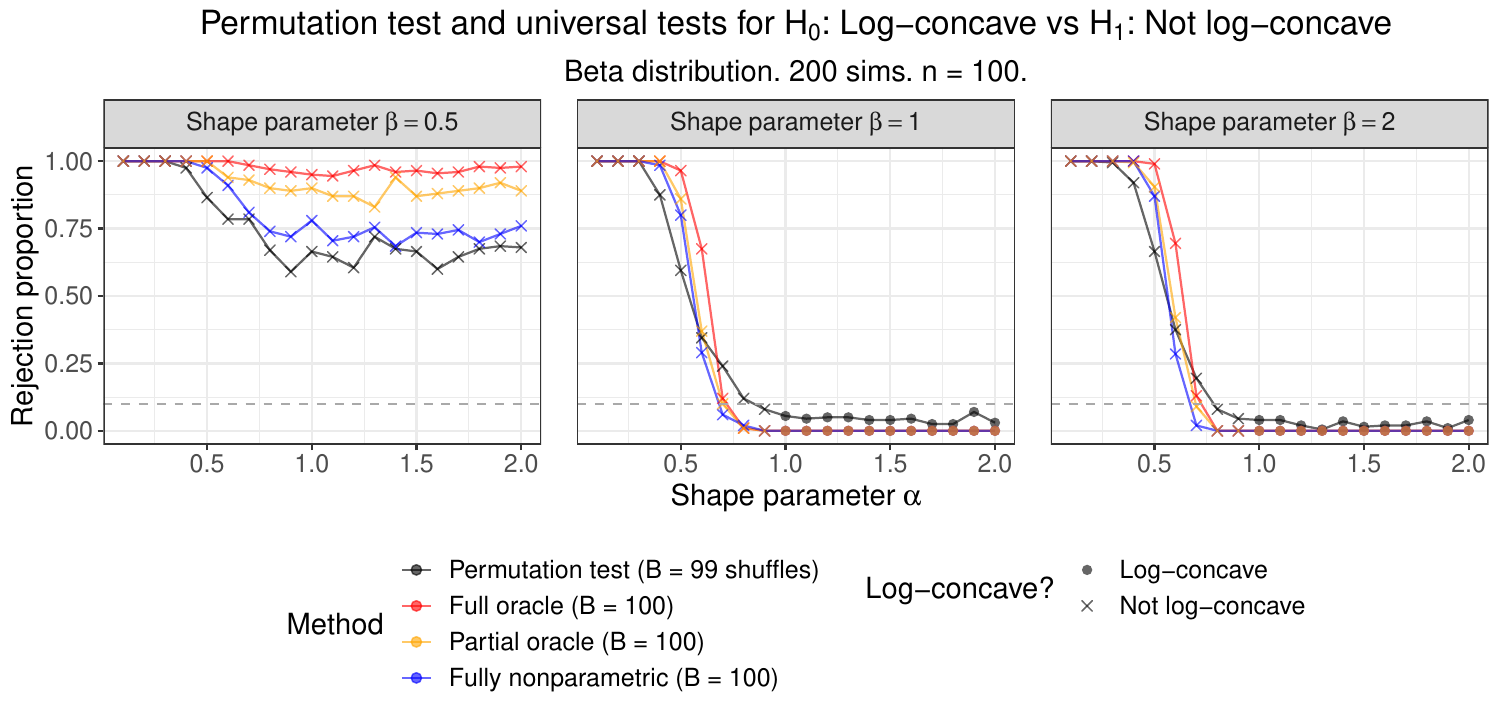}
\caption{Rejection proportions for four tests of $H_0: f^*$ is log-concave versus $H_1: f^*$ is not log-concave, on Beta($\alpha, \beta$) density. The permutation test is valid in all simulations, but the universal tests have both higher power and guaranteed validity.}
\label{fig:reject_beta_four}
\end{figure} 

\section{Permutation Test and Trace Test for Log-concavity} \label{app:perm_trace}

\subsection{Permutation Test} \label{app:perm_test_desc}

\cite{cule2010} construct a permutation test of the hypothesis $H_0: f^*\in\mathcal{F}_d$ versus $H_1: f^*\notin\mathcal{F}_d$. We now discuss this test in more detail.

Algorithm~\ref{alg:permtest} explains the permutation test.

\begin{algorithm}[ht]
\caption{Permutation test for $H_0: f^* \in\mathcal{F}_d$ versus $H_1: f^* \notin \mathcal{F}_d$}
\hspace*{\algorithmicindent} \textbf{Input:} $n$ iid $d$-dimensional observations $Y_1, \ldots, Y_n$ from unknown density $f^*$, \\
\hspace*{\algorithmicindent} \hskip 12pt number of shuffles $B$, significance level $\alpha$. \\
\hspace*{\algorithmicindent} \textbf{Output:} Decision of whether to reject $H_0: f^*\in\mathcal{F}_d$.
\begin{algorithmic}[1]
\State Fit the log-concave MLE $\hat{f}_n$ on $\mathcal{Y} = \{Y_1, \ldots, Y_n\}$. \label{algstep:mlelcd}
\State Draw another sample $\mathcal{Y}^* = \{Y_1^*, \ldots, Y_n^*\}$ from the log-concave MLE $\hat{f}_n$. \label{algstep:rlcd}
\State Compute the test statistic $T = \sup_{A \in \mathcal{A}_0} |P_n(A) - P^*_n(A)|$, where $\mathcal{A}_0$ is the set of all balls centered at a point in $\mathcal{Y} \cup \mathcal{Y}^*$, $P_n(A)$ is the proportion of observations in ball $A$ out of all observations in $\mathcal{Y}$, and $P_n^*(A)$ is the proportion of observations in ball $A$ out of all observations in $\mathcal{Y}^*$ . 
\For {$b=1,2,\ldots,B$}
\State ``Shuffle the stars'' to randomly place $n$ observations from $\mathcal{Y} \cup \mathcal{Y}^*$ into $\mathcal{Y}_b$.
\State Place the remaining $n$ observations in $\mathcal{Y}_b^*$.
\State Using these new samples, compute $T_b^* = \sup_{A\in\mathcal{A}_0} |P_{n,b}(A) - P_{n,b}^*(A)|$. $P_{n,b}(A)$ and   \par
$P_{n,b}^*(A)$ are defined similarly to $P_n(A)$ and $P_n^*(A)$, using $\mathcal{Y}_b$ and $\mathcal{Y}_b^*$. 
\EndFor
\State Arrange the test statistics $(T^*_1, T^*_2, \ldots, T^*_B)$ into the order statistics $(T^*_{(1)}, T^*_{(2)}, \ldots, T^*_{(B)})$.
\State \textbf{return} Reject $H_0$ if $T > T^*_{(\lceil(B+1)(1-\alpha)\rceil)}$.
\end{algorithmic}
\label{alg:permtest}
\end{algorithm}

Intuitively, this test assumes that if $H_0$ is true, the samples $\mathcal{Y}$ and $\mathcal{Y}^*$ will be similar, so $T$ will not be particularly large relative to $T_1^*, \ldots, T_B^*$. Alternatively, if $H_0$ is false, $\mathcal{Y}$ and $\mathcal{Y}^*$ will be dissimilar, and the converse will hold. This approach is not guaranteed to control the type I error level. We observe cases both where the permutation test performs well and where the permutation test's false positive rate is much higher than $\alpha$. 

We provide several computational notes on Algorithm~\ref{alg:permtest}. Steps~\ref{algstep:mlelcd} and \ref{algstep:rlcd} use functions from the \texttt{LogConcDEAD} library. To perform step~\ref{algstep:mlelcd}, we can use the \texttt{mlelcd} function, which estimates the log-concave MLE density from a sample. To perform step~\ref{algstep:rlcd}, we can use the \texttt{rlcd} function, which samples from a fitted log-concave density. Where $\mathcal{A}_0$ is the set of all balls centered at a point in $\mathcal{Y} \cup \mathcal{Y}^*$, $|P_n(A) - P_n^*(A)|$ only takes on finitely many values over $A\in\mathcal{A}_0$. To see this, consider fixing a point at some value $y \in \mathcal{Y} \cup \mathcal{Y}^*$, letting $A_r(y)$ be the sphere of radius $r$ centered at $y$, and increasing $r$ from 0 to infinity. As $r\to\infty$, $|P_n(A_r(y)) - P_n^*(A_r(y))|$ only changes when $A_r(y)$ expands to include an additional observation in $\mathcal{Y} \cup \mathcal{Y}^*$. Hence, it is possible to compute $\sup_{A \in \mathcal{A}_0} |P_n(A) - P^*_n(A)|$ by considering all sets $A$ centered at some $y\in\mathcal{Y} \cup \mathcal{Y}^*$ and with radii equal to the distances between the center of $A$ and all other observations. For large $n$, it may be necessary to approximate the test statistics $T, T_1^*, T_2^*, \ldots, T_B^*$ by varying the radius of $A$ across a smaller set of fixed increments. In each of our simulations, we compute the test statistics exactly.

\subsection{Trace Test} \label{app:trace}

To test $H_0: f^*$ is log-concave versus $H_1: f^*$ is not log-concave, we now briefly consider the trace test from Section~3 of \cite{chen2013smoothed}. The trace test is similar to the permutation test, but its test statistic is the trace of the difference in covariance matrices between the observed data and the fitted log-concave MLE density estimator. The \texttt{hatA} function in the \texttt{LogConcDEAD} library computes this statistic. In $B$ bootstrap repetitions, the test draws a new sample from the observed data's log-concave MLE, fits the log-concave MLE of the new data, and computes the trace statistic. The test compares the original statistic to the bootstrapped statistics. At $d = 4$, $\|\mu\| = 2$, $n=100$, $B = 99$, and $\alpha = 0.1$, the trace test falsely rejected $H_0$ at level $\alpha = 0.1$ in 20 out of 20 simulations. Similar to the results reported by \cite{chen2013smoothed}, at $d=2$ and the same $\|\mu\|$, $n$, $B$, and $\alpha$ as above, the trace test falsely rejects $H_0$ in 19 out of 200 simulations. Hence, similar to the permutation test, simulations suggest that the trace test is valid for $d=2$, but it does not control type I error for $d=4$. This test is also more computationally intensive than the permutation test. The 20 simulations at $d=4$ took about 8 hours to run over 4 cores.

\end{document}